\spnewtheorem{definition}{Definition}{\bfseries}{\rmfamily}
\renewcommand\bibsection%
\DeclarePairedDelimiter{\floor}{\lfloor}{\rfloor}
    \pretocmd{\NAT@citexnum}{\@ifnum{\NAT@ctype>\z@}{\let\NAT@hyper@\relax}{}}{}{}
\newcommand{\CCite}[1]{\citeauthor{#1}~\cite{#1}}
\Crefname{ALC@unique}{Line}{Lines} 
\crefname{ineq}{inequality}{inequalities}
\crefname{claim}{claim}{claims}           
\crefname{definition}{definition}{definitions}  
\crefname{obs}{Observation}{Observations}
\crefname{lemma}{Lemma}{Lemmas}
\crefname{cor}{Corollary}{Corollaries}
\theoremstyle{plain}
\newtheorem{obs}[definition]{Observation}
\newtheorem{cor}[theorem]{Corollary}
\newcommand{\Real}{\ensuremath{\mathbb{R}}\xspace}%
\newcommand{\Int}{\ensuremath{\mathbb{Z}}\xspace}%
\newcommand{\Intp}{\ensuremath{\mathbb{Z}_{+} \cup \{0\}}\xspace}%
\newcommand{\NP}{\ensuremath{\mathbb{NP}}\xspace}%
\newcommand{\APX}{\ensuremath{\mathbb{APX}}\xspace}%
\newcommand{\PNP}{\ensuremath{\mathbb{P} = \mathbb{NP}}\xspace}%
\newcommand{\lay}{\ensuremath{\lambda}\xspace}%
\newcommand{\layp}{\ensuremath{\lambda'}\xspace}%
\newcommand{\lays}{\ensuremath{\lambda^*}\xspace}%
\newcommand{\il}{\ensuremath{\lambda^{\text{in}}}\xspace}%
\newcommand{\ilx}{\ensuremath{\lambda^{\text{in}}_x}\xspace}%
\newcommand{\ily}{\ensuremath{\lambda^{\text{in}}_y}\xspace}%
\newcommand{\lxr}{\ensuremath{\lambda_x(r)}\xspace}%
\newcommand{\lyr}{\ensuremath{\lambda_y(r)}\xspace}%
\newcommand{\lxrp}{\ensuremath{\lambda_x(r')}\xspace}%
\newcommand{\lyrp}{\ensuremath{\lambda_y(r')}\xspace}%
\newcommand{\pxr}{\ensuremath{\lambda'_x(r)}\xspace}%
\newcommand{\pyr}{\ensuremath{\lambda'_y(r)}\xspace}%
\newcommand{\pxrp}{\ensuremath{\lambda'_x(r')}\xspace}%
\newcommand{\pyrp}{\ensuremath{\lambda'_y(r')}\xspace}%
\newcommand{\lin}{\ensuremath{\tau}\xspace}%
\newcommand{\Lin}{\ensuremath{L}\xspace}%
\newcommand{\seg}{\ensuremath{\phi}\xspace}%
\newcommand{\Seg}{\ensuremath{\Phi}\xspace}%
\newcommand{\con}{\ensuremath{\xi}}%
\newcommand{\apconst}{\ensuremath{\frac{1}{55}}\xspace}%
\begin{document}
\title{On the Approximability of Orthogonal Order Preserving Layout Adjustment}
\titlerunning{An $O(1)$ Approximation for Orthogonal Order Preserving Layout
Adjustment} 

\author{Sayan Bandyapadhyay
        \and Santanu Bhowmick
        \and Kasturi Varadarajan}
\institute{
  Department of Computer Science\\
  University of Iowa, Iowa City, USA
  }
  
\authorrunning{S.\,Bandyapadhyay, S.\,Bhowmick and K.\,Varadarajan} 

\maketitle

\vspace{-0.4cm}
\begin{abstract}
Given an initial placement of a set of rectangles in the plane, we consider the
problem of finding a disjoint placement of the rectangles that minimizes the
area of the bounding box and preserves the orthogonal order i.e.\ maintains the
sorted ordering of the rectangle centers along both $x$-axis and $y$-axis with respect
to the initial placement. This problem is known as \textit{Layout Adjustment for
Disjoint Rectangles} (LADR). It was known that LADR is \NP-hard, but only
heuristics were known for it. We show that a certain decision version of LADR is
\APX-hard, and give a constant factor approximation for LADR.
\end{abstract} 

\section{Introduction}
\label{sec:intro}
Graphs are often used to visualize relationships between entities in diverse
fields such as software engineering (e.g.\ UML diagrams), VLSI (circuit
schematics) and biology (e.g.\ biochemical pathways)~\cite{herman00}. For many
such applications, treating graph nodes as points is insufficient, since each
node may have a corresponding label explaining its significance. The presence of
labels may lead to node overlapping. For the typical user, an uncluttered layout
is more important than
the amount of information presented~\cite{storey96}. For complex graphs, it is
tedious to create meaningful layouts by hand, which has led to algorithms for
layout generation.

Layout generation algorithms typically take a combinatorial description of a
graph, and return a corresponding layout. Nodes are usually represented by
boxes, and edges by lines connecting the boxes. For simplicity, the
edges of the graph are ignored while creating the modified layout. In some interactive
systems, modifications to the graph may happen in multiple stages. The layout
must be adjusted after each alteration (if new nodes added overlap
existing nodes), such that the display area is minimized.  If we use layout
creation algorithms after each iteration, we may get a layout that is completely
different from the previous layout, which may destroy the `mental map' of the
user who is interacting with the system. Thus, we need an additional constraint
in the form of maintaining some property of the layout, which would be
equivalent to preserving the mental map.~\CCite{eades1991preserving} defined
\textit{orthogonal ordering} as one of the key properties that should be
maintained in an adjusted layout to preserve the user's mental map. Two layouts
of a graph have the same orthogonal ordering if the horizontal and vertical
ordering of the nodes are identical in both layouts.

We now state the problem studied in this paper, which involves laying out
rectangles that represent the nodes in the graph being adjusted.
We are given a set of rectangles $R$ (each $r_i \in R$ is defined by an ordered
pair, $r_i = (w_i, h_i)$, denoting its width and height respectively) and an
initial layout \il. A layout consists of an assignment $\lay: R \rightarrow
\Real^2$ of coordinates to the centers of rectangles in $R$. The goal is
to find a layout in which no two rectangles intersect and orthogonal ordering of
the rectangle centers  w.r.t \il is maintained, while minimizing the area of the
bounding box of the layout. We refer to this problem as \textit{Layout
Adjustment for Disjoint Rectangles} (LADR). Note that $R$ is really a set of
rectangle dimensions, and not a set of rectangles. Nevertheless, we will refer
to $R$ as a set of rectangles. See \Cref{sec:prelim} for a more leisurely
problem statement.

\vspace{-0.2cm}
\subsection{Previous Work}
\vspace{-0.1cm}
The concept of a mental map was introduced in~\cite{eades1991preserving}, along
with three quantitative models representing it - orthogonal ordering, proximity
relations and topology. A framework for analyzing the various models of a mental
map was presented in~\cite{Bridgeman98}, which determined that orthogonal
ordering constraint was the best metric for comparing different drawings of the
same graph. A user study designed to evaluate human perceptions of similarity
amongst two sets of drawings was given in~\cite{Bridgeman02}, in which
orthogonal ordering constraints received the highest rankings.

There has been a lot of work done using the concept of preserving mental maps.
LADR was first introduced in~\cite{misue95},
in which the authors described the \textit{Force-Scan (FS)} algorithm. FS scans for
overlapping nodes in both horizontal and vertical directions, and separates two
intersecting nodes by ``forcing'' them apart along the line connecting the centers of
the two nodes, while ensuring that the nodes being forced apart do not intersect
any additional nodes in the layout. In~\cite{hayashi98}, a modification of FS
was presented (FS'), which resulted in a more compact layout than FS. Another
version of FS algorithm, called the \textit{Force-Transfer (FT)} algorithm, was
given in~\cite{Huang2007}. For any two overlapping nodes, denote the vertical
distance to be moved to remove the overlap as $d_v$, and let the horizontal
distance for removing overlap be $d_h$. FT moves the overlapping node
horizontally if $d_h < d_v$, else vertically, and experimentally, it has been
shown that FT gives a layout of smaller area than FS and FS'.

FS, FS' and FT belong to the family of force based layout algorithms. Spring based
algorithms treat edges as springs obeying Hooke's Law, and the nodes are pushed
apart or pulled in iteratively to balance the forces till an equilibrium is
reached. A spring based algorithm \textit{ODNLS}, which adjusts the
attractive/repulsive force between two nodes dynamically, is proposed 
in~\cite{LiEN05}, which preserves the orthogonal ordering of the input layout
and typically returns a smaller overlap-free layout than the force-based family of
algorithms.

It is worth noting that none of the algorithms mentioned above give a provable
worst-case guarantee on the quality of the output.

The hardness of preserving orthogonal constraints w.r.t various optimization
metrics has also been well-studied.~\CCite{BrandesP13} showed that it is
\NP-hard to determine if there exists an orthogonal-order preserving rectilinear
drawing of a simple path, and extend the result for determination of uniform
edge-length drawings of simple paths with same constraints. LADR was shown to be
\NP-hard by~\CCite{hayashi98}, using a reduction from $3SAT$.

\subsection{Related Work}
Algorithms for label placement and packing that do not account for orthogonal
ordering have been extensively studied. The placement of labels corresponding to points on a map is a natural
problem that arises in geographic information systems (GIS)~\cite{MapLabel}. In
particular, placing labels on maps such that the
label boundary coincides with the point feature has been a well-studied problem.
A common objective in such label-placement problems is to maximize the number of
features labelled, such that the labels are pairwise disjoint. We refer
to~\cite{AgarwalKS98,KreveldSW99} as examples of this line of work.

Packing rectangles without orthogonality constraints has also been
well-studied. One such problem is the strip packing problem, in which we want
to pack a set of rectangles into a strip of given width while minimizing the
height of the packing. It is known that the strip-packing problem is strongly
\NP-hard~\cite{LodiMM02}. It can be easily seen that if the constraint for
orthogonal order preservation is removed, then LADR can be reduced to
multiple instances of strip packing problem. There has been extensive work done
on strip packing~\cite{Steinberg97,Schiermeyer94,HarrenS09}, with the current
best algorithm being a $5/3 + \varepsilon$-approximation by~\CCite{HarrenJPS14}. 

Another related packing problem is the two-dimensional geometric knapsack
problem, defined as follows. The input consists of a set of weighted rectangles
and a rectangular knapsack, and the goal is to find a subset of rectangles of
maximum weight that can be placed in the knapsack such that no two rectangles
have an overlap. The 2D-knapsack problem is known to be strongly \NP-hard even
when the input consists of a set of unweighted squares~\cite{LodiMM02}.
Recently, \CCite{AdamaszekW15} gave a quasi-polynomial time $(1+\varepsilon)$
approximation scheme for this problem, with the assumption that the input
consists of quasi-polynomially bounded integers. 

\vspace{-0.2cm}
\subsection{Our results}
\vspace{-0.1cm}
We point out an intimate connection between LADR and the problem of hitting
segments using a minimum number of horizontal and vertical lines. In particular,
the segments to be hit are the ones connecting each pair of rectangle centers in
the input layout. The connection to the hitting set is described in
\Cref{sec:reduction}. To our knowledge, this connection to hitting sets has not
been observed in the literature.  
We exploit the connection to hitting set to prove hardness results for
LADR in \Cref{sec:APXH} that complement the \NP-completeness result
in~\cite{hayashi98}. We show that it is \APX-hard to find a layout that
minimizes the perimeter of the bounding box. We also show that if there is an
approximate decision procedure that determines whether there is a layout that
fits within a bounding box of specified dimensions, then \PNP. These hardness
results hold even when the input rectangles are unit squares. The results for
LADR follow from a hardness of approximation result that we show for a hitting
set problem. The starting point of the latter is the result of \CCite{HassinM91}
who show that it is \NP-hard to determine if there is a set of $k$ axis-parallel
lines that hit a set of horizontal segments of unit length. The added difficulty
that we need to overcome is that in our case, the set of segments that need to
be hit cannot be arbitrarily constructed. Rather, the set consists of all segments
induced by a set of arbitrarily constructed points.

It is possible to exploit this connection to hitting sets and use known
algorithms for hitting sets (e.g.~\cite{GaurIK02}) to devise an $O(1)$
approximation algorithm for LADR. Instead, we describe (in \Cref{sec:approx}) a
direct polynomial time algorithm for LADR that
achieves a $4(1 + o(1))$ approximation. This is the first polynomial time
algorithm for LADR with a provable approximation guarantee. The algorithm
involves solving a linear-programming relaxation of LADR followed by a simple
rounding. 

\section{Preliminaries}
\label{sec:prelim}
We define a layout \lay of a set of rectangles $R$ as an assignment of
coordinates to the center of each rectangle $r \in R$ i.e.\ $\lay: R \rightarrow
\Real^2$. Our input for LADR consists of a set of rectangles $R$, and an
initial layout \il. We will assume that \il is injective, i.e.\ no two rectangle
centers coincide in the input layout.  A rectangle $r$ is defined by its
horizontal width $w_r$ and vertical height $h_r$, both of which are assumed to
be integral. 
It is given that all rectangles are axis-parallel in \il, and rotation of
rectangles is not allowed in any adjusted layout.  

The coordinates of center of $r$ in layout \lay is denoted by $\lay(r) = (x_r,
y_r)$. For brevity, we denote the $x$-coordinate of $\lay(r)$ by \lxr, and the
corresponding $y$-coordinate by \lyr. The set of points $\{\lay(r): r \in R \}$
is denoted by $\lay(R)$. 

A pair of rectangles $r, r' \in R$ is said to intersect in a layout
\lay if and only if 
\begin{align}
    \label[ineq]{ineq:disjoint}
    |\lxr - \lxrp| < \frac{w_r + w_{r'}}{2} \qquad \text{and} \qquad 
     |\lyr - \lyrp| < \frac{h_r + h_{r'}}{2}.
\end{align}

A layout \lay is termed as a \textit{disjoint} layout if no two rectangles in
$R$ intersect with each other. Let $W_l(\lay)$ and $W_r(\lay)$ denote the
$x$-coordinates of the left and right sides of the smallest axis-parallel
rectangle bounding the rectangles of $R$ placed by \lay, respectively. We then
define the width of the layout, $W(\lay) = W_r(\lay) - W_l(\lay)$. Similarly,
let $H_t(\lay)$ and $H_b(\lay)$ define the $y$-coordinates of the top and bottom
of the bounding rectangle, and the height of the layout is defined as $H(\lay) =
H_t(\lay) - H_b(\lay)$. The area of \lay is thus defined as $A(\lay) =
H(\lay) \times W(\lay)$. The perimeter of \lay is $2 (H(\lay) + W(\lay))$. 

Let \lay and \layp be two layouts of $R$.  Then, \lay and
\layp are defined to have the same \textit{orthogonal ordering} if for any
two rectangles $r, r' \in R$,

\vspace{-0.5cm}
\begin{align}
  \label{eq:xineq}
  \qquad \lxr < \lxrp \  \iff \pxr < \pxrp \\
  \label{eq:yineq}
  \qquad \lyr < \lyrp \  \iff \pyr < \pyrp \\ 
  \label{eq:xrig}
  \qquad \lxr = \lxrp \  \iff \pxr = \pxrp \\
  \label{eq:yrig}
  \qquad \lyr = \lyrp \  \iff \pyr = \pyrp
\end{align}

For any $R$ and corresponding \il, the minimal area of a layout 
is defined as:
$ A^{\min} = \inf \{ A(\lay) : \lay\ \text{is a disjoint layout},
\lay\ \text{has same orthogonal ordering as \il} \}$
It should be noted that it may not be possible to attain a disjoint
orthogonality preserving layout whose area is the same as $A^{\min}$ -
we can only aim to get a layout whose area is arbitrarily close to $A^{\min}$.

We introduce the concept of rigidities in a layout. A set of rectangles $R'
\subseteq R$ forms a $x$-rigidity in a layout \lay if $\exists \alpha$ such that
$R' = \{ r \in R \mid \lay_x(r) = \alpha \}$. We define a $y$-rigidity in a layout 
similarly in terms of the $y$-coordinates of the rectangles in that layout. We
observe that any rectangle $r$ belongs to a unique $x$-rigidity (and a unique
$y$-rigidity), which may consist of merely itself and no other rectangle. We
order the $x$-rigidities in a layout \lay in increasing order of
$x$-coordinates, and for any rectangle $r \in R$, we define its $x$-rank to be
$i$ if $r$ belongs to the $i$-th $x$-rigidity in this ordering. It is obvious
that $x$-rank of any rectangle is an integer between 1 and $|R|$. We similarly
define the $y$-rank of each rectangle in terms of its $y$-rigidities. Unless
otherwise stated, we refer to the rigidities and ranks of the initial layout \il
whenever these terms are used in the paper.

Let $\seg(p, p')$ be the segment whose endpoints are points $p, p' \in P$. Then
the set of segments induced by a set of points $P$ is defined as $\Seg(P) =
\{\seg(p, p') : p,p' \in P,\ p \neq p' \}$, denoted by \Seg when $P$ is clear
from the context.

We also consider a simpler version of LADR where the set of rectangles
$R$ consists of unit squares. We call this version as the \textit{Layout
Adjustment for Disjoint Squares} problem, and refer to it as LADS for brevity.

\vspace{-0.3cm}
\section{Reduction of LADS to Hitting Set}
\label{sec:reduction}
\vspace{-0.2cm}
We formally define a unit grid as follows. Let $f: \Real^2 \rightarrow \Int^2$
be the function $f(x,y) = \left ( \floor{x}, \floor{y} \right)$. The function
$f$ induces a partition of $\Real^2$ into grid cells - grid cell $(i,j)$ is the
set $\{p \in \Real^2 \mid f(p) = (i,j) \}$. We call this partition a unit grid
on $\Real^2$. The `grid lines' are the vertical lines $x = \alpha$ and $y =
\alpha$ for integer $\alpha$.

Let $S$ be the set of unit squares provided as input to LADS, having initial
layout \il. Consider a disjoint, orthogonal order preserving layout \lay for
$S$ . Let \Lin be the subset consisting of those grid lines that intersect the
minimum bounding box of $\lay(S)$. Let \seg be the line segment connecting the
points $\lay(s)$ and $\lay(s')$, for some $s, s' \in S$. Since the layout \lay
is disjoint, $\lay(s)$ and $\lay(s')$ lie in different grid cells. Thus, there
exists at least one line $\lin \in \Lin$ that intersects \seg. Motivated by
this, we define a hitting set problem as follows.

We say a line \lin hits a line segment \seg if \lin intersects the relative
interior of \seg but not either end point of \seg.
Thus, if \seg is a horizontal line segment (which would happen if
$s, s'$ belongs to a $y$-rigidity), then \seg cannot be hit by a
horizontal line $\lin \in L$. We thus define the Uniform Hitting Set (UHS)
problem as follows: 

\begin{definition}[Uniform Hitting Set - Decision Problem]
   \label{def:UHS}
   Given a set of segments \Seg induced by a point set $P$ and a non-negative
   integer $k$, is there a set of axis-parallel lines \Lin that hit all segments
   in \Seg, such that $|L| \le k$?
\end{definition}

Since the area of the minimum bounding box for $\lay(S)$ is roughly the product
of the number of horizontal grid lines intersecting it and the number of
vertical grid lines intersecting it, we also need the following variant.

\begin{definition}[Constrained Uniform Hitting Set - Decision Problem]
    \label{def:CUHS}
    Given a set of line segments, \Seg, induced by a set of points $P$, and
    non negative integers $r,c$, is it possible to hit all segments in \Seg with a
    set of lines \Lin containing at most $r$ horizontal lines and $c$ vertical
    lines ?
\end{definition}

The term `uniform' in the problem name refers to the fact that each segment in
\Seg needs to be hit only once by a horizontal or vertical line. We denote the
problem thus defined as CUHS, and proceed to show its equivalence with
a constrained version of the layout adjustment problem. 

\begin{definition}[Constrained LADS - Decision Problem]
   \label{def:CLADS}
   Given $n$ unit squares $S$, initial layout \il, positive integers $w, h$ and
   a constant $0 < \varepsilon < 1$, is there a layout \layp having height
   $H(\layp) \leq h + \varepsilon$ and width $W(\layp) \leq w + \varepsilon$,
   satisfying the following conditions?
   \begin{enumerate}
       \item \layp is a disjoint layout.
       \item \il and \layp have the same orthogonal order.
   \end{enumerate}
\end{definition}

We term the constrained version of layout adjustment problem as CLADS.
We now show how to reduce a given instance of CLADS into an instance of
CUHS. We define \Seg as the set of all line segments induced by points
in $\il(S)$. 
\begin{restatable}{lemma}{forward}
   \label{cl:forward}
   If there is a set of lines \Lin containing at
   most $r$ horizontal lines and at most $c$ vertical lines that hit all
   segments in \Seg, then there is a disjoint layout \layp that has the same
   orthogonality as \il and whose height and width is bounded by $h +
   \varepsilon$ and $w + \varepsilon$, for any $\varepsilon > 0$. 
   Here $h = r + 1,\; w = c + 1$. 
\end{restatable}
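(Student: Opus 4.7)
My plan is to use the hitting set \Lin to induce a grid partition of the point set $\il(S)$, then place each unit square inside a unit cell of this grid in the new layout \layp. The hitting-set property will force at most one point of $\il(S)$ per cell, which leaves exactly enough room to accommodate each unit square.

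As a preliminary, I would perturb each line of \Lin by an infinitesimally small amount so that no point of $\il(S)$ lies on any line in \Lin, while preserving the hitting property. This is possible because the definition of ``hit'' uses strict-interior intersection, so such a perturbation can only add hits (for segments that previously had an endpoint on the line) and cannot destroy any existing strict-interior hit of a segment whose endpoints are off the line, provided the perturbation is smaller than every nonzero distance from a point of $\il(S)$ to a line. Let the perturbed set have horizontal lines $y = a_1 < \cdots < a_r$ and vertical lines $x = b_1 < \cdots < b_c$, with conventions $a_0 = b_0 = -\infty$ and $a_{r+1} = b_{c+1} = +\infty$. For each $p \in \il(S)$, define $\rho(p) \in \{0, \ldots, r\}$ and $\kappa(p) \in \{0, \ldots, c\}$ as the unique indices with $a_{\rho(p)} < p_y < a_{\rho(p)+1}$ and $b_{\kappa(p)} < p_x < b_{\kappa(p)+1}$.

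The key observation is that each cell contains at most one point of $\il(S)$: if distinct $p, p'$ shared a cell then no line of \Lin would hit $\seg(p, p')$, since no $a_i$ lies strictly between $p_y$ and $p'_y$ and no $b_j$ lies strictly between $p_x$ and $p'_x$, contradicting \Lin being a hitting set for \Seg. Letting $\sigma_x(s), \sigma_y(s) \in \{1, \ldots, n\}$ denote the $x$-rank and $y$-rank of $s$ in \il (with $n = |S|$), I define
\[
   \layp(s) \;=\; \left(\kappa(p) + \tfrac{1}{2} + \tfrac{\varepsilon}{n}\,\sigma_x(s),\;\; \rho(p) + \tfrac{1}{2} + \tfrac{\varepsilon}{n}\,\sigma_y(s)\right),
\]
where $p = \il(s)$; both correction terms are positive and strictly less than $\varepsilon$, so every center stays well inside its cell.

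Three things remain to verify. For disjointness, the key observation forces $\kappa(p) \neq \kappa(p')$ or $\rho(p) \neq \rho(p')$ for distinct $s, s'$; if $\kappa(p) < \kappa(p')$ then $p_x < p'_x$ (some $b_j$ separates them), so $\sigma_x(s) < \sigma_x(s')$, and the cell gap and the rank perturbation reinforce to give $\lambda'_x(s') - \lambda'_x(s) > 1$, which makes \cref{ineq:disjoint} fail; the analogous argument handles the $\rho(p) \neq \rho(p')$ case. For orthogonal ordering, $\kappa$ is monotone in \ilx and $\sigma_x$ is strictly monotone across distinct $x$-rigidities, which together give $\lambda'_x(s) < \lambda'_x(s') \Leftrightarrow \lambda^{\text{in}}_x(s) < \lambda^{\text{in}}_x(s')$ and equalities are preserved likewise, with the $y$-statements symmetric. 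For the bounding box, the $x$-coordinates of centers lie in $[\tfrac{1}{2} + \tfrac{\varepsilon}{n},\, c + \tfrac{1}{2} + \varepsilon]$, so the width of \layp is at most $c + 1 + \varepsilon = w + \varepsilon$, and the height bound is analogous. The most delicate step is the ``one point per cell'' observation, which hinges on the strict-interior definition of hitting combined with the absence of points on lines of \Lin; the initial perturbation is what makes this edge-case-free and reduces the remaining arguments to a clean case analysis.
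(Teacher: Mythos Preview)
Your proposal is correct and follows essentially the same approach as the paper: use the hitting set to partition $\il(S)$ into grid cells (one point per cell), place each square at its cell index plus $\tfrac12$, and add a small rank-based perturbation to recover strict orthogonal order. The only cosmetic difference is that you perturb by the \emph{global} $x$- and $y$-ranks $\sigma_x,\sigma_y$, whereas the paper perturbs by the \emph{within-column} and \emph{within-row} ranks and therefore also scales the integer grid by a factor $(1+\delta)$ to keep columns separated; your choice of global ranks makes that scaling unnecessary, since $\kappa$ and $\sigma_x$ are both monotone in $\il_x$ and hence automatically reinforce.
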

To solve LADS by multiple iterations of a procedure for solving CUHS, it would
be useful to guess the width of a disjoint layout with near-optimal area. The
following observation allows us to restrict our attention to layouts with near
integral width. That makes it possible to discretize LADS, by solving a
constrained version of LADS for all values of widths in $\{1, 2, \ldots, |S|\}$. 

\begin{restatable}{lemma}{discrete}
   \label{lem:disc}
   Any disjoint layout \lay can be modified into a disjoint layout \layp having
   the same height and orthogonal ordering as \lay, such that $W(\layp) (\leq
   W(\lay))$ lies in the interval $[w, w + \varepsilon]$, where $w \in \{1, 2,
   \ldots,n\}$ and $\varepsilon > 0$ is an arbitrarily small constant.
\end{restatable}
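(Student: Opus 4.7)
The plan is to compress $\lay$ horizontally by a left-to-right greedy sweep over its $x$-rigidities, leaving the $y$-coordinates untouched. Let $R_1, \ldots, R_k$ be the $x$-rigidities of $\lay$ in increasing order of $x$-coordinate, sitting at positions $x_1 < \cdots < x_k$, and let $\delta > 0$ be a small parameter to be fixed later. Set $x_1' := x_1$ and, for $i \ge 2$, recursively define
\begin{equation*}
   x_i' \;:=\; \max\!\Bigl(\,x_{i-1}' + \delta,\ \max\bigl\{x_j' + 1 : j < i,\ \exists\, s \in R_i,\, s' \in R_j \text{ with } |\lay_y(s) - \lay_y(s')| < 1\bigr\}\,\Bigr).
\end{equation*}
Let $\layp$ place each $s \in R_i$ at $(x_i', \lay_y(s))$. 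The height and vertical ordering are preserved by construction, and the horizontal ordering and rigidity structure are preserved because $x_i' > x_{i-1}'$. Disjointness also holds: whenever two squares have vertically overlapping projections, the second argument of the $\max$ enforces a horizontal separation of at least $1$.

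First I would prove by induction that $x_i' \le x_i$, which immediately gives $W(\layp) \le W(\lay)$. Each argument of the defining max is at most $x_i$: for the first, $x_{i-1}' + \delta \le x_{i-1} + \delta < x_i$ provided $\delta$ is smaller than every positive gap $x_\ell - x_{\ell-1}$; for the second, if $s \in R_i$ and $s' \in R_j$ with $j < i$ witness a vertical overlap, then disjointness of $\lay$ forces $x_i - x_j \ge 1$, hence $x_j' + 1 \le x_j + 1 \le x_i$.

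Next I would observe that each $x_i'$ has the form $a_i + c_i\delta$ with non-negative integers $a_i, c_i$, and that $a_i \le i - 1$. This follows from a second induction: both arguments of the max have this form, and taking the max preserves it provided $\delta$ is small enough that no tie is broken between unequal integer parts, while the integer part $a_i$ can grow by at most $1$ per step. Consequently $W(\layp) = x_k' - x_1' + 1 = (a_k + 1) + c_k \delta$. Choosing $\delta$ with $n\delta \le \varepsilon$ and setting $w := a_k + 1 \in \{1, \ldots, n\}$ then yields $W(\layp) \in [w, w + \varepsilon]$, as required.

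The main obstacle is the inductive bound $x_i' \le x_i$; this is the only point at which the disjointness of $\lay$ is genuinely used, and without it the greedy step could overshoot $x_i$ and produce a layout strictly wider than $\lay$. The remainder is just bookkeeping on the $(\text{integer}) + (\text{integer multiple of } \delta)$ decomposition of the greedy coordinates.
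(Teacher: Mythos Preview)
Your greedy left-to-right sweep over the $x$-rigidities with slack $\delta$ is exactly the paper's approach; the paper phrases the recursion as ``translate $s_i$ left until it touches some earlier $s_{i'}$ or reaches $\layp_x(s_{i-1})+\delta$'', which is precisely your $\max$ formula, and its inductive invariant $\layp_x(s_i)\in[b,\,b+(i-1)\delta]$ with integer $0\le b\le i-1$ is your $(a_i,c_i)$ bookkeeping.

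Two small points to tighten. First, since you set $x_1':=x_1$ with $x_1$ arbitrary, it is $x_i'-x_1'$ (not $x_i'$ itself) that has the form $a_i+c_i\delta$; this is what you actually use when you compute $W(\layp)=x_k'-x_1'+1$. Second, your final step ``choosing $\delta$ with $n\delta\le\varepsilon$'' implicitly uses $c_k\le n$, but your induction only states $a_i\le i-1$. You also need $c_i\le i-1$: the first branch of the $\max$ contributes $c_{i-1}+1\le i-1$, and the second contributes $c_j\le j-1<i-1$, so the bound holds, but it should be part of the inductive claim.
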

We can similarly modify a disjoint layout \lay into an orthogonal order
preserving disjoint layout \layp which has the same width, and whose height 
lies in the interval $[h, h + \varepsilon]$ for some integer $h > 0$. Thus,
combining the two methods, we obtain the following corollary:
\begin{cor}
   \label{cor:disc}
   Any disjoint layout \lay can be modified into an orthogonal order preserving
   disjoint layout \layp, such that $W(\layp) (\leq W(\lay))$ lies in the interval
   $[w, w + \varepsilon]$ and $H(\layp) (\leq H(\lay))$ lies in the interval $[h,
   h + \varepsilon]$, where $w, h \in \{1, 2, \ldots,n\}$ and $\varepsilon > 0$
   is an arbitrarily small constant.    
\end{cor}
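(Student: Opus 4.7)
The plan is to prove the corollary by straightforward composition of \Cref{lem:disc} and its symmetric (height) counterpart mentioned immediately above. The key point is that neither of these two modifications disturbs the invariant established by the other, so they can be applied sequentially without conflict.

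First, I would apply \Cref{lem:disc} to the input layout \lay with parameter $\varepsilon/2$, obtaining a disjoint layout $\lay^{(1)}$ with the same orthogonal ordering as \lay, the same height $H(\lay^{(1)}) = H(\lay)$, and a width $W(\lay^{(1)}) \le W(\lay)$ lying in some interval $[w, w+\varepsilon/2]$ with $w \in \{1, \dots, n\}$. Next, I would invoke the symmetric ``height'' lemma (stated in the paragraph following \Cref{lem:disc}) on $\lay^{(1)}$, again with parameter $\varepsilon/2$: this produces a disjoint, orthogonal-order-preserving layout \layp of $\lay^{(1)}$ with $W(\layp) = W(\lay^{(1)})$ and $H(\layp) \le H(\lay^{(1)})$ lying in some interval $[h, h+\varepsilon/2]$, where $h \in \{1, \dots, n\}$.

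Then I would verify the three claims of the corollary. Orthogonal order preservation is transitive, so \layp has the same orthogonal ordering as \lay. For the width, since the second step preserves width exactly, we have $W(\layp) = W(\lay^{(1)}) \in [w, w+\varepsilon/2] \subseteq [w, w+\varepsilon]$, and $W(\layp) = W(\lay^{(1)}) \le W(\lay)$. For the height, the first step preserves height, so $H(\layp) \le H(\lay^{(1)}) = H(\lay)$ and $H(\layp) \in [h, h+\varepsilon/2] \subseteq [h, h+\varepsilon]$. Hence \layp satisfies all requirements.

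There is essentially no obstacle here; the only point requiring any care is to ensure that the second application does not undo what the first achieved. This is immediate because the height-reducing variant is stated to preserve the width exactly, not merely bound it, so $W(\layp)$ remains in the interval $[w, w+\varepsilon/2]$ guaranteed by the first application. Splitting the error budget as $\varepsilon/2 + \varepsilon/2 = \varepsilon$ then yields the stated bounds with a single parameter $\varepsilon$.
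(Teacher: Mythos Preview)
Your proposal is correct and follows exactly the approach the paper intends: the paper simply says ``combining the two methods, we obtain the following corollary'' without further detail, and your argument spells out that composition. The $\varepsilon/2$ split is a harmless bit of extra caution---since each step preserves the other coordinate exactly, applying each with parameter $\varepsilon$ would already suffice---but your version is certainly correct.
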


\vspace{-0.4cm}
\begin{restatable}{lemma}{reverse}
   \label{cl:reverse}
   For any $\varepsilon < 1/2$, if there is a disjoint layout \layp that has the
   same orthogonality as \il and whose height and width is bounded by $h +
   \varepsilon$ and $w + \varepsilon$ respectively, where $h, w$ are positive
   integers, then there is a set of lines \Lin that
   hit all segments in \Seg, containing at most $c$ vertical lines and $r$
   horizontal lines. Here $r = h - 1,\; c = w - 1$. 
\end{restatable}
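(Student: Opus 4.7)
My strategy is to overlay a carefully shifted unit grid on the plane of $\layp$, identify the grid lines that fall strictly between some pair of unit-square centers, and then lift each such grid line to a hitting line in the plane of $\il$ via the orthogonal-ordering equivalences (\ref{eq:xineq})--(\ref{eq:yrig}). Because $\layp$ is disjoint and the squares are unit, the centers $\{\layp(s): s \in S\}$ lie in pairwise distinct grid cells, so for every pair $s \neq s'$ at least one grid line passes strictly between $\layp(s)$ and $\layp(s')$; this is the hook I will use to certify a hit of $\seg(\il(s), \il(s'))$.

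First I would control the number of useful grid lines. The bounding box of $\layp$ has height at most $h + \varepsilon$, and since the squares are unit the spread $\max_{s \in S} \pys - \min_{s \in S} \pys$ is at most $h - 1 + \varepsilon$, strictly less than $h - 1/2$ by the hypothesis $\varepsilon < 1/2$. By shifting the grid so that $\min_{s \in S} \pys$ lies just above an integer grid-line height (while avoiding the finitely many exceptional shifts that would place any center exactly on a grid line), the open interval $(\min_{s \in S} \pys, \max_{s \in S} \pys)$ contains at most $h - 1$ grid-line heights. Thus at most $r = h - 1$ horizontal grid lines of $\layp$ fall strictly between some pair of centers; the $x$-argument is symmetric and produces at most $c = w - 1$ vertical ones.

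Next I would lift each useful grid line into $\il$'s plane. For a horizontal grid line $y = j$, set $B_j = \{s \in S : \pys < j\}$ and $A_j = \{s \in S : \pys > j\}$, which partition $S$ by our choice of shift. Applying (\ref{eq:yineq}) to each $s \in B_j$, $s' \in A_j$ gives $\ilys < \ilysp$, and hence $y_B := \max_{s \in B_j} \ilys < y_A := \min_{s' \in A_j} \ilysp$. Pick any $t_j \in (y_B, y_A)$ and let $\lin_j$ be the horizontal line $y = t_j$ in the plane of $\il$; perform the analogous construction for each vertical grid line and collect all resulting lines into $\Lin$.

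For correctness, fix an arbitrary segment $\seg(\il(s), \il(s')) \in \Seg$. The centers $\layp(s), \layp(s')$ lie in distinct cells of $\layp$'s grid, so some grid line separates them; assume WLOG it is horizontal, at $y = j$ with $\pys < j < \pysp$ (the vertical case is symmetric). Then $s \in B_j$ and $s' \in A_j$, giving $\ilys \leq y_B < t_j < y_A \leq \ilysp$, so $\lin_j$ meets the relative interior of the (non-horizontal) segment. The only subtle point is that $\lin_j$ must miss every segment endpoint, i.e., $t_j$ must differ from $\lambda^{\text{in}}_y(s'')$ for every $s'' \in S$; this is automatic because any $s''$ with $\lambda^{\text{in}}_y(s'') \in (y_B, y_A)$ would, by (\ref{eq:yineq}), have to lie in $A_j$ or $B_j$ and hence have $\lambda^{\text{in}}_y(s'') \geq y_A$ or $\leq y_B$, a contradiction. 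The principal obstacle in the proof is precisely this bookkeeping: tightening the grid-line count from the naive bound of $h$ down to $h - 1$ via the shift argument (which crucially uses $\varepsilon < 1/2$), and ensuring that the lifted lines hit segment interiors without ever touching an endpoint of any segment in $\Seg$.
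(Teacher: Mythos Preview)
Your argument is correct and follows the same route as the paper: place a shifted unit grid on $\layp$, use disjointness of the unit squares to guarantee that every pair of centers is separated by some grid line, and count that at most $w-1$ vertical and $h-1$ horizontal grid lines fall in the relevant range. Your version is in fact tidier on two points: the shift argument (placing $\min_s \pys$ just above a grid height) makes the role of $\varepsilon<1/2$ explicit, and you carry out the lift of each grid line from the $\layp$-plane back to the $\il$-plane via (\ref{eq:xineq})--(\ref{eq:yrig}), a step the paper elides by arguing the hitting directly in the coordinates of $\layp$.
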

All proofs of lemmas in this section are in \Cref{app:claims}.
\Cref{cl:forward,cor:disc,cl:reverse} show the close connection between CLADS
and CUHS. In subsequent sections, we exploit this connection to derive  hardness
results for CLADS.

\section{Inapproximability of Layout Adjustment Problems}
\vspace{-0.2cm}
\label{sec:APXH}
In this section, we prove APX-hardness of various layout adjustment problems. We
consider a variant of LADS where instead of minimizing the area, we would like
to minimize the perimeter of the output layout. We prove an inapproximability
result for this problem which readily follows from APX-hardness of the Uniform
Hitting Set problem. We also show that the decision problem Constrained LADS 
(CLADS) is \NP-hard. Recall that in this problem, given an initial layout of $n$
unit squares, positive integers $w, h$, and a constant $\varepsilon > 0$, the
goal is to determine if there is an orthogonal order preserving layout having
height and width at most $h + \varepsilon$ and $w + \varepsilon$ respectively.
To be precise we show a more general inapproximability result for this problem.
We prove that, given an instance of CLADS, it is \NP-hard to determine whether
there is an output layout of height and width at most $h + \varepsilon$ and $w +
\varepsilon$ respectively, or there is no output layout of respective height and
width at most $(1+\con)(h + \varepsilon)$ and $(1+\con)(w + \varepsilon)$ for
some $0 < \con < 1$. This result follows from the connection of CLADS with
Constrained Uniform Hitting Set (CUHS) described in
\Cref{sec:reduction} and APX-hardness of CUHS. The APX-hardness of CUHS follows
from the APX-hardness of UHS, to which we turn next.

\paragraph{APX-Hardness of Hitting Set Problem.}
We consider the optimization version of UHS, in which given a set of points $P$,
the goal is to find minimum number of vertical and horizontal lines that hit all
segments in $\Seg(P)$. In this section, we prove that there is no polynomial
time $(1+\con)$-factor approximation algorithm for UHS, unless \PNP, for some $0
< \con < 1$. Note that the UHS problem we consider here is a special case of the
hitting set problem where, given any set of segments $S$, the goal is to find a
hitting set for $S$. This problem is known to be \NP-hard. But, in case of UHS,
given a set of points, we need to hit all the segments induced by the points.
Thus the nontriviality in our result is to show that even this special case of
hitting set is not only \NP-hard, but also hard to approximate. To prove the
result we reduce a version of maximum satisfiability problem (5-OCC-MAX-3SAT) to
UHS. 5-OCC-MAX-3SAT is defined as follows. Given a set $X$ of $n$ boolean
variables and a conjunction $\phi$ of $m$ clauses such that each clause contains
precisely three distinct literals and each variable is contained in exactly five
clauses ($m=\frac{5n}{3}$), the goal is to find a binary assignment of the
variables in $X$ so that the maximum number of clauses of $\phi$ are satisfied.
The following theorem follows from the work of~\CCite{Feige98}.

\begin{theorem}
\label{th:max3sat}
For some $\gamma > 0$, it is \NP-hard to distinguish between an instance of
5-OCC-MAX-3SAT consisting of all satisfiable clauses, and one in which less than
$(1-\gamma)$-fraction of the clauses can be satisfied.
\end{theorem}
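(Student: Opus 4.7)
The plan is to derive the stated gap by combining the PCP theorem with Feige's regularization technique; no new ideas beyond those already in \cite{Feige98} are required. First I would begin with the standard PCP-based hardness of MAX-3SAT: by the PCP theorem there exists an absolute constant $\epsilon_0 > 0$ and a polynomial-time reduction taking a SAT instance $\varphi$ to a 3CNF formula $\psi$ such that $\varphi$ satisfiable implies $\psi$ fully satisfiable, while $\varphi$ unsatisfiable implies at most a $(1-\epsilon_0)$-fraction of the clauses of $\psi$ is simultaneously satisfiable. A trivial padding step guarantees that each clause contains exactly three distinct literals.

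Next I would impose the ``each variable appears in exactly five clauses'' condition, which is the core of Feige's contribution. For each variable $x$ appearing $d_x$ times in $\psi$, introduce $d_x$ fresh copies $x^{(1)},\ldots,x^{(d_x)}$, one per occurrence, and add bounded-size ``equality gadgets'' (expressible as constantly many 3-clauses) along the edges of a constant-degree expander graph on these copies. The spectral expansion property guarantees that any assignment violating a small fraction of the equality constraints can be repaired into a consistent assignment while only losing a controlled constant fraction of clauses; consequently a positive constant gap $\gamma > 0$ survives the reduction. By choosing the expander degree, ordering the edges cleverly, and inserting a bounded number of dummy clauses, one arranges that every resulting variable occurs in exactly five clauses and every clause has exactly three distinct literals. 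Counting literal occurrences two ways then forces $m = 5n/3$.

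The main obstacle, which is precisely what Feige addresses carefully, is calibrating these constants so that the two regularity requirements (exactly five occurrences per variable and exactly three distinct literals per clause) hold simultaneously while maintaining a strictly positive constant gap. Since \Cref{th:max3sat} only asserts existence of \emph{some} $\gamma > 0$ and does not demand any particular numerical value, the construction of \cite{Feige98} can be invoked as a black box and the theorem follows directly. In subsequent sections this $\gamma$ will be propagated through the reductions to \textsc{UHS} and \textsc{CUHS}, and the exact value is immaterial to the final \textsc{APX}-hardness conclusions.
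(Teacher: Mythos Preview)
Your sketch is a reasonable outline of Feige's argument, but note that the paper does not actually prove \Cref{th:max3sat} at all: it simply states the theorem and attributes it to \cite{Feige98} as a black box. So in that sense your proposal goes well beyond what the paper does, supplying an informal account of the PCP-plus-expander-regularization machinery that the paper is content to cite without explanation. There is nothing wrong with your sketch, and since the theorem only claims the existence of some positive $\gamma$, invoking \cite{Feige98} directly (as both you and the paper ultimately do) suffices.
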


The crux of the hardness result is to show the existence of a reduction from 5-OCC-MAX-3SAT to UHS having
the following properties:

\begin{enumerate}
 \item Any instance of 5-OCC-MAX-3SAT in which all the clauses can be satisfied, is
     reduced to an instance of UHS in which the line
     segments in $\Seg(P)$ can be hit using at most $k$ lines, where $k$ is a function
     of $m$ and $n$. 
 \item Any instance of 5-OCC-MAX-3SAT in which less than $1-\delta$ (for 
     $0< \delta \leq 1$) fraction of the clauses can be satisfied, is reduced to an
     instance of UHS in which more than
     $(1+\apconst\delta)k$ lines are needed to hit the segments in $\Seg(P)$.
\end{enumerate}

The complete reduction appears in \Cref{App:reduc}. The next theorem follows
from the existence of such a reduction and from \Cref{th:max3sat}.

\begin{restatable}{theorem}{inapproxHS}
\label{th:inapproxHS}
There is no polynomial time $(1+\con)$-factor approximation algorithm for
UHS with $\con \leq \apconst\gamma$, unless
\PNP, $\gamma$ being the constant in \Cref{th:max3sat}.
\end{restatable}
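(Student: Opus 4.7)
The plan is to follow the roadmap the authors explicitly outline just before the theorem: exhibit a polynomial-time reduction from 5-OCC-MAX-3SAT to UHS satisfying properties (1) and (2), and then combine it with Feige's theorem (\Cref{th:max3sat}) to conclude the inapproximability. Concretely, given a 5-OCC-MAX-3SAT instance with $n$ variables and $m = 5n/3$ clauses, I would construct in polynomial time a point set $P$ together with an integer $k$ (a fixed linear combination of $n$ and $m$) such that $\Seg(P)$ has a uniform hitting set of size $k$ when the formula is fully satisfiable, and every uniform hitting set has size strictly larger than $(1+\apconst\,\delta)k$ whenever at most a $(1-\delta)$-fraction of the clauses can be simultaneously satisfied.

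The construction would use two kinds of local gadgets arranged on a widely spaced grid. First, for each variable $x_i$ a \emph{variable gadget}: a small cluster of points whose internally induced segments force a prescribed baseline number of lines, and admit exactly two minimum-cost ``canonical'' line selections, one representing $x_i=\text{true}$ and the other $x_i=\text{false}$. Second, for each clause $C_j$ a \emph{clause gadget}: a cluster whose induced segments can be covered by its baseline lines precisely when one of the three literals of $C_j$ is supplied by the corresponding variable gadget, and otherwise require one additional line. Setting $k$ to the sum of the per-gadget baselines, a satisfying assignment yields a hitting set of size exactly $k$; on the other hand, every unsatisfied clause forces at least one extra line, so a $\delta$-fraction of unsatisfied clauses yields $\Omega(\delta m)$ extra lines over $k$. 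Since $k = \Theta(n+m) = \Theta(m)$, this is a multiplicative gap of $1 + \Omega(1)\cdot\delta$, and careful accounting of the baselines determines the constant $\apconst$ in the theorem statement.

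The main obstacle, and the subtlety the authors flag earlier in the introduction, is that we do \emph{not} get to choose the segments in $\Seg(P)$ freely: every pair of points contributes a segment that must be hit. So in addition to the intended intra-gadget segments, a quadratic number of cross-segments between distinct gadgets appear, and they must not let a hitting set ``cheat'' by covering gadget obligations with shared lines or by avoiding the canonical $\text{true}/\text{false}$ configurations. I would manage this by placing the gadgets along a sufficiently sparse staircase arrangement so that cross-gadget segments are long diagonals that are automatically pierced by lines already bought inside the gadgets, and then prove a \emph{normalization lemma}: any hitting set for $\Seg(P)$ can be transformed, without increasing its size, into one whose restriction to each variable gadget is a canonical configuration. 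Once this lemma is in place, the number of unsatisfied clauses is a genuine lower bound on the extra lines used, and the two properties of the reduction follow.

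Finally, given such a reduction, the theorem follows immediately: apply \Cref{th:max3sat} with $\delta=\gamma$, so that distinguishing fully satisfiable instances from $(1-\gamma)$-satisfiable ones is \NP-hard. Any polynomial-time $(1+\con)$-approximation for UHS with $\con \le \apconst\,\gamma$ would distinguish UHS instances of optimum value at most $k$ from those of value more than $(1+\apconst\,\gamma)k$, and hence the two cases of the 5-OCC-MAX-3SAT promise, forcing \PNP.
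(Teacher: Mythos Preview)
Your overall plan and the final deduction from \Cref{th:max3sat} are exactly what the paper does: the proof of \Cref{th:inapproxHS} itself is a three-line argument that a $(1+\con)$-approximation with $\con\le\apconst\gamma$ would distinguish the two sides of Feige's promise, given a reduction with the two enumerated properties. All of the content is in the reduction, and there your sketch diverges from the paper in a way that leaves a real gap.

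You hinge everything on a \emph{normalization lemma}: any hitting set for $\Seg(P)$ can be converted, at no extra cost, into one that uses a canonical true/false configuration on every variable gadget. The paper neither proves nor uses such a statement. Instead it introduces five families of \emph{auxiliary points} $A^1,\ldots,A^5$ whose sole purpose is to force specific separating lines into every hitting set (for instance, $A^3$ forces a horizontal line between consecutive variable blocks, and $A^1,A^2$ force horizontal lines between the two halves of each clause block). With these in place the paper runs a pure counting argument (\Cref{lem:span}, \Cref{obs:disjoint}, \Cref{lem:badclause}): the gadgets and auxiliary blocks have pairwise disjoint $x$- and $y$-spans, so the lines they consume are disjoint, and any $V_i$ that uses more than three lines is charged directly against the excess $p$ in a budget of $k+p$. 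There is no transformation step.

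The paper even exhibits (\Cref{fig:cntrexmpl}) a configuration showing that, \emph{without} auxiliary points, a size-$(4n-1)$ hitting set for the variable points alone can spend four lines on some $V_i$, because one of those four lines simultaneously separates $V_{i-1}$ from $V_i$. This is exactly the sharing your normalization lemma would have to undo, and it is not clear how to do so without growing the hitting set; the auxiliary gadgets are precisely what break such sharing and make the disjoint-span counting go through. So the normalization lemma is currently an assertion rather than an argument, and the paper's construction suggests that proving it (if it is even true for a bare staircase of gadgets) would require machinery comparable to what you are trying to avoid.
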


Now we consider the variant of LADS where we would like to minimize the
perimeter $2(w+v)$ of the output layout, where $w$ and $v$ are the width and
height of the layout respectively. We refer to this problem as Layout Adjustment
for Disjoint Squares - Minimum Perimeter (LADS-MP). We note that in UHS we
minimize the sum of the number of horizontal and vertical lines ($k=r+c$). Thus
by \Cref{cl:forward} and \Cref{cl:reverse} it follows that a solution for UHS
gives a solution for LADS-MP (within an additive constant) and vice versa. Hence
the following theorem easily follows from \Cref{th:inapproxHS}.

\begin{theorem}\label{th:perihs}
No polynomial time $(1+\con')$-factor approximation algorithm exists for
LADS-MP with $\con'=\frac{\con}{4}$, unless
\PNP, $\con$ being the constant in \Cref{th:inapproxHS}.
\end{theorem}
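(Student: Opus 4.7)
The plan is to derive Theorem \ref{th:perihs} from Theorem \ref{th:inapproxHS} via the CLADS$\leftrightarrow$CUHS correspondence developed in Section \ref{sec:reduction}. I argue contrapositively: assuming a polynomial-time $(1+\con/4)$-approximation $A$ for LADS-MP, I will build a polynomial-time $(1+\con)$-approximation for UHS, contradicting Theorem \ref{th:inapproxHS}.

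Given a UHS instance on a point set $P$, form a LADS-MP instance by placing a unit square centered at each point of $P$, so that the induced segment set $\Seg$ is precisely what must be hit. Let $k^* = r^* + c^*$ denote the UHS optimum and $P^*$ the optimal perimeter of this LADS-MP instance. Applying Lemma \ref{cl:forward} to an optimum hitting set with $r^*$ horizontal and $c^*$ vertical lines yields, for any $\varepsilon > 0$, a disjoint orthogonal-order-preserving layout of height at most $r^* + 1 + \varepsilon$ and width at most $c^* + 1 + \varepsilon$, and hence
\begin{equation*}
    P^* \;\le\; 2(r^* + c^* + 2 + 2\varepsilon) \;=\; 2k^* + 4 + O(\varepsilon).
\end{equation*}

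Next, run $A$ to obtain a layout of perimeter $P_A \le (1+\con/4)\,P^*$, apply Corollary \ref{cor:disc} to convert it into an orthogonal-order-preserving layout whose width lies in $[w, w+\varepsilon]$ and height in $[h, h+\varepsilon]$ for positive integers $w, h$ (this transformation only decreases the width and height, so $2(h+w) \le P_A$), and invoke Lemma \ref{cl:reverse} (with $\varepsilon < 1/2$) to extract a UHS solution of size
\begin{equation*}
    k_A \;=\; (h-1)+(w-1) \;\le\; \frac{P_A}{2} - 2 \;\le\; \Bigl(1+\frac{\con}{4}\Bigr)(k^*+2) - 2 + O(\varepsilon) \;=\; \Bigl(1+\frac{\con}{4}\Bigr)k^* + \frac{\con}{2} + O(\varepsilon).
\end{equation*}
Since the reduction underlying Theorem \ref{th:inapproxHS} produces UHS instances with $k^* = \Omega(|P|)$, choosing $\varepsilon$ sufficiently small and $|P|$ sufficiently large forces $k_A \le (1+\con)\,k^*$, contradicting Theorem \ref{th:inapproxHS}. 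The one delicate point in the argument is that the CUHS and CLADS optima differ by an additive $O(1) + O(\varepsilon)$ slack in both directions, which would spoil a ratio arbitrarily close to $1+\con$; the conservative choice $\con' = \con/4$ (rather than something nearer $\con$) provides ample multiplicative margin to absorb this additive slack once $k^*$ grows with $|P|$, which is the crux of the bookkeeping.
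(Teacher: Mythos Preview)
Your proposal is correct and takes essentially the same approach as the paper, which merely sketches that Lemmas~\ref{cl:forward} and~\ref{cl:reverse} make UHS and LADS-MP equivalent up to an additive constant, whence the inapproximability in Theorem~\ref{th:inapproxHS} transfers. One minor simplification: the appeal to $k^* = \Omega(|P|)$ is unnecessary, since $(1+\con/4)k^* + \con/2 + O(\varepsilon) \le (1+\con)k^*$ already holds for every $k^* \ge 1$ once $\varepsilon$ is small enough.
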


\paragraph{Inapproximability of CUHS.}
We show that if there is a polynomial time approximate
decision algorithm for Constrained Uniform Hitting Set - Decision Problem
(CUHS), then \PNP. We use the inapproximability result of UHS for this purpose.
See \Cref{def:CUHS} for the definition of CUHS.  Now we have the following
theorem whose proof follows from \Cref{th:inapproxHS} and is given in
\Cref{App:cuhs}.

\begin{restatable}{theorem}{thcuhs}
\label{th:cuhs}
Suppose there is a polynomial time algorithm that, given $\Seg(P)$ and
non-negative integers $r,c$ as input to CUHS, 
\begin{enumerate}[(1)]
 \item outputs ``yes'', if there is a set with at most $c$ vertical and $r$
     horizontal lines that hits the segments in $\Seg(P)$; and
 \item outputs ``no'', if there is no hitting set for $\Seg(P)$ using at most
     $(1+\con)c$ vertical and $(1+\con)r$ horizontal lines, where $\con$ is the
     constant in \Cref{th:inapproxHS}. 
\end{enumerate}
Then \PNP.
\end{restatable}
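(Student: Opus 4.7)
The plan is to show that the hypothesized polynomial-time approximate decision algorithm $\mathcal{A}$ for CUHS yields a polynomial-time procedure that distinguishes the gap instances of UHS, thereby contradicting \Cref{th:inapproxHS}. Concretely, given an instance $\Seg(P)$ of UHS together with a threshold $k$, I would enumerate every nonnegative integer pair $(r,c)$ with $r+c=k$ — there are only $k+1 \le |P|+1$ such pairs, a polynomial number — and invoke $\mathcal{A}(\Seg(P),r,c)$ on each. The reduction reports ``UHS optimum is at most $k$'' if any call returns ``yes'' and ``UHS optimum exceeds $(1+\con)k$'' if every call returns ``no''.

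Correctness splits into two directions. For \emph{completeness}, if the UHS optimum is at most $k$, then some optimal hitting set decomposes into $r^*$ horizontal and $c^*$ vertical lines with $r^* + c^* \le k$; choosing any $(r,c)$ with $r \ge r^*$, $c \ge c^*$, and $r+c=k$ still admits the same set as a valid $(r,c)$-hitting set, so condition~(1) of the theorem statement forces $\mathcal{A}(\Seg(P),r,c)$ to output ``yes''. For \emph{soundness}, if the UHS optimum exceeds $(1+\con)k$, then for every pair $(r,c)$ with $r+c=k$ there can be no hitting set using at most $(1+\con)r$ horizontal and $(1+\con)c$ vertical lines — such a set would have total size at most $(1+\con)k$, contradicting the optimum bound — so by condition~(2) every call returns ``no''.

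The remaining point is that this argument contradicts the \emph{gap} version of UHS hardness, not the mere nonexistence of an approximation algorithm. This is in fact exactly what \Cref{th:inapproxHS} delivers: its proof (sketched via the two numbered properties preceding it, and completed in the appendix through a gap-preserving reduction from 5-OCC-MAX-3SAT) establishes NP-hardness of distinguishing UHS instances of optimum at most $k$ from those of optimum greater than $(1+\con)k$. I expect the only delicate step to be pinning down this gap interpretation explicitly, since the splitting of a total budget $k$ into $(r,c)$ and the transfer of the $(1+\con)$ factor componentwise is immediate once one observes that $(1+\con)r + (1+\con)c = (1+\con)(r+c)$. Hence $\mathcal{A}$ decides an NP-hard gap problem in polynomial time, yielding \PNP.
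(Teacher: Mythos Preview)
Your proposal is correct and follows essentially the same approach as the paper: enumerate all $(r,c)$ with $r+c=k$, invoke the CUHS oracle on each pair, and use the identity $(1+\con)r+(1+\con)c=(1+\con)k$ to transfer the gap. Your write-up is in fact slightly more careful than the paper's, since you make explicit that \Cref{th:inapproxHS} supplies a gap hardness (not just inapproximability) and that the number of pairs is polynomial; the only minor quibble is that the bound $k+1\le |P|+1$ tacitly assumes $k\le |P|$, which is harmless since any $k\ge |P|$ instance is trivially a ``yes'' instance.
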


\vspace{-0.4cm}
\paragraph{Inapproximability of CLADS.}
We show that the existence of a polynomial time approximate decision algorithm
for CLADS implies \PNP. See \Cref{def:CLADS} for the definition of CLADS.  Now
we have the following theorem whose proof follows from \Cref{th:cuhs} and is given
in \Cref{App:CLADS}.

\begin{restatable}{theorem}{thinapCLADS}
\label{th:inap_CLADS}
Suppose there is a polynomial time algorithm that, given $S$, \il, $w, h$, and
$\varepsilon$ as input to CLADS, 
\begin{enumerate}[(1)]
 \item outputs ``yes'', if there is an output layout \layp with $H(\layp) \leq h
     + \varepsilon$ and $W(\layp) \leq w + \varepsilon$; and
 \item outputs ``no'', if there is no output layout \layp with $H(\layp) \leq
     (1+\con')(h + \varepsilon)$ and $W(\layp) \leq (1+\con')(w + \varepsilon)$,
     where $\con'=\frac{\con}{4}$ and $\con$ is the constant in
     \Cref{th:inapproxHS}. 
\end{enumerate}
Then \PNP.
\end{restatable}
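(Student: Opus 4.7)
I would prove the claim by reducing the approximate-decision version of CUHS, whose hardness is provided by \Cref{th:cuhs}, to that of CLADS. Given a CUHS instance $(\Seg(P), r, c)$, take the $n := |P|$ points in $P$ as the initial layout \il of $n$ unit squares $S$, and form the CLADS instance $(S, \il, w, h, \varepsilon)$ with $h := r + 1$, $w := c + 1$, and $\varepsilon$ a fixed constant in $(0, 1/2)$.

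The completeness direction is immediate from \Cref{cl:forward}: if $\Seg(P)$ admits a hitting set of $r$ horizontal and $c$ vertical lines, then there exists a disjoint, orthogonal-order-preserving layout \layp with $H(\layp) \leq h + \varepsilon$ and $W(\layp) \leq w + \varepsilon$, so the hypothetical CLADS algorithm must output ``yes'' on the constructed instance.

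For soundness I argue the contrapositive. Suppose the algorithm outputs ``yes''; then by its promise guarantee there is a disjoint, orthogonal-order-preserving layout \layp with $H(\layp) \leq (1+\con')(h+\varepsilon)$ and $W(\layp) \leq (1+\con')(w+\varepsilon)$. \Cref{cor:disc} lets me replace \layp by a layout with the same orthogonal order whose height and width equal $h^\star + \varepsilon^\star$ and $w^\star + \varepsilon^\star$ for integers $h^\star, w^\star \in \{1,\ldots,n\}$ with $h^\star \leq H(\layp)$, $w^\star \leq W(\layp)$, and $\varepsilon^\star$ arbitrarily small (in particular $< 1/2$). I then invoke \Cref{cl:reverse} to obtain a hitting set for $\Seg(P)$ consisting of at most $h^\star - 1$ horizontal and $w^\star - 1$ vertical lines, which I want to convert into a contradiction with \Cref{th:cuhs}.

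The main obstacle --- and the reason for the specific choice $\con' = \con/4$ --- is verifying that $h^\star - 1 \leq (1+\con) r$ and $w^\star - 1 \leq (1+\con) c$. Unpacking, $h^\star - 1 \leq (1+\con')(r + 1 + \varepsilon) - 1 = r + \con' r + \con'(1+\varepsilon) + \varepsilon$, so the needed inequality reduces to $(\con - \con')\, r \geq \con'(1+\varepsilon) + \varepsilon$. With $\con' = \con/4$ and $\varepsilon$ chosen sufficiently small, the left side is $\tfrac{3\con}{4} r$, which dominates the right as soon as $r$ is at least a small absolute constant; the symmetric bound handles $c$. For the few degenerate small-$r,c$ CUHS instances I would pre-scale by placing a constant number of disjoint translated copies of $P$ along a widely-separated diagonal, which multiplies both $r$ and $c$ linearly while preserving the multiplicative gap guaranteed by \Cref{th:cuhs}; the reduction remains polynomial, so a polynomial-time CLADS approximate-decision procedure satisfying (1) and (2) would solve the promise problem of \Cref{th:cuhs}, forcing \PNP.
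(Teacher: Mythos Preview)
Your reduction is the same one the paper uses (CUHS $\to$ CLADS via $h=r+1$, $w=c+1$, with completeness from \Cref{cl:forward} and soundness from \Cref{cl:reverse}), and your arithmetic showing why $\con'=\con/4$ is the right loss is essentially the paper's computation. The one substantive difference is your treatment of small $r,c$: you propose a diagonal-replication trick to boost $r$ and $c$, whereas the paper avoids this entirely. It observes that CUHS with $r=0$ or $c=0$ is solvable in polynomial time outright, and then for $r,c\ge 1$ it simply \emph{sets} $\varepsilon=\con'$; with that choice one checks $(1+\con)c+1>(1+\con')(c+1+\con')$ reduces to $3c>2+\con'$, which holds for every $c\ge 1$ (and symmetrically for $r$). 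So no replication is needed: once you commit to a specific small $\varepsilon$ (anything $\le \con'$ works), your own inequality $(\con-\con')r\ge \con'(1+\varepsilon)+\varepsilon$ already holds from $r=1$ on.

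Your replication idea is not wrong in spirit, but as stated it is under-specified: after taking $k$ copies the ``yes'' parameters become $kr+(k-1)$ and $kc+(k-1)$ (or similar), and you would still have to verify that the multiplicative gap $(1+\con)$ survives the additive $k-1$ terms on both sides, invoking \Cref{lem:span} to push the lower bound through. That can be done, but it is extra work that the paper's direct arithmetic with $\varepsilon=\con'$ sidesteps.
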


\vspace{-0.5cm}
\section{Approximation Algorithm}
\label{sec:approx}
\vspace{-0.2cm}
In this section, we describe an approximation algorithm for LADR i.e.  for a set
$R$ of axis-parallel rectangles having initial layout \il, we need to find a
disjoint layout of minimum area that preserves the orthogonal ordering of \il.
Let $W_{\max} = \max\{w_r \mid r \in R\}$ and $H_{\max} = \max\{h_r \mid r
\in R\}$ be the maximum width and maximum height, respectively, amongst all rectangles in $R$.
\Cref{lem:disc} showed that if the input consists of a set of squares $S$,
any disjoint layout of $S$ can be modified into a disjoint layout having same
orthogonality such that its width is arbitrarily close to an integer from the set
$\{1,\dots,|S|\}$. It can be seen that \Cref{lem:disc} can be extended in a
straightforward manner for a set of axis-parallel rectangles $R$ i.e.\ any
disjoint layout of $R$ can be modified into a disjoint orthogonal-order
preserving layout having a width that is arbitrarily close to an integer from the set
$\{W_{\max}, W_{\max} + 1, \dots ,W_R\}$, where $W_R = \sum\limits_{r \in R}w_r$.
We henceforth state \Cref{cor:disc} in the context of LADR as follows.

\begin{cor}
   \label{cor:gendisc}
   Let $W_R = \sum\limits_{r \in R}w_r$ and $H_R = \sum\limits_{r \in R}h_r$ be
   the sum of the widths and sum of the heights of all the rectangles in $R$,
   respectively. Then, any disjoint layout \lay of $R$ can be modified into an
   orthogonal order preserving layout \layp of $R$, such that $W(\layp) (\leq
   W(\lay))$ lies in the interval $[w, w + \varepsilon]$ and $H(\layp) (\leq
   H(\lay))$ lies in the interval $[h, h + \varepsilon]$, where $w \in
   \{W_{\max}, W_{\max} + 1, \dots,W_R\},\; h \in \{H_{\max}, H_{\max} + 1,
   \dots, H_R \}$ and $\varepsilon > 0$ is an arbitrarily small constant.
\end{cor}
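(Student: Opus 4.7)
The plan is to extend the compression argument behind \Cref{lem:disc} (and \Cref{cor:disc}) from unit squares to rectangles with arbitrary integer widths and heights. The only property of unit squares used in the original argument is that every side length is a positive integer; since every $w_r$ and $h_r$ is a positive integer by assumption, the same argument produces a layout whose width (respectively height) is within $\varepsilon$ of an integer, now drawn from $\{W_{\max}, \dots, W_R\}$ (respectively $\{H_{\max}, \dots, H_R\}$) instead of $\{1, \dots, |S|\}$.

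More concretely, I would apply a two-stage compression to $\lay$. In the first (horizontal) stage, process the $x$-rigidities of $\lay$ from leftmost to rightmost; for the $i$-th rigidity, shift all of its rectangles leftward by a common amount (preserving equality of $x$-coordinates within the rigidity) to the smallest $x$-coordinate consistent with (a) disjointness from already-placed rectangles, and (b) a strict slack of at least $\delta := \varepsilon / (2|R|)$ relative to the $x$-coordinate of the previous rigidity. Constraint (b) preserves the strict inequalities in the orthogonal ordering, while (a) combined with the common-shift rule maintains disjointness and the required equality constraints. The $y$-coordinates are untouched in this stage, and we immediately have $W(\layp) \leq W(\lay)$.

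Next, I would argue that the resulting width takes the claimed form. Setting the $\delta$-slacks temporarily to $0$ would place every rectangle's left edge at an integer (assuming $W_l(\lay)$ is integer, which we may arrange by translation, since all $w_r$ are integers and the process cascades from the left boundary); hence the total horizontal extent would equal an integer $w$. Reintroducing at most $|R|$ slacks of size $\delta$ each contributes at most $|R|\delta \leq \varepsilon$ of extra width, giving $W(\layp) \in [w, w + \varepsilon]$. The bound $w \geq W_{\max}$ is immediate because any disjoint layout must contain the widest rectangle. For $w \leq W_R$, observe that the rightmost edge is realized by a chain of pairwise-touching rectangles starting at the left boundary, whose widths sum to $w$; since each rectangle in the chain lies in $R$, this sum is at most $W_R$.

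Finally, I would run the analogous vertical compression on the result of the horizontal stage to produce $\layp$ with $H(\layp) \in [h, h + \varepsilon]$ for some $h \in \{H_{\max}, \dots, H_R\}$, while the horizontal discretization from the first stage is preserved. The main technical point to verify is that, in the absence of slack, the compressed positions are exactly integers; this reduces to an induction on the $x$-rank using the integrality of the widths and of the boundary, and is a direct generalization of the squares case, which is why the paper deems the extension straightforward.
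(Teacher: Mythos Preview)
Your proposal follows precisely the route the paper takes: the paper does not prove \Cref{cor:gendisc} independently but simply asserts that \Cref{lem:disc} ``can be extended in a straightforward manner'' to rectangles, and you are supplying the details it omits. So there is no methodological divergence to discuss.

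There is, however, a genuine gap in the key step. Your induction hinges on the claim that, once the $\delta$-slacks are removed, ``every rectangle's left edge is at an integer.'' This is exactly where the unit-square argument fails to generalize: rectangles in the same $x$-rigidity share a common \emph{center} but may have different widths, hence different left (and right) edges, and integrality cannot be maintained for all of them simultaneously. Concretely, take $A$ of width $3$ and $B$ of width $2$ in one $x$-rigidity (centers at $x=0$, with $y_A=0$ and $y_B=5$, all heights $1$), and $C$ of width $1$ in the next $x$-rigidity sharing the $y$-coordinate of $B$. Your compression pushes $C$ left until it touches $B$, so $x_C=(w_B+w_C)/2=3/2$; the resulting width is $(x_C+w_C/2)-(x_A-w_A/2)=2-(-3/2)=7/2$, not within $\varepsilon$ of any integer. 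The ``chain of pairwise-touching rectangles'' you invoke passes from $A$'s left edge through $B$ (same rigidity, different width) to $C$, contributing $(w_A+w_B)/2+w_C$ rather than a sum of full widths.

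This is not merely a defect of the greedy compression. In the example above, $7/2$ is the \emph{minimum} width over all disjoint layouts respecting that orthogonal order (since $B$ and $C$ lie in the same $y$-rigidity and must therefore be horizontally separated by at least $3/2$). Hence no modification with $W(\layp)\le W(\lay)=7/2$ can land in $[w,w+\varepsilon]$ for integer $w$ and small $\varepsilon$. The ``straightforward extension'' the paper alludes to therefore requires additional care---for instance, working with half-integers, or ruling out mixed-parity widths within a rigidity---and your argument inherits this issue rather than resolving it.
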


Using \Cref{cor:gendisc}, we know that for any disjoint layout \lay of $R$,
there is a corresponding disjoint layout \layp having the same orthogonal order
as \layp, whose height and width are arbitrarily close to an integer from a
known set of integers. Hence, we look at all disjoint orthogonality preserving
layouts in that range, and choose the one with the minimum area as our solution.

Given positive integers $w \in \{W_{\max}, W_{\max} + 1, \dots,W_R\}$,  
$h \in \{H_{\max}, H_{\max} + 1, \dots, H_R \}$ and $0 < \varepsilon < 1$,
we formulate as a LP the problem of whether there is an orthogonal order
preserving layout \lay with $W(\lay) \leq w + \varepsilon, H(\lay) \leq h +
\varepsilon$. Recall that a layout \lay assigns a location $\lay(r) = (x_r,
y_r)$ for the center of each rectangle $r \in R$. The variables of our linear
program are $\cup_{r \in R} \{x_r, y_r \}$.  For any two rectangles $r, r' \in
R$, $\il_x(r) < \il_x(r')$ implies that $x_r < x_r'$.
We add such a constraint for each pair of rectangles in $R$, both for
$x$-coordinate and $y$-coordinate of the layout. Similarly, we add the
constraint $x_r = x_r'$ for all pair of rectangles $r,r' \in R$ for which
$\il_x(r) = \il_x(r')$. These constraints ensure orthogonality is preserved in
the output layout. 

We now look at constraints that ensure disjointness of the output layout. Let
$r$ and $r'$ be two rectangles in the initial layout \il, having dimensions
$(w_r, h_r)$ and $(w_{r'}, h_{r'})$ respectively. We define $w(r, r') =
\frac{w_r + w_{r'}}{2}$ and $h(r,r') = \frac{h_r + h_{r'}}{2}$. Let
$x_{\texttt{diff}}(r, r') = 
     \begin{cases}
          x_r - x_{r'}, & \mbox{if } \ilx(r') \leq \ilx(r) \\
          x_{r'} - x_r, & \mbox{otherwise} 
       \end{cases}$.
We define $y_{\texttt{diff}}(r, r')$ analogously. If $r, r'$ are disjoint in
some layout, then either their $x$-projections or their $y$-projections are
disjoint in that layout. Equivalently, either the difference in $x$-coordinates
of the centers of rectangles $r, r'$ is at least $w(r,r')$, or the difference in
$y$-coordinates of the centers is at least $h(r, r')$. We thus get the following
LP:

\vspace{-0.3cm}
\begin{align}
    x_r & < x_{r'}  & \forall r,r' \in R: \ilx(r) < \ilx(r')
    \label[ineq]{ineq:gxless} \\ 
    x_r & = x_{r'}  & \forall r,r' \in R: \ilx(r) = \ilx(r')
    \label[ineq]{ineq:gxeq}\\ 
    y_r & < y_{r'}  & \forall r,r' \in R: \ily(r) < \ily(r')
    \label[ineq]{ineq:gyless}\\ 
    y_r & = y_{r'}  & \forall r,r' \in R: \ily(r) = \ily(r')
    \label[ineq]{ineq:gyeq}\\ 
    \left (x_{r'} + \frac{w_{r'}}{2} \right ) - 
      \left (x_r  - \frac{w_r}{2}    \right ) & 
      \leq w + \varepsilon & \forall r,r' \in R:
      \ilx(r) < \ilx(r') \label[ineq]{ineq:gwidth}\\ 
    \left (y_{r'} + \frac{h_{r'}}{2} \right ) -
      \left (y_r  - \frac{h_r}{2} \right )  & 
      \leq h + \varepsilon & \forall r,r' \in R:
    \ily(r) < \ily(r') \label[ineq]{ineq:gheight} \\ 
    \frac{x_{\texttt{diff}}(r, r')}{w(r,r')}   + 
    \frac{y_{\texttt{diff}}(r, r')}{h(r,r')}  & \geq 1 
    & \forall r,r' \in R \label[ineq]{ineq:gdisj} 
\end{align}

\Crefrange{ineq:gxless}{ineq:gyeq} model the orthogonal ordering requirement for
a layout, while \Crefrange{ineq:gwidth}{ineq:gheight} restrict the width and
height of the layout respectively. Since any two rectangles $r, r'$ in a disjoint layout
are separated by at least half the sum of their widths in the $x$-direction
($w(r,r')$) or at least half the sum of their heights in the $y$-direction
($h(r, r')$), \Cref{ineq:gdisj} ensures that every such layout is a valid
solution for the linear program. We incorporate the linear program into
\Cref{alg:LADRSolver} for solving LADR.

\begin{algorithm}[hbt]
    \caption{$ApproxLADR(R, \il)$}
    \label{alg:LADRSolver}
    \begin{algorithmic}[1]
    \setcounter{ALC@unique}{0} 
        \REQUIRE A set of rectangles $R$, and initial layout \il.
        \ENSURE  A disjoint layout that has the same orthogonal order as \il.
        \FOR {$w = W_{\max}$ to $W_R$}
            \FOR {$h = H_{\max}$ to $H_R$}
                \IF {LP stated in \Crefrange{ineq:gxless}{ineq:gdisj} is feasible}
                   \STATE $\lay_{w,h} \leftarrow $ Layout returned by solution
                   of LP. \label{lin:feasible}
                       \label{lin:lplay}
                       \IF {$\lay^{\min}$ is undefined \OR $A(\lay_{w,h}) < A(\lay^{\min})$}
                       \STATE $\lay^{\min} \leftarrow \lay_{w,h}$
                   \ENDIF
                \ENDIF
            \ENDFOR
        \ENDFOR
        \STATE Define $\lay(R) = 2 \cdot \lay_{\min}(R)
        \text{ i.e. } \lay(r) = \left( 2*\lay^{\min}_x(r),\,
            2*\lay^{\min}_y(r) \right),\, \forall r \in R$ \label{lin:round}
        \RETURN The layout $\lay$.
    \end{algorithmic}
\end{algorithm}

\begin{lemma}
    \label{thm:LADRRatio}
    $ApproxLADR(R, \il)$ returns a $4$-approximation for LADR.
\end{lemma}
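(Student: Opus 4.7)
The plan is to show that (i) for some integer pair $(w^*, h^*)$ tried by the algorithm, the LP is feasible with $w^* h^*$ essentially a lower bound on $A^{\min}$, and (ii) the final scaling step converts the fractional LP layout into a valid disjoint layout whose area is at most four times (plus lower-order terms) the area that the LP found.

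First I would argue feasibility of the LP for the ``correct'' pair. Take any disjoint, orthogonal-order preserving layout $\lay^*$ with $A(\lay^*) \le A^{\min} + \delta$. By \Cref{cor:gendisc}, there is a disjoint orthogonal-order preserving layout $\lay^{**}$ with $W(\lay^{**}) \in [w^*, w^* + \varepsilon]$ and $H(\lay^{**}) \in [h^*, h^* + \varepsilon]$ for integers $w^* \in \{W_{\max},\dots,W_R\}$, $h^* \in \{H_{\max},\dots,H_R\}$, and with $W(\lay^{**}) H(\lay^{**}) \le A(\lay^*)$. Since $\lay^{**}$ is disjoint, for every pair $r, r'$ the inequality $|x_r - x_{r'}| \ge w(r,r')$ or $|y_r - y_{r'}| \ge h(r,r')$ holds, which certainly implies \Cref{ineq:gdisj}; it also satisfies the orthogonal-ordering constraints \Crefrange{ineq:gxless}{ineq:gyeq} and the width/height constraints \Crefrange{ineq:gwidth}{ineq:gheight} at the pair $(w^*, h^*)$. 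Hence the LP is feasible at this iteration, so the algorithm sets some $\lay_{w^*,h^*}$ with $W(\lay_{w^*,h^*}) \le w^* + \varepsilon$ and $H(\lay_{w^*,h^*}) \le h^* + \varepsilon$.

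Next I would verify that after the doubling step on \Cref{lin:round} the output layout $\lay$ is disjoint and preserves orthogonal ordering. Orthogonal ordering follows immediately: the strict and equality constraints \Crefrange{ineq:gxless}{ineq:gyeq} on the $\lay^{\min}$ coordinates are preserved under uniform scaling by $2$, and they coincide with the orthogonal order of \il. For disjointness, fix any pair $r, r' \in R$. \Cref{ineq:gdisj} gives $\frac{x_{\mathtt{diff}}(r,r')}{w(r,r')} + \frac{y_{\mathtt{diff}}(r,r')}{h(r,r')} \ge 1$, so at least one of the two terms is $\ge 1/2$; after doubling the coordinates we get $|\lay_x(r) - \lay_x(r')| \ge w(r,r')$ or $|\lay_y(r) - \lay_y(r')| \ge h(r,r')$, which negates \Cref{ineq:disjoint}.

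Finally I would bound the area of the scaled layout in terms of $w^*, h^*$. Because scaling doubles the spans between centers but leaves rectangle sizes unchanged, a short calculation shows $W(\lay) \le 2\,W(\lay^{\min}) - W_{\min}^R \le 2(w^* + \varepsilon)$ and similarly $H(\lay) \le 2(h^* + \varepsilon)$; this is the one mildly delicate step, because the rectangles attaining the extremal left/right edges may change after scaling, so I would bound $\max_r(2x_r^{\min} + w_r/2)$ using $2x_r^{\min} + w_r/2 = 2(x_r^{\min}+w_r/2) - w_r/2$, and symmetrically for the minimum. Combining these bounds with $w^* h^* \le A^{\min} + \delta$, the output area satisfies
\[
A(\lay) \;\le\; 4(w^* + \varepsilon)(h^* + \varepsilon) \;\le\; 4\bigl(A^{\min} + \delta\bigr) + O(\varepsilon \cdot (w^* + h^*)) + O(\varepsilon^2).
\]
Since $\delta$ is arbitrary and $\varepsilon$ can be chosen arbitrarily small, the approximation ratio is $4$ (in the sense of $4(1+o(1))$ as announced in the introduction). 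The main obstacle in the argument is the disjointness-after-rounding step, which is the only place where the specific linear constraint \Cref{ineq:gdisj}—and in particular the choice of denominators $w(r,r')$ and $h(r,r')$—plays a crucial role; everything else is bookkeeping around the $\varepsilon$-slack inherited from \Cref{cor:gendisc}.
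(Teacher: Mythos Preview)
Your proof follows essentially the same approach as the paper's: establish LP feasibility at the near-optimal integer pair via \Cref{cor:gendisc}, verify that doubling the LP solution yields a disjoint orthogonal-order-preserving layout using \Cref{ineq:gdisj}, and then compare areas. You are in fact more careful than the paper on the ``mildly delicate step'' of bounding the bounding-box width after scaling; the paper simply asserts that the width and height at most double.

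There is one small slip worth flagging. You write $W(\lay) \le 2\,W(\lay^{\min}) \le 2(w^* + \varepsilon)$, but $\lay^{\min}$ need not be the layout from the iteration $(w^*,h^*)$; it could come from some other feasible pair $(w',h')$ with, say, larger width and smaller height. So $W(\lay^{\min}) \le w^* + \varepsilon$ is not justified as stated. The fix is to compare \emph{areas} rather than widths and heights separately: $A(\lay) \le 4\,A(\lay^{\min}) \le 4\,A(\lay_{w^*,h^*}) \le 4(w^*+\varepsilon)(h^*+\varepsilon)$, where the middle inequality holds because $\lay^{\min}$ is chosen to have minimum area among all feasible $\lay_{w,h}$. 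The paper's proof is equally loose at exactly this spot (it asserts $A(\lay_{w,h}) \le A(\lays)$ and that $\lay$ ``has at most twice the width and at most twice the height of $\lays$'' without justifying these for $\lay^{\min}$ specifically), so your argument is no worse than the original, and the repair is immediate.
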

\begin{proof}
    Let $\lay_{w,h}$ be any feasible layout returned by the LP in
    \Cref{lin:feasible}, for some value of $w, h$. Let $r, r'$ be two rectangles
    in $R$, and assume that $\ilx(r) > \ilx(r')$, $\ily(r) > \ily(r')$. (The
    other cases are symmetric).  By \Cref{ineq:gdisj}, either
    $\frac{x_{\texttt{diff}}(r, r')}{w(r,r')} \geq \frac{1}{2}$ or
    $\frac{y_{\texttt{diff}}(r, r')}{h(r,r')} \geq \frac{1}{2}$. Without loss of
    generality, assume its the former. Consider the layout $\lay = 2
    \lay_{w,h}$, as in \Cref{lin:round}.
    Hence, our assumption that $\frac{x_{\texttt{diff}}(r, r')}{w(r,r')} \geq
    \frac{1}{2}$ implies that $\lay_x(r) - \lay_x(r') = 2x_r - 2x_{r'} \geq
    w(r,r')$, which satisfies the criteria for disjointness in
    \Cref{ineq:disjoint}. Since the final layout \lay returned by the algorithm
    equals $2\cdot \lay_{w',h'}$ for some feasible layout $\lay_{w',h'}$, \lay
    is a disjoint layout that also satisfies the constraints for orthogonal
    ordering in \Crefrange{ineq:gxless}{ineq:gyeq}.

    Let \lays be any disjoint layout preserving the orthogonal ordering of \il.
    We may assume, by \Cref{cor:gendisc}, that its width is at most $w' +
    \varepsilon$ and its height is at most $h' + \varepsilon$, for some integers
    $w' \in \{W_{\max}, W_{\max} + 1, \dots,W_R\},\; h' \in \{H_{\max},
    H_{\max} + 1, \dots, H_R \}$ and some $\varepsilon > 0$. Consider the
    iteration of the inner for loop in \Cref{alg:LADRSolver} with $w = w'$ and
    $h = h'$.  Since \lays is a valid solution for the LP, the
    layout $\lay_{w,h}$ computed in \Cref{lin:lplay} (and hence $\lay^{\min}$)
    has an area that is less than or equal to that of \lays.  The algorithm
    \Cref{alg:LADRSolver} returns a layout $\lay(S)$ obtained by multiplying
    each of the coordinates in $\lay^{\min}$ by a factor of 2. Hence, the layout
    $\lay(S)$ has at most twice the width and at most twice the height of \lays,
    ensuring that $A(\lay) \leq 4 * A(\lays)$.
\end{proof}

We note that since $W_R, H_R$ are not polynomial in the input size, the
resultant algorithm is a pseudo-polynomial time algorithm. But by searching
across exponentially increasing value of widths, and thereby losing a small
approximation factor, we can obtain a $4(1 + o(1))$ polynomial time
approximation for LADR. We conclude by summarizing our result as follows: 

\begin{theorem}
   \label{lem:LADRApprox}
   There is a polynomial time algorithm that returns a $4(1 +
   o(1))$-approximation for LADR i.e.\ given a set of rectangles $R$ and an
   initial layout \il, it returns an orthogonal order preserving disjoint layout
   whose area is at most $4(1 + o(1))$ times the area attainable by any such
   layout.
\end{theorem}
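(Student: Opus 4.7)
The plan is to modify $ApproxLADR$ (\Cref{alg:LADRSolver}) by replacing the nested arithmetic enumerations over $w \in \{W_{\max}, \ldots, W_R\}$ and $h \in \{H_{\max}, \ldots, H_R\}$ with geometric enumerations, paying only a $(1+o(1))$ multiplicative overhead in the approximation ratio in exchange for cutting the number of LP invocations from pseudo-polynomial to polynomial. Fix a parameter $\delta = \delta(|R|) \to 0$ (to be chosen at the end) and use the sequences $w^{(i)} = W_{\max}(1+\delta)^i$ for $i = 0, 1, \ldots, \lceil \log_{1+\delta}(W_R/W_{\max}) \rceil$ and $h^{(j)}$ defined analogously. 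The LP in \Crefrange{ineq:gxless}{ineq:gdisj} is meaningful for arbitrary real $w, h$, so no rounding of the grid points is required. The grid has $O\!\left(\delta^{-2}\log(W_R/W_{\max})\log(H_R/H_{\max})\right)$ pairs, which is polynomial in the input size whenever $1/\delta$ is polynomial, since $\log W_R$ and $\log H_R$ are bounded by the bit length of the input. Each LP has $O(|R|)$ variables and $O(|R|^2)$ constraints and is solvable in polynomial time by standard methods.

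For the approximation analysis I would mimic the argument of \Cref{thm:LADRRatio} on the new grid. Given any disjoint orthogonal-order-preserving layout \lays, \Cref{cor:gendisc} produces integers $w^* \in \{W_{\max}, \ldots, W_R\}$ and $h^* \in \{H_{\max}, \ldots, H_R\}$ together with a layout of no greater area whose width and height lie in $[w^*, w^*+\varepsilon]$ and $[h^*, h^*+\varepsilon]$; in particular $A(\lays) \geq w^* h^*$. Take $w^{(i)}$ to be the smallest grid value with $w^{(i)} \geq w^*$ and $h^{(j)}$ analogously; then $w^{(i)} \leq (1+\delta)w^*$ and $h^{(j)} \leq (1+\delta)h^*$. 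Since \lays remains feasible for the LP at parameters $(w^{(i)}, h^{(j)})$, the algorithm obtains a layout of width at most $w^{(i)}+\varepsilon$ and height at most $h^{(j)}+\varepsilon$, and the doubling step of \Cref{lin:round} yields a disjoint layout of area at most $4(w^{(i)}+\varepsilon)(h^{(j)}+\varepsilon) \leq 4(1+\delta)^2(w^*+\varepsilon)(h^*+\varepsilon)$. Dividing by $A(\lays) \geq w^* h^*$ and using $w^*, h^* \geq 1$ gives the ratio bound $4(1+\delta)^2 (1+\varepsilon)^2$; choosing, e.g., $\delta = 1/\log|R|$ and shrinking $\varepsilon$ sufficiently drives this to $4(1+o(1))$ while keeping the running time polynomial.

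The only technical subtlety I expect is the strict inequalities in \Crefrange{ineq:gxless}{ineq:gyless} and \Cref{ineq:gdisj}, which a standard LP solver cannot use verbatim. The usual remedy is to replace each strict inequality by a non-strict one with a tiny positive slack $\eta$ (e.g., $x_r + \eta \leq x_{r'}$), chosen small enough that neither the bounding-box estimates $W(\lay_{w,h}) \leq w+\varepsilon$, $H(\lay_{w,h}) \leq h+\varepsilon$ nor the $(1+o(1))$ approximation factor are affected. Beyond this, the argument is a routine adaptation of the pseudo-polynomial analysis of \Cref{thm:LADRRatio}, so I anticipate no essential obstacle beyond bookkeeping.
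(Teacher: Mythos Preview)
Your proposal is correct and follows exactly the approach the paper itself indicates: the paper's own ``proof'' of \Cref{lem:LADRApprox} is the single sentence preceding the theorem, which says that one should search over exponentially increasing widths at the cost of a small multiplicative loss, and you have simply spelled out that sketch in full. Your treatment of the strict-inequality constraints via a slack parameter is a detail the paper leaves implicit, but it is standard and introduces no difficulty.
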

We note that our approach can also be used to get a $2(1 + o(1))$ approximation
for the problem of finding a layout of rectangles that minimizes the perimeter. 

\bibliographystyle{splncsnat}
\bibliography{iladr}
\appendix
\section{Appendix}
\subsection{Proofs for \Cref{sec:reduction}}
\label{app:claims}

\forward*
\begin{proof}
   For any $p \in P$, we define $\rho(p): P \rightarrow \Int^2$ as $\rho(p) =
   (i, j)$ where $i$ is the number of vertical lines in $L$ to the left of $p$,
   and $j$ is the number of horizontal lines in $L$ below $p$. We note that the
   function $\rho: P \rightarrow \Int^2$ is one-to-one, as otherwise, the
   segment corresponding to two points $p$ and $p'$ with $\rho(p) = \rho(p')$
   would not be hit by any line in \Lin. For convenience, if $\rho(p) = (i,j)$,
   we denote $\rho_x(p) = i$ and $\rho_y(p) = j$.

   We first consider the placement $\lay(p) = (i + \frac{1}{2}, j +
   \frac{1}{2})$, where $ i = \rho_x(p)$ and $j = \rho_y(p)$. This ensures
   disjointness, but not orthogonality - within a single column (or row) in \lay,
   there could be orthogonality violations w.r.t \il due to all endpoints having
   the same $x$ (or $y$) coordinate. This is fixed as follows.

   We define $P_x(i) = \{ p \in P \mid \rho_x(p) = i \}$, which represents the
   $i$-th column of the hitting set. We similarly define $P_y(j) = \{ p \in P
   \mid \rho_y(p) = j \}$.  Let $\alpha_i(p)$ denote the $x$-rank of $p$ within
   $P_x(i)$, and similarly let $\beta_j(p)$ denote the $y$-rank of $p$ within
   $P_y(j)$. For some $\delta > 0$, we define the layout \layp for a point $p$
   with $\rho(p) = (i, j)$ as:
   \begin{equation}
       \label{eq:layout}
       \layp(p) = \left ( i(1 + \delta) + \frac{1}{2} + \delta \cdot \frac{\alpha_i(p)}{n}, 
                  \ j(1 + \delta) + \frac{1}{2} + \delta \cdot \frac{\beta_j(p)}{n} \right )
   \end{equation}

   We observe that the maximum value of $\alpha_i(p)$ (or $\beta_j(p)$) is $n$.
   Thus, for any 2 points $p, p'$ in consecutive columns i.e.\ 
   $p \in P_x(i), p' \in P_x(i+1)$, $|\layp_x(p) - \layp_x(p')| \geq 1$,
   making the corresponding unit squares disjoint. Any 2 points in consecutive rows
   are similarly placed more than unit distance apart, which establishes the
   disjointness of the layout.

   We now show that \il and \layp have the same orthogonal ordering. Let $p, p'
   \in P$ be two unit squares and assume $ \il_x(p) \leq \il_x(p') $.
   If $ \rho_x(p) < \rho_x(p')$, it is clear that $ \layp_x(p)
   < \layp_x(p') $ as desired. If $ \rho_x(p) = \rho_x(p') = i$, then it is easy
   to see that $\il_x(p) < \il_x(p') \Rightarrow \alpha_i(p) < \alpha_i(p')
   \Rightarrow \layp_x(p) < \layp_x(p')$ and $\il_x(p) = \il_x(p') \Rightarrow
   \alpha_i(p) = \alpha_i(p') \Rightarrow \layp_x(p) = \layp_x(p')$. Similar
   reasoning applies to the $y$-coordinates.

   Since $L$ has at most $c$ vertical lines ($c \leq n - 1$), for any $p, p' \in
   P$, $|\layp_x(p) - \layp_x(p')| \leq c(1 + \delta) + \delta = c +
   (c+1)\delta$. Since this is the maximum difference of
   $x$-coordinates between any two centers of unit squares in the layout,
   $W(\layp) \leq c + 1 + (c+1)\delta = w + (c + 1)\delta$. Putting the value of
   $\delta = \frac{\varepsilon}{n}$, we have $W(\layp) \leq w + \varepsilon$.
   Similarly, we can show that $H(\layp) \leq h + \varepsilon$, thus concluding
   the proof.
\end{proof}

\discrete*
\begin{proof}
   Let $s_i$ denote the set of squares in \lay with $x$-rank $i$, and let
   $\delta > 0$ be a parameter. We create a modified layout \layp  by
   ``compressing'' the layout \lay horizontally while keeping the
   $y$-coordinates untouched. We set $\layp_x(s_1) = 1$, and assign the
   $x$-coordinates of the remaining squares in ascending order of their
   $x$-ranks in \lay. We claim that it is possible to place squares in $s_i$ 
   such that $\layp_x(s_i) \in [b, b + (i - 1) \cdot \delta]$, where $0 \leq  b
   \leq i - 1 \in \Intp$.

   For the base case, we place the squares in $s_2$. If the projection of $s_2$
   on the $y$-axis is not disjoint from that of $s_1$, then set $\layp_x(s_2)
   = \layp_x(s_1) + 1$. If not, set $\layp_x(s_2) = \layp_x(s_1) +
   \delta$. In both cases, $s_1$ and $s_2$ do not satisfy \Cref{ineq:disjoint}
   in layout \layp, and hence are disjoint in \layp.

   Assume inductively that for $j \leq i - 1$, we have defined $\layp_x(s_j)$
   such that $\layp_x(s_j) \in [b, b + (j - 1) \cdot \delta]$ for some $0 \leq b
   \leq j-1$. We now place $s_i$. Set $\layp_x(s_i) = \infty$ initially.
   Consider the $y$-projections of $s_i$ and $s_{i-1}$ in \lay. If they are not
   disjoint, then set $\layp_x(s_i) = \layp_x(s_{i-1}) + 1$.  If not, $s_i$ can
   be translated to the left till either $\layp_x(s_i) = \layp_x(s_{i-1}) +
   \delta$ or it touches some square in $s_{i'}, i' < i$, whichever happens
   first. In the former case, $\layp_x(s_i) \in [b, b + (i-1) \cdot \delta]$ and
   $b \leq i-1$. In the latter case, $s_{i'}$ satisfies the induction hypothesis
   and hence $\layp_x(s_i) ( = \layp_x(s_{i'}) + 1)$ is in the interval $[b' +
   1, b' + 1 + (i'-1) \cdot \delta]$, where $b' \leq i' - 1 \leq i - 1$. Hence,
   in all cases, $\layp_x(s_i)$ satisfies the induction claim, and hence we
   claim it is true for all $i \leq n$. 

   It remains to bound the total width of the layout \layp. Since $\layp_x(s_1)
   = 1$ and $\layp_x(s_i) \leq b + i \cdot \delta$ for some integer $b \leq i -
   1$, it follows that the width of the layout is in $[b, b + n \cdot \delta]$
   for some integer $b$. Setting $\delta = \frac{\varepsilon}{n}$ concludes the
   proof. 
\end{proof}

\reverse*
\begin{proof}
   Assume without loss of generality that \layp is contained in a rectangle $[0,
   w + \varepsilon] \times [0, h + \varepsilon]$ in the plane. We consider the
   following set of lines:
   \[ x = i + \frac{1}{2} + \delta, \; \text{for } i = 1,2,\dots, w-1 \qquad
      \text{and} \qquad 
    y = j + \frac{1}{2} + \delta, \; \text{for } j = 1,2,\dots, h-1. \]  
   We pick a $\delta < \frac{1}{2}$ so that none of these lines contain a rectangle
   center. We now show that the resultant set of lines $L$ constitute a hitting set
   for \Seg. 

   Let $\seg \in \Seg$ be an arbitrary line segment, whose endpoints $p, p'$ are
   the centers of unit squares $s, s'$ in the disjoint layout \layp. Since $s,
   s'$ do not intersect, then either $|\layp_x(s) - \layp_x(s')| \geq 1$ or
   $|\layp_y(s) - \layp_y(s')| \geq 1$. We assume without loss of generality its
   the former. We know that $0 \leq \layp_x(s), \layp_x(s') \leq w + \varepsilon$ and
   none of the vertical lines in $L$ pass through $p, p'$. Since successive
   vertical lines are unit distance apart, there must be at
   least one vertical line $ x = \frac{i'}{2} + \delta$ that lies between $p,
   p'$. We can argue similarly if $|\layp_y(s) - \layp_y(s')| \geq 1$ using the
   set of horizontal lines in $L$. Thus, \Seg is hit by at least one line in
   $L$, making the latter a hitting set consisting of at most $c = w - 1$
   vertical lines and $r = h - 1$ horizontal lines.  
\end{proof}

\subsection{Proof of \Cref{th:inapproxHS}}
\label{App:uhs}

\inapproxHS*
\begin{proof}
Suppose there is a polynomial time $(1+\con)$-factor approximation algorithm for
UHS for some $\con \leq \apconst\gamma$. Now consider the reduction $\pi$. For
the instance of UHS obtained from an instance of 5-OCC-MAX-3SAT in which all the
clauses are satisfied, there is a hitting set of size at most $k$; so the
approximation algorithm finds a hitting set of size at most $(1+\con)k$. For any
instance obtained from an instance of 5-OCC-MAX-3SAT in which less than
$1-\gamma$ fraction of the clauses are satisfied, needs more than
$(1+\apconst\gamma)k \geq (1+\con)k$ lines. Thus using this algorithm we can
distinguish between an instance of 5-OCC-MAX-3SAT consisting of the clauses all
of which can be satisfied, and one in which less than $1-\gamma$ fraction of the
clauses can be satisfied. Hence from \Cref{th:max3sat} it follows that \PNP,
which completes the proof of this theorem.
\end{proof}

\subsection{Reduction of 5-OCC-MAX-3SAT into Hitting Set Problem}\label{App:reduc}
A set \Lin of horizontal and vertical lines is said to \textit{separate} a set
$P$ of points, if for each point $p\in P$, there is a 2-dimensional cell
(possibly unbounded) in the arrangement of the lines in \Lin that contains $p$
and no other points of $P$.

Recall that for any set of points $P$, $\Seg(P)$ is the set of segments induced
by $P$. Suppose a set \Lin of horizontal and vertical lines hits all the
segments of $\Seg(P)$. By definition a line hits a segment, if it passes through
the interior of the segment, but does not intersect either endpoint of the
segment. Thus if we perturb each line in \Lin that passes through a segment
endpoint, then no line in \Lin intersect any segment endpoint and \Lin still
remains a hitting set for $\Seg(P)$. Henceforth, by a hitting set of lines \Lin
for $\Seg(P)$ we mean the lines in \Lin hit the segments in $\Seg(P)$, but no
line in \Lin intersects any point of $P$. Now our claim is that if \Lin is a
hitting set for $\Seg(P)$, \Lin separates $P$. If not, then there exists a cell
containing two points, and the corresponding segment is not being hit by any
line, which cannot be true. Conversely, if \Lin separates the points of $P$,
then all the segments of $\Seg(P)$ are being hit. The notion of separation will
help us to simplify our arguments for bounding the minimum number of lines
required to hit all the segments induced by a point set. Thus from now onwards,
we use the notion of hitting and separation interchangeably.

For any set of segments $S$ and a hitting set of lines $L$ for $S' \supseteq S$,
consider the subset of $L$ consisting of every line that hits at least one
segment of $L$; denote its cardinality by $N_L(S)$. For any set of segments $S$,
denote the minimum number of lines needed to hit all segments of $S$ by $N(S)$.
Given a set of lines \Lin, a point set $P_1$ is said to be separated from
another point set $P_2$, if each segment $(p,p')$ is hit by some line in
\Lin, where $p \in P_1, p' \in P_2$. For any two point sets $P_1$ and $P_2$,
denote the minimum number of lines needed to separate $P_1$ from $P_2$ by
$N(P_1,P_2)$.

For any set $P$ of points, define $x$-span of $P$ to be the interval
$[x_{\min},x_{\max}]$ on real line, where $x_{\max}$ and $x_{\min}$ are the
maximum and minimum among the $x$-coordinates of the points of $P$. Similarly,
define $y$-span of $P$ corresponding to the $y$-coordinates of its points. Now
we have the following lemma. 

\begin{lemma}\label{lem:span}
Suppose $P_1,\ldots,P_l$ are point sets with pairwise disjoint $x$-spans and
pairwise disjoint $y$-spans. Then for any hitting set of lines $L$ for a set of
segments $S\supseteq \bigcup_{i=1}^l \Seg(P_i)$, $$N_L(\bigcup_{i=1}^l
\Seg(P_i))=\sum_{i=1}^l N_L(\Seg(P_i))$$
\end{lemma}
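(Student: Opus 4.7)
The plan is to show that every line of $L$ which hits at least one segment in the union $\bigcup_{i=1}^l \Seg(P_i)$ can hit segments belonging to only one of the individual $\Seg(P_i)$. Once this is established, the set of lines contributing to $N_L(\bigcup_{i=1}^l \Seg(P_i))$ partitions naturally into the sets contributing to each $N_L(\Seg(P_i))$, and additivity of cardinality gives the claimed identity.

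First I would handle vertical lines. Suppose $\ell : x = a$ lies in $L$ and hits some segment $\phi = (p,p') \in \Seg(P_i)$. By the definition of ``hit,'' $\ell$ meets the relative interior of $\phi$ but avoids the endpoints, which forces $a$ to lie strictly between the $x$-coordinates of $p$ and $p'$. Both of these coordinates lie in the $x$-span of $P_i$, so $a$ lies in the $x$-span of $P_i$. By the hypothesis that the $x$-spans of $P_1,\ldots,P_l$ are pairwise disjoint, $a$ cannot also lie in the $x$-span of any $P_j$ with $j \neq i$. But any segment of $\Seg(P_j)$ has both endpoints in $P_j$ and hence $x$-coordinates within that span; since $a$ is outside it, the line $\ell$ cannot pass through the interior of any such segment. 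Thus $\ell$ hits segments in $\Seg(P_i)$ for at most one index $i$. The symmetric argument, with the $y$-span hypothesis replacing the $x$-span one, shows the same for any horizontal line of $L$.

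Combining these two cases, each line of $L$ that hits at least one segment in $\bigcup_{i=1}^l \Seg(P_i)$ contributes to exactly one of the counts $N_L(\Seg(P_i))$. Hence the sets $L_i := \{\ell \in L : \ell \text{ hits some } \phi \in \Seg(P_i)\}$ are pairwise disjoint, and $\bigcup_{i=1}^l L_i = \{\ell \in L : \ell \text{ hits some } \phi \in \bigcup_{i=1}^l \Seg(P_i)\}$. Taking cardinalities yields exactly $N_L(\bigcup_{i=1}^l \Seg(P_i)) = \sum_{i=1}^l N_L(\Seg(P_i))$.

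The argument is structurally short; the only point that requires any care is interpreting ``hit'' strictly — the hitting coordinate must lie in the \emph{open} interval between two endpoint coordinates — which is why disjointness of the (closed) spans is already enough to separate the relevant hitting lines. Note also that the hypothesis $S \supseteq \bigcup_{i=1}^l \Seg(P_i)$ and the fact that $L$ is a hitting set for $S$ play no active role beyond letting us speak of $N_L$; the lemma is essentially a geometric separability fact about the span configuration.
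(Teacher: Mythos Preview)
Your proof is correct and follows essentially the same approach as the paper's: both define the sets $L_i$ of lines hitting some segment of $\Seg(P_i)$, and both argue pairwise disjointness of the $L_i$ by observing that a vertical (resp.\ horizontal) line hitting a segment of $\Seg(P_i)$ must have its defining coordinate inside the $x$-span (resp.\ $y$-span) of $P_i$, which by the disjoint-span hypothesis excludes it from hitting segments of any other $\Seg(P_j)$. Your added remarks about the strict meaning of ``hit'' and the passive role of the hypothesis $S \supseteq \bigcup_i \Seg(P_i)$ are accurate and do not diverge from the paper's argument.
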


\begin{proof}
Let $L_i$ be the subset of $L$ consisting of every line that hits at least one
segment in $\Seg(P_i)$ for $i=1,\ldots,l$. Note that it is sufficient to show
that for any $i,j \in \{1,\ldots,l\}$ such that $i\neq j$, $L_i\cap L_j=\phi$.
Consider any horizontal line $y=a$ of $L_i$. $a$ must lie in $y$-span of $P_1$.
As $y$-spans of $P_i$ and $P_j$ are disjoint, this line cannot hit any segment
of $\Seg(P_j)$, and thus cannot belong to $L_j$. Similarly, any horizontal line
of $L_j$ cannot hit any segment of $\Seg(P_i)$, and thus cannot belong to $L_i$.
Now consider any vertical line $x=b$ of $L_i$. $b$ must lie in $x$-span of
$P_i$. As $x$-spans of $P_i$ and $P_j$ are disjoint this line cannot hit any
segment of $\Seg(P_j)$, and thus cannot belong to $L_j$. In a similar way, any
vertical line of $L_j$ cannot hit any segment of $\Seg(P_i)$, and thus cannot
belong to $L_i$. Hence $L_i\cap L_j=\phi$.
\end{proof}

\begin{figure}
    \centering
    \hspace{-40mm}
    \begin{subfigure}[b]{0.3\textwidth}
        \centering
        \includegraphics[height=40mm]{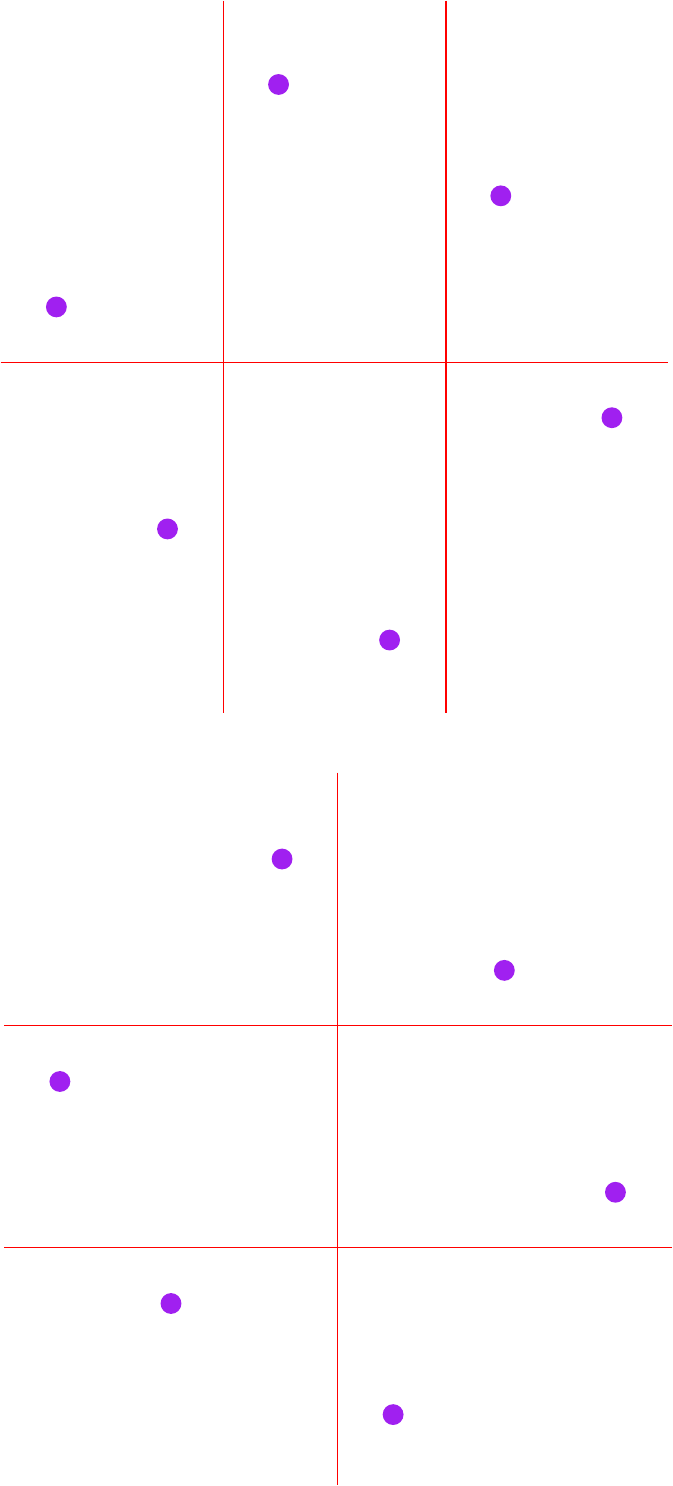}
        \caption{$ $}
        \label{fig:fig1}
    \end{subfigure}
    \hspace{1mm}
    \begin{subfigure}[b]{0.3\textwidth}
        \centering
        \includegraphics[height=40mm]{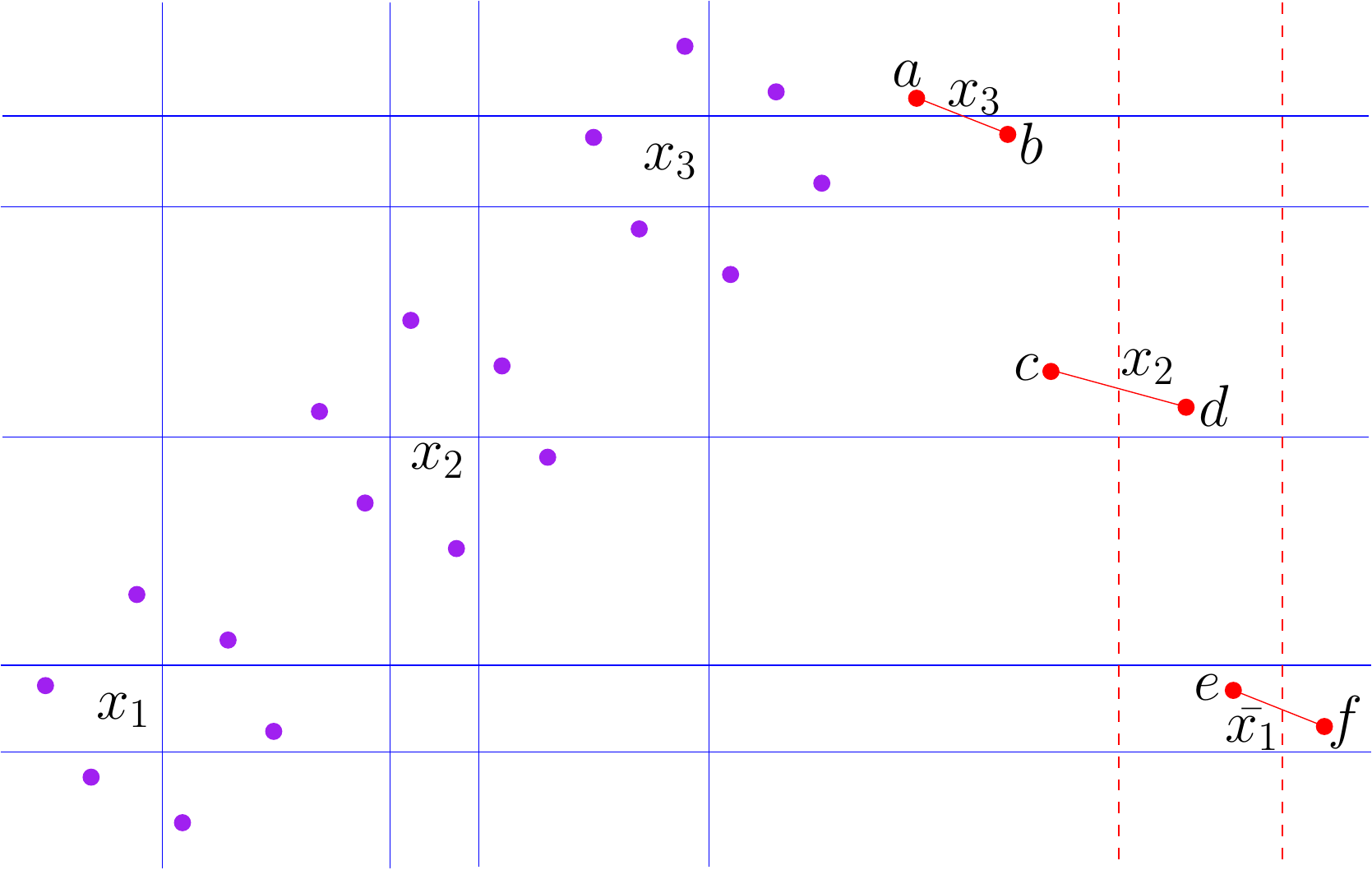}
        \caption{$ $}
        \label{fig:clause}
    \end{subfigure}
    \caption{\textit{(a) Optimal configurations of hitting set for variable
        ($x_i$) points. The top one corresponds to $x_i=0$ and the bottom one
        corresponds to $x_i=1$. (b) The points corresponding to the clause $x_3
        \lor x_2 \lor \bar{x_1}$. The hitting set for variable gadgets
    corresponds to the assignment $x_1=1, x_2=0, x_3=1$}}
    \label{fig:short}
\end{figure}

Now we discuss the construction of an instance of UHS
from a given instance of 5-OCC-MAX-3SAT. We construct a gadget that consists of
three classes of points - variable points, clause points and auxilliary points.
\paragraph*{Variable Points} For each variable we take a translate of the six
points in $I=\{(1,4),(2,2)$, $(3,6), (4,1),(5,5),(6,3)\}$. We note that three
lines are necessary and sufficient to separate these six points. To be precise
there are two optimal choices one involving a horizontal line $y=a$, where $4 <
a < 5$ and the other involving a horizontal line $y=b$, where $3 < b < 4$. See
\Cref{fig:fig1} for an illustration. A set of points $U'$ is called
$(p,q)$-translate of a point set $U$, if $U'$ is obtained by translating the $x$
and $y$-coordinates of the points in $U$ by $p$ and $q$ respectively. For each
variable $x_i$ we take a set $V_i$ which is $((6(i-1),6(i-1))$-translate of $I$,
where $1\leq i\leq n$. Let $V=\cup_{i=1}^n V_i$. 

We note that the $x$-spans (resp. $y$-spans) of the sets $V_i$ for $1\leq i\leq
n$ are pairwise disjoint. With respect to $V_i$ for $1\leq i\leq n$, let us call
a horizontal line a Type 0 (resp. Type 1) line, if it has the form $y=b$ for
$6(i-1)+3 < b < 6(i-1)+4$ (resp. $y=a$ for $6(i-1)+4 < a < 6(i-1)+5$). Now we
have the following observation.

\begin{obs}\label{obs:var}
Three lines are necessary and sufficient to hit all the segments induced by
$V_i$, i.e $N(\Seg(V_i))=3$, where $1\leq i\leq n$. In addition, there are
exactly two types of optimal hitting sets for $\Seg(V_i)$ - one that uses a Type
1 line, but no Type 0 line, and the other that uses a Type 0 line, but no Type 1
line.
\end{obs}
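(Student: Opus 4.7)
The plan is to prove the observation by a direct combinatorial case analysis on the number of horizontal versus vertical lines in a hitting set. Since the six points of $V_i$ must lie in six distinct cells of the line arrangement (by the separation-equals-hitting equivalence established above), and a set of $h$ horizontal and $v$ vertical lines in general position induces exactly $(h+1)(v+1)$ cells, I first bound the total. With at most $2$ lines we obtain at most $\max\{2\cdot 2, 1\cdot 3, 3\cdot 1\}=4$ cells, which is strictly less than $6$, establishing that three lines are necessary. For $h+v=3$, only the two splits $(h,v)=(1,2)$ and $(h,v)=(2,1)$ produce $(h+1)(v+1)=6$ cells; the extreme choices $(0,3)$ and $(3,0)$ give only $4$ cells and are infeasible. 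Therefore any optimal hitting set has exactly one horizontal and two vertical lines, or exactly two horizontal and one vertical line. Moreover, since each case produces exactly six cells, the six points must be distributed one per cell.

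Next I analyze each of the two feasible splits, using the coordinates of $I=\{(1,4),(2,2),(3,6),(4,1),(5,5),(6,3)\}$ (the translation to $V_i$ is irrelevant). In the $(h,v)=(1,2)$ case, the single horizontal line $y=y_0$ partitions the points into two groups whose sizes must each be at most $3$, for otherwise two points of the larger group would share a column; so the split is exactly $3\!+\!3$. Since the six $y$-coordinates are the distinct integers $1,\dots,6$, the only value of $y_0$ producing a $3\!+\!3$ split lies in $(3,4)$, which is precisely a Type $0$ line. I will then verify that the two vertical lines can be chosen in $(2,3)$ and $(4,5)$, using that the $x$-coordinates below and above the horizontal line interleave appropriately ($\{2,4,6\}$ below and $\{1,3,5\}$ above), so a valid optimum of this form exists and contains no Type $1$ line.

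By an analogous argument for $(h,v)=(2,1)$, the vertical line $x=x_0$ must split the six points into a $3\!+\!3$ partition, forcing $x_0\in(3,4)$. The two remaining horizontal lines must separate the $y$-coordinates $\{2,4,6\}$ on the left from the interleaving $\{1,3,5\}$ on the right; the only feasible placements are in the intervals $(2,3)$ and $(4,5)$. The line in $(4,5)$ is Type $1$, and no line falls in $(3,4)$, so no Type $0$ line appears in this family. Combining both cases yields exactly two "types" of optimal hitting sets: the $(1,2)$ family contains a Type $0$ line and no Type $1$ line, while the $(2,1)$ family contains a Type $1$ line and no Type $0$ line.

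The analysis itself is routine — the main thing to be careful about is ruling out non-balanced splits cleanly, which I will do by the pigeonhole argument on columns (or rows) within each side of the splitting line. Once that is in place, enumerating the admissible horizontal cut values and checking the interleaving of $x$-coordinates (and symmetrically the vertical cut and $y$-coordinates) finishes the proof; no subtle obstruction arises because the coordinate set $\{1,\dots,6\}$ appears once on each axis, forcing the splits and the resulting line positions to be essentially unique up to the slack intervals described.
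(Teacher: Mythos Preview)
Your argument is correct and complete. The paper itself does not prove this statement: it is recorded as an \emph{Observation}, justified only by the remark that ``there are two optimal choices, one involving a horizontal line $y=a$ where $4<a<5$ and the other involving a horizontal line $y=b$ where $3<b<4$,'' together with a reference to Figure~1. Your cell-counting argument --- that $(h+1)(v+1)\ge 6$ forces $h+v\ge 3$ with only $(h,v)\in\{(1,2),(2,1)\}$ viable, followed by the pigeonhole step that forces a $3{+}3$ split across the lone line and thereby pins down all line positions --- is precisely the verification the paper leaves to the reader, and it recovers exactly the same two families of optimal hitting sets.
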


We will associate the first choice of hitting set for $\Seg(V_i)$ with the
assignment $x_i=1$ and the second choice with $x_i=0$. Now we have the following
lemma.

\begin{lemma}\label{lem:var}
$4n-1$ lines are sufficient to hit all segments of $\Seg(V)$, where the segments
of $\Seg(V_i)$ are being hit using 3 lines for $1\leq i\leq n$.
\end{lemma}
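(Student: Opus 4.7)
The plan is to exhibit $4n-1$ lines and check that they hit every segment of $\Seg(V)$. First, I would invoke \Cref{obs:var} to pick, for each $i \in \{1, \ldots, n\}$, an optimal 3-line hitting set $L_i$ for $\Seg(V_i)$; either the Type 0 or the Type 1 choice works (so in particular one may take every $L_i$ to be of Type 1, which will correspond later to the assignment $x_i = 1$). This accounts for $3n$ lines and, by construction, hits every segment of $\Seg(V)$ whose endpoints lie in a common $V_i$.

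The remaining segments are those with endpoints in distinct $V_i$ and $V_j$; to hit these I would exploit the diagonal placement of the gadgets. Since $V_i$ is the $(6(i-1), 6(i-1))$-translate of $I \subseteq [1,6]^2$, every point of $V_i$ has both coordinates in $[6(i-1)+1,\, 6(i-1)+6]$. Hence for each $i \in \{1, \ldots, n-1\}$ the horizontal line $\ell_i : y = 6i + \tfrac{1}{2}$ lies strictly above every point of $V_1 \cup \cdots \cup V_i$ (whose $y$-coordinates are at most $6i$) and strictly below every point of $V_{i+1} \cup \cdots \cup V_n$ (whose $y$-coordinates are at least $6i+1$). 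Adding the $n-1$ lines $\ell_1, \ldots, \ell_{n-1}$ to $\bigcup_i L_i$ yields a family of $3n + (n-1) = 4n-1$ lines, as required.

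To verify correctness, consider any $\seg(p,p') \in \Seg(V)$. If $p, p' \in V_i$ for a common $i$, then by choice of $L_i$ some line of $L_i$ hits $\seg(p,p')$. Otherwise, say $p \in V_i$ and $p' \in V_j$ with $i < j$; then the $y$-coordinate of $p$ is at most $6i$ while the $y$-coordinate of $p'$ is at least $6(j-1)+1 \geq 6i+1$, so $\ell_i$ strictly separates them and therefore hits $\seg(p,p')$ in its relative interior. The half-integer offset in $\ell_i$ also guarantees that no separator passes through a point of $V$ (every coordinate in $V$ is an integer), so $\ell_i$ genuinely hits, rather than merely touches, each such segment.

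I do not anticipate a real obstacle: the lemma is an upper bound only, the diagonal placement of the $V_i$'s was engineered precisely so that a single horizontal separator in each of the $n-1$ gaps takes care of all inter-gadget pairs simultaneously, and the per-gadget bound of three lines is handed to us by \Cref{obs:var}. The only minor care needed is to check that the $\ell_i$'s avoid the point set $V$ and that the $y$-spans of the $V_k$'s are arranged in the required monotone fashion, both of which follow immediately from the explicit translation defining $V_i$.
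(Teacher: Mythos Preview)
Your proposal is correct and follows essentially the same approach as the paper: invoke \Cref{obs:var} to get three lines per $V_i$ (for $3n$ lines), then add one separating line in each of the $n-1$ gaps between consecutive gadgets, exploiting the diagonal placement so that a single line suffices per gap. Your argument is in fact more explicit than the paper's---you specify concrete horizontal separators $y = 6i + \tfrac12$ and verify the hitting condition by a case split---whereas the paper simply asserts $N(V_i, V_{i+1}) = 1$ and sums.
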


\begin{proof}
Note that separating $V$ involves separating $V_i$ from $V_{i+1}$ for each
$1\leq i\leq n-1$, and separating points of $V_i$ for each $1\leq i\leq n$. By
\Cref{obs:var} three lines are sufficient to separate the points of $V_i$. Also
$N(V_i,V_{i+1})=1$, as any vertical or horizontal line between $V_i$ and
$V_{i+1}$ separates one from the other. Thus in total $3n+n-1=4n-1$ lines are
sufficient.
\end{proof}

Given any hitting set for $\Seg(V)$ with at most $4n-1$ lines we want to
construct an assignment of the boolean formula $\phi$. To be precise we want to
ensure that always 3 lines are used to separate $V_i$, if at most $4n-1$ lines
are used to separate $V$. This ensures that one of the two configurations of 3
lines are used to separate $V_i$. Then we can assign binary values to the
variables depending on the configuration used. \Cref{fig:cntrexmpl} shows an
example where $4n-1$ lines are used to separate $V$, but $4$ lines are used to
separate one of the $V_i$.
\begin{figure*}[h]
 \centering
    \begin{minipage}[c]{0.4\textwidth}
    \centering
  \includegraphics[width=60mm]
    {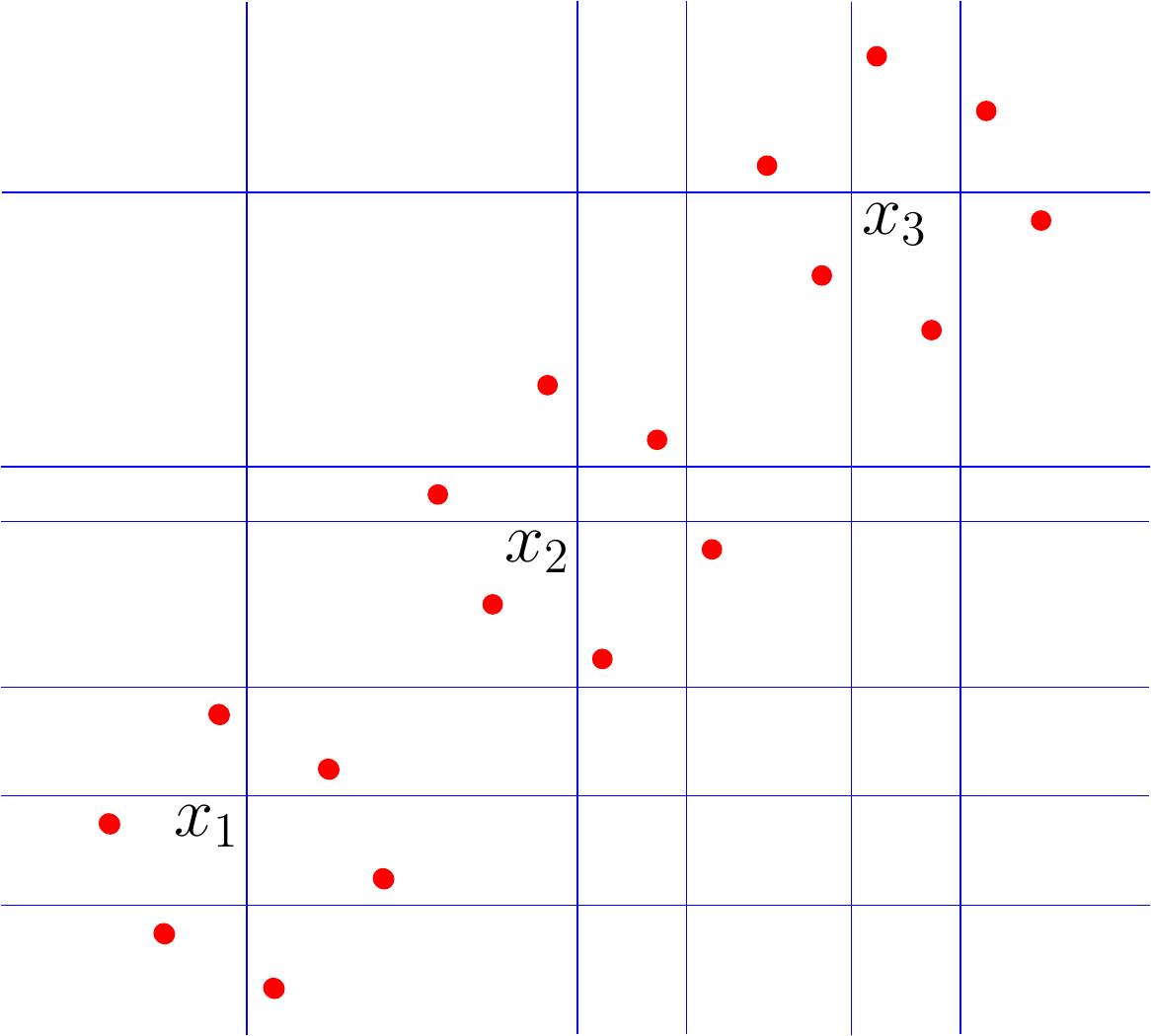}\\     
  \end{minipage}    
  \caption{\textit{11 lines are used to separate $V$, but 4 lines are used to separate $V_2$}}
  \label{fig:cntrexmpl}  
\end{figure*}
To ensure exactly 3 lines are used to separate each $V_i$, we can make sure that
$n-1$ lines between $V_i$ and $V_{i+1}$ are always picked up, for $1\leq i\leq
n-1$. Then as 3 lines are needed to separate each $V_i$, exactly 3 lines are
used for each $V_i$, if $4n-1$ lines are used in total. Later we show that a
subset of the auxilliary points will enforce this constraint.

\paragraph*{Clause Points} For each clause $C_j$ we take a set $T_j$ of ten
points that is a
union of two sets $T_j^1$ and $T_j^2$. $T_j^1$ consists of a translate of the
four points in $J=\{(2,-1), (5,-2), (6,-3), (9,-4)\}$. To be precise $T_j^1$ is
$((6n+1+10(j-1),4(j-m))$-translate of $J$. In \Cref{fig:fig2} $p,q,r$ and $s$
are the points of $T_j^1$ corresponding to the clause $x_3 \lor x_2 \lor
\bar{x_1}$. The points in $T_j^2$ depend on the literals of $C_j$. Let $u_1,
u_2$ and $u_3$ are the literals of $C_j$. For each literal we take two points.
The $x$-coordinates of the points corresponding to $u_1, u_2$ and $u_3$ are
$\{6n+1+10(j-1)+1, 6n+1+10(j-1)+3\}, \{6n+1+10(j-1)+4, 6n+1+10(j-1)+7\}$ and
$\{6n+1+10(j-1)+8, 6n+1+10(j-1)+10\}$ respectively. The $y$-coordinates of these
points depend on the form of the literals. Let $u_t$ is corresponding to the
variable $x_i$ for $t \in \{1,2,3\}$. If $u_t=x_i$, the $y$-coordinates of the
two points corresponding to $u_t$ are $6(i-1)+5-\epsilon j$ and
$6(i-1)+4+\epsilon j$ respectively, where $\epsilon$ is a very small positive
number. Otherwise, $u_t=\bar{x_i}$ and the $y$-coordinates are
$6(i-1)+4-\epsilon j$ and $6(i-1)+3+\epsilon j$ respectively. Let
$T^1=\cup_{j=1}^m T_j^1$, $T^2=\cup_{j=1}^m T_j^2$ and $T=T^1\cup T^2$.

\begin{figure}[ht]
\centering
\includegraphics[height=70mm]{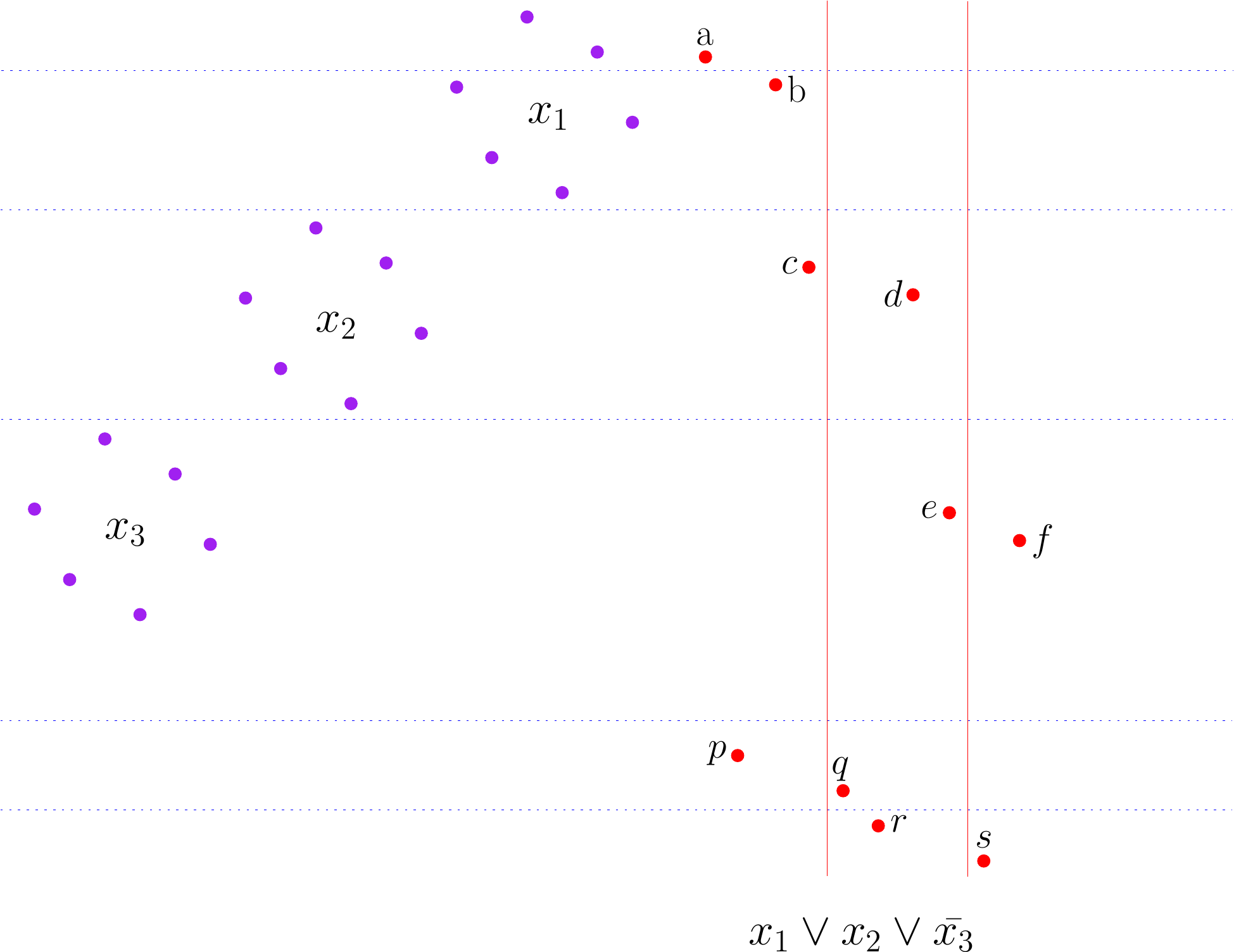}
\caption{\textit{The points corresponding to the clause $x_3 \lor x_2 \lor
\bar{x_1}$ and a scheme to separate them}} 
\label{fig:fig2}
\end{figure}

In \Cref{fig:fig2} the points corresponding to the literals $x_3$, $x_2$ and
$\bar{x_1}$ are $\{a,b\}$, $\{c,d\}$ and $\{e,f\}$ respectively. Consider the
points of $T_j$ in decreasing order of $y$-coordinates. We divide the points in
this order into groups of two.  Thus the first three groups are corresponding to
$T_j^2$ and the remaining two groups are corresponding to $T_j^1$. We denote the
first three groups by $T_{j,s}^2$ for $1\leq s\leq 3$, and the remaining two
groups by $T_{j,t}^1$ for $1\leq t\leq 2$. We note that the $y$-spans of all
these five groups are pairwise disjoint. Also $x$-spans of the three groups of
$T_j^2$ are pairwise disjoint. Similarly, $x$-spans of the two groups of $T_j^1$
are pairwise disjoint. But the $x$-spans of $T_{j,1}^1$ and $T_{j,1}^2$, and
$T_{j,1}^1$ and $T_{j,2}^2$ have nonempty intersection. Thus it is possible to
separate the points in $T_{j,1}^1$ as well as $T_{j,1}^2$ (resp. $T_{j,1}^1$ as
well as $T_{j,2}^2$) using a single vertical line. Similarly, it is possible to
separate the points in $T_{j,2}^1$ as well as $T_{j,2}^2$ (resp. $T_{j,2}^1$ as
well as $T_{j,3}^2$) using a single vertical line. See \Cref{fig:fig2} for an
illustration. We note that $x$-spans (resp. $y$-spans) of the sets $T_{j,t}^1$
for $1\leq t\leq 2$ and $1\leq j\leq m$ are pairwise disjoint. 

While constructing the points for $C_1$ we make sure that 
a literal corresponding to $V_1$ is never chosen as the first literal of $C_1$.
Note that this does not violate any generality, as each clause contains three
distinct literals. Later this will help us to argue about the number of lines
necessary to separate all the points. 

Now we discuss a scheme to separate the points of $T_j$ using horizontal and
vertical lines. The scheme considers the points in groups of two as defined
before. At first we will see how to separate the groups from each other and then
we will separate the points in the individual groups. Note that four horizontal
lines are sufficient to separate these five groups of points from each other
(see \Cref{fig:fig2}). Also suppose the points in at least one group
corresponding to the literals are separated by one additional horizontal line.
Then two lines are necessary to separate the remaining points of $T_j$ and two
vertical lines are sufficient for that purpose. Hence we have the following
observation.

\begin{obs}\label{obs:clause}
Suppose the five groups in $T_j$ as defined above are separated from each other
by four horizontal lines and points in at least one of the groups of $T_j^2$ are
separated by a horizontal line, then two more lines are necessary to separate
all the points of $T_j$ and two vertical lines are sufficient for that purpose.
\end{obs}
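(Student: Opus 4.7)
My plan is to split the observation into a sufficiency part (two vertical lines suffice) and a necessity part (a single extra line is not enough), and to pin both on the overlap structure of the open $x$-spans of the five groups of $T_j$.

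First I would list the open $x$-spans of the five groups that come out of the construction. After suppressing the common horizontal translation for clause $C_j$, the three literal groups of $T_j^2$ have $x$-spans $(1,3), (4,7), (8,10)$ and the two groups of $T_j^1$ have $x$-spans $(2,5), (6,9)$. Sorting these five open intervals by left endpoint gives the sequence $(1,3), (2,5), (4,7), (6,9), (8,10)$, in which every pair of consecutive intervals overlaps while no two non-consecutive intervals meet, and in particular no three of the five intervals have a common point. Since the five groups also have pairwise disjoint $y$-spans by the way the $y$-coordinates are set up, the upshot is that a vertical line can simultaneously separate the point pairs of at most two groups, and only when those two groups are consecutive in the sequence above.

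For sufficiency, after the four given horizontal lines separate the five groups from each other and one extra horizontal line splits some group of $T_j^2$, four groups still have unseparated point pairs. Depending on whether the horizontally split group is the one at the first, middle, or last position among the three $T_j^2$-intervals, I would do a three-case check showing that the four remaining intervals in the sorted sequence can always be partitioned into two consecutive overlapping pairs. Picking one vertical line inside the intersection of each pair then yields two vertical lines that separate all eight remaining points.

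For necessity, I would argue that any single extra line separates the point pairs of at most two groups: a horizontal line cannot serve two groups because the five $y$-spans are pairwise disjoint, and a vertical line cannot serve three groups because no three of the five open $x$-spans share a common point. Since four groups still need to be split internally, a single extra line cannot suffice, forcing the count up to two. The main obstacle is the careful bookkeeping of the five open $x$-spans and the three-case split-position check; once the overlap pattern is verified, both directions fall out cleanly.
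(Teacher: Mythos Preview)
Your proposal is correct and follows the same underlying idea the paper relies on: the overlap pattern of the five open $x$-spans $(1,3),(2,5),(4,7),(6,9),(8,10)$, in which consecutive intervals overlap and non-consecutive ones do not. The paper does not give a formal proof of this observation; it is justified only by the short discussion preceding it (noting which pairs of $x$-spans intersect) together with \Cref{fig:fig2}. Your write-up is therefore a more rigorous version of exactly what the paper sketches.

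One small point worth tightening in the necessity argument: the hypothesis says the points in \emph{at least} one group of $T_j^2$ are already separated, so in principle fewer than four groups could remain. Your ``one line handles at most two groups'' bound then no longer immediately forces two extra lines when only two or three groups remain. The clean fix is to observe that the two $T_j^1$ groups are \emph{always} among the remaining unseparated groups (the assumed horizontal lines lie either between the disjoint $y$-spans of the five groups or inside the $y$-span of some $T_j^2$ group, never inside a $T_j^1$ $y$-span), and these two groups have disjoint $x$-spans $(2,5)$ and $(6,9)$ as well as disjoint $y$-spans, so no single line can separate both. This closes the necessity direction uniformly across all cases.
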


Suppose that as a precondition we ensure that the five groups of $T_j$ are
separated from each other for each $1\leq j\leq m$. Furthermore, suppose that we
have fixed an optimal hitting set type for each $V_i$ for $1\leq i\leq n$. If
the corresponding assignment satisfies the clause $C_j$, we can argue that the
hitting sets for the $\Seg(V_i)$ have a horizontal line that separates the
points in at least one of the three groups of $T_j^2$. Then using
\Cref{obs:clause} two additional vertical lines suffice to separate the points
in $T_j$. If the corresponding assignment does not satisfy $C_j$, then the
points in none of the three groups of $T_j^2$ are separated by the optimal
hitting sets for the $\Seg(V_i)$. Then at least three more lines ae needed to
separate the points in $T_j$. 

Now we discuss how the preconditions could be satisfied. The three groups of
$T_j^2$ can be separated from each other by the horizontal lines which are used
to separate $V_i$ and $V_{i+1}$ for $1\leq i\leq n$. An additional horizontal
line would be sufficient to separate $T_j^1$ from $T_j^2$ for all $1\leq j\leq
m$. Now if a vertical line separates the two groups of $T_j^1$ from each other, then it
also separates the points in the middle group of $T_j^2$. Thus we cannot argue
that two more lines are needed to separate the points of $T_j$. Hence we have to
ensure that the two groups of $T_j^1$ are separated using a horizontal line. 
We will enforce this constraint by a subset of auxilliary points.

\paragraph*{Auxilliary points} The set of auxilliary points $A$ consists of five
point sets $\{A^1,\ldots,A^5\}$. $A^1$ and $A^2$ ensure that $T_{j,1}^1$ and
$T_{j,2}^1$ are separated by a horizontal line for $1\leq j\leq m$. $A^3$
ensures that the points in each $V_i$ are separated using exactly 3 lines, when
$4n-1$ lines are used to separate the points in $V$. $A^4$ ensures that $T_j$ is
separated from $T_{j+1}$ by a vertical line for each $1\leq j\leq m$. $A^5$
consists of four points $(6n+.5,.5),(6n+.5,-.5),(6n+1.5,.5)$ and $(6n+1.5,-.5)$.
Note that 2 lines are necessary and sufficient to separate these four points.
Moreover, a horizontal and a vertical line that separate these points also
separate $T^1$ from $T^2$, and $V$ from $T$ respectively (see \Cref{fig:auxi}). 

$A^1$ is composed of $m$ point sets each of which consists of a translate of the
four points in
$F=\{(0,-\epsilon),(-1,-1+\epsilon),(-2,-2\epsilon),(-3,-1+2\epsilon)\}$, where
$\epsilon>0$ is a very small number. $A^1=\cup_{j=1}^{m} A_j^1$, where each
$A_j^1$ is $(-4m+3+4(j-1),-4m+2+4(j-1))$-translate of $F$ (see \Cref{fig:auxi}). 
Note that two lines are necessary, and one horizontal and one vertical line are
sufficient to separate the points in each $A_j^1$. By construction, the
horizontal line corresponding to $A_j^1$ also separates the two groups of
$T_j^1$ from each other (see \Cref{fig:auxi}). 

$A^2$ is composed of $m-1$ point sets each of which consists of a translate of
the four points in
$H=\{(-\epsilon,0),(-1+\epsilon,-1),(-2\epsilon,-2),(-1+2\epsilon,-3)\}$.
$A^2=\cup_{l=1}^{m-1} A_l^2$, where each $A_l^2$ is
$(-4m+4l,-8m+7+4(l-1))$-translate of $H$ (see \Cref{fig:auxi}).  Note that one
horizontal and one vertical line are sufficient to separate the points of each
$A_l^2$. Moreover, the vertical line corresponding to $A_l^2$ also separates the
two groups $A_{l}^1$ and $A_{l+1}^1$ for $1\leq l\leq m-1$. And, any vertical
line that hits a segment in $\Seg(A_l^1)$ separates $A_{l-1}^2$ from $A_{l}^2$
for $2\leq l\leq m-1$. 

$A^3$ consists of $n-1$ point sets each of which is a translate of the points in
$F$. $A^3=\cup_{l=1}^{n-1} A_l^3$, where $A_l^3$ is
$(-4m-1-4(l-1),7+6(l-1))$-translate of $F$. Thus the points in $A^3$ lie towards
north-west of the points in $A^1$ (see \Cref{fig:auxi}). Note that one
horizontal and one vertical line are sufficient to separate the points in each
$A_l^3$. Also the $y$-span of $A_l^3$ is between the $y$-spans of $V_l$ and
$V_{l+1}$ for each $1\leq l\leq n-1$. Thus the horizontal line corresponding to
$A_l^3$ also separates $V_l$ from $V_{l+1}$ for $1\leq l\leq n-1$.

$A^4$ consists of $m-1$ point sets each of which is a translate of the points in
$H$. $A^4=\cup_{j=1}^{m-1} A_j^4$, where $A_j^4$ is
$(-8m+3-4(j-1),6n+2+10j)$-translate of $H$. Note that two lines are necessary,
and one horizontal and one vertical line are sufficient to separate the points
in each $A_j^4$. In addition, the $x$-span of $A_j^4$ is between the $x$-spans
of $T_j$ and $T_{j+1}$ for each $1\leq j\leq m-1$. Thus the vertical line
corresponding to $A_j^4$ also separates $T_j$ from $T_{j+1}$ for $1\leq j\leq
m-1$ (see \Cref{fig:auxi}). 

Now consider the collection of sets $A_l^1$ for $1\leq l\leq m$, $A_s^2$ for
$1\leq s\leq m-1$, $A_t^3$ for $1\leq t\leq n-1$, $A_j^4$ for $1\leq j\leq m-1$,
$T_{j,1}^1$ for $1\leq j\leq m$ and $T_{j,2}^1$ for $1\leq j\leq m$. By
construction, $x$-spans (resp. $y$-spans) of any two sets in this collection are
disjoint. Also $y$-spans (resp. $x$-spans) of the sets $V_i$ for $1\leq i\leq n$
and $A_t^3$ for $1\leq t\leq n-1$ are pairwise disjoint. Similarly, $y$-spans
(resp. $x$-spans) of the sets $T_j^1$ for $1\leq j\leq m$ and $A_j^4$ for $1\leq
j\leq m-1$ are pairwise disjoint. As $x$-spans (resp. $y$-spans) of $V$ and
$A^1\cup A^2$, $V$ and $A^4\cup A^5$, and $V$ and $\cup_{j=1}^m T_j^1$ are
disjoint, we have the following observation.

\begin{obs}\label{obs:disjoint}
Consider the collection of sets $A_t^1$ for $1\leq t\leq m$, $A_l^2$ for $1\leq
l\leq m-1$, $A_s^3$ for $1\leq s\leq n-1$, $A_j^4$ for $1\leq j\leq m-1$, $V_i$
for $1\leq i\leq n-1$, $T_{j,1}^1$ for $1\leq j\leq m$, $T_{j,2}^1$ for $1\leq
j\leq m$ and $A^5$. $x$-spans (resp. $y$-spans) of any two sets in this
collection are disjoint.
\end{obs}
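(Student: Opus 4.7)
The plan is to verify Observation~\ref{obs:disjoint} by direct computation of the bounding box of each set in the collection, since every set is given explicitly as a translate of one of the four fixed prototypes $I$, $J$, $F$, $H$. The first step is to read off each set's $x$-range and $y$-range from the translation offsets in the construction. For example, $V_i\subseteq[6i-5,6i]\times[6i-5,6i]$; the two groups $T^1_{j,1}$ and $T^1_{j,2}$ occupy disjoint sub-rectangles of the strip $[6n+3+10(j-1),\,6n+10+10(j-1)]\times[4(j-m)-4,\,4(j-m)-1]$; $A^5$ lies in $[6n+0.5,\,6n+1.5]\times[-0.5,\,0.5]$; and each of $A^1_t$, $A^2_l$, $A^3_s$, $A^4_j$ lies in an explicit rectangle of side length at most $3$ determined by the given offset.

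With these ranges tabulated, the proof splits into two routine parts. Within-family disjointness is immediate from the fact that the translation step between consecutive indices in each family strictly exceeds the width and height of the corresponding prototype: the step is $(6,6)$ for $V$ versus the $5\times 5$ prototype $I$; $(10,4)$ for $T^1$ versus the $7\times 3$ prototype $J$; and $(4,4)$, $(4,4)$, $(4,6)$, $(4,10)$ for $A^1,A^2,A^3,A^4$ versus prototypes $F$ ($3\times 1$) and $H$ ($1\times 3$). For the sub-pair $T^1_{j,1}$ versus $T^1_{j,2}$ at a common $j$, disjointness follows from the splitting of $J=\{(2,-1),(5,-2),(6,-3),(9,-4)\}$ into its top-two and bottom-two points, which lie in non-overlapping unit-length intervals in both coordinates.

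Cross-family disjointness reduces to a finite check over pairs of families. The construction was designed so that each family occupies its own region of the plane: $V$ lies in the positive quadrant $[1,6n]\times[1,6n]$; $T^1$ sits at large positive $x$ and at negative $y$; $A^5$ straddles the $x$-axis just to the right of $V$; and the four $A$-families all lie in the negative-$x$ half-plane at mutually disjoint $y$-strips (with $A^2$ lowest in $y$, then $A^1$, then $A^3$, then $A^4$). For each cross-family pair one checks disjointness in both coordinates by a single endpoint comparison. The only delicate boundaries are those where bounding boxes abut at an integer, most notably $V_i$ versus $A^3_s$ at $s=i$ (where $V_i$'s $y$-max is exactly $6i$ while $A^3_s$'s $y$-min is $6i+\epsilon$), and the pairs $A^1_t$ versus $A^2_l$ that share an integer column. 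In each such case the small perturbation $\epsilon>0$ baked into the prototypes $F$ and $H$ supplies the required strict separation.

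The main obstacle is purely bookkeeping. With eight families there are $\binom{8}{2}=28$ cross-family pairs to check plus a handful of within-family and boundary cases, so I would organize the verification as a single table whose rows and columns are indexed by families and whose entries are the disjoint intervals witnessing the claim. Each entry is a one-line comparison of the endpoints computed in the first step, and no conceptual ideas beyond unwinding the translation offsets and invoking the $\epsilon$-perturbations at the abutting boundaries are required.
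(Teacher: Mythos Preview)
Your proposal is correct and matches the paper's treatment: the paper records this as an observation following directly from the explicit translation offsets in the construction (the preceding paragraph sketches a few of the key cross-family disjointness facts), and gives no separate formal proof. Your plan of tabulating each family's bounding box and checking the within-family step sizes and the cross-family endpoint comparisons, with the $\epsilon$-perturbations handling the abutting integer boundaries, is exactly the direct ``by construction'' verification the paper intends.
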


Now we have the following lemma regarding the properties of the auxilliary
points.

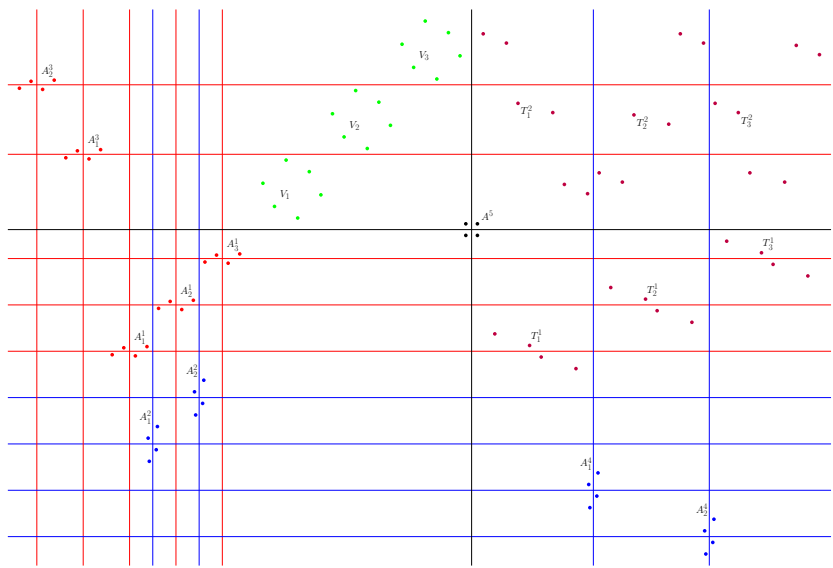
\begin{figure}[hbt]
\tikzstyle{node} = [circle, fill=black, minimum size=9pt, inner sep=0pt]
\centering
\resizebox{.9\linewidth}{!} {
\begin{tikzpicture}
\scriptsize

\pgfmathtruncatemacro{\maxy}{19};
\pgfmathtruncatemacro{\maxx}{50};
\pgfmathtruncatemacro{\miny}{-29};
\pgfmathtruncatemacro{\minx}{-21};

\foreach \i in {1,...,3}
{
        \pgfmathtruncatemacro{\p}{6*(\i-1)};
        \pgfmathtruncatemacro{\q}{6*(\i-1)};
        \pgfmathtruncatemacro{\j}{4-\i};
        \node[node,fill=green, label = {[label distance=1.3cm] 340:\Huge{$V_{\i}$}} ] at (\p+1,\q+4) {$ $};
        \node[node,fill=green, label =$ $] at (\p+2,\q+2) {$ $};
        \node[node,fill=green, label =$ $] at (\p+3,\q+6) {$ $};
        \node[node,fill=green, label =$ $] at (\p+4,\q+1) {$ $};
        \node[node,fill=green, label =$ $] at (\p+5,\q+5) {$ $};
        \node[node,fill=green, label =$ $] at (\p+6,\q+3) {$ $};        
}

\foreach \j in {1,...,3}
{
   \pgfmathtruncatemacro{\p}{ 6*3+1+10*(\j-1)};
   \pgfmathtruncatemacro{\q}{ 4*(\j-3)};
   \node[node,fill=purple, label =$ $] at (\p+2,\q-1) {$ $};
        \node[node,fill=purple, label =80:\Huge{$T_{\j}^1 $}] at (\p+5,\q-2) {$ $};
        \node[node,fill=purple, label =$ $] at (\p+6,\q-3) {$ $};
        \node[node,fill=purple, label =$ $] at (\p+9,\q-4) {$ $};
}
\pgfmathtruncatemacro{\a}{ 6*3+1+10*(1-1)};
\node[node,fill=purple, label =$ $] at (\a+1,17-1*.1) {$ $};
\node[node,fill=purple, label =$ $] at (\a+3,16+1*.1) {$ $};
\node[node,fill=purple, label =340:\Huge{$T_{1}^2 $}] at (\a+4,11-1*.1) {$ $};
\node[node,fill=purple, label =$ $] at (\a+7,10+1*.1) {$ $};
\node[node,fill=purple, label =$ $] at (\a+8,4-1*.1) {$ $};
\node[node,fill=purple, label =$ $] at (\a+10,3+1*.1) {$ $};

\pgfmathtruncatemacro{\b}{ 6*3+1+10*(2-1)};
\node[node,fill=purple, label =$ $] at (\b+1,5-1*.1) {$ $};
\node[node,fill=purple, label =$ $] at (\b+3,4+1*.1) {$ $};
\node[node,fill=purple, label =340:\Huge{$T_{2}^2 $}] at (\b+4,10-1*.1) {$ $};
\node[node,fill=purple, label =$ $] at (\b+7,9+1*.1) {$ $};
\node[node,fill=purple, label =$ $] at (\b+8,17-1*.1) {$ $};
\node[node,fill=purple, label =$ $] at (\b+10,16+1*.1) {$ $};

\pgfmathtruncatemacro{\b}{ 6*3+1+10*(3-1)};
\node[node,fill=purple, label =$ $] at (\b+1,11-1*.1) {$ $};
\node[node,fill=purple, label =340:\Huge{$T_{3}^2 $}] at (\b+3,10+1*.1) {$ $};
\node[node,fill=purple, label =$ $] at (\b+4,5-1*.1) {$ $};
\node[node,fill=purple, label =$ $] at (\b+7,4+1*.1) {$ $};
\node[node,fill=purple, label =$ $] at (\b+8,16-1*.1) {$ $};
\node[node,fill=purple, label =$ $] at (\b+10,15+1*.1) {$ $};

\foreach \j in {1,...,3}
{
        \pgfmathtruncatemacro{\p}{-4*3+3+4*(\j-1)};
        \pgfmathtruncatemacro{\q}{-4*3+2+4*(\j-1)};
        \node[node,fill=red, label =93:\Huge{$A_{\j}^1 $}] at (\p,\q-.1) {$ $};
        \node[node,fill=red, label =$ $] at (\p-1,\q-1+.1) {$ $};
        \node[node,fill=red, label =$ $] at (\p-2,\q-2*.1) {$ $};
        \node[node,fill=red, label =$ $] at (\p-3,\q-1+2*.1) {$ $};
        \draw [red] (\minx,\q-.5) -- (\maxx,\q-.5);
        \draw [red] (\p-1.5,\maxy) -- (\p-1.5,\miny);
}
\foreach \j in {1,...,2}
{
        \pgfmathtruncatemacro{\p}{-4*3)+4*(\j)};
        \pgfmathtruncatemacro{\q}{-8*3+7+4*(\j-1)};
        \node[node,fill=blue, label ={[label distance=.3cm] 160:\Huge{$A_{\j}^2 $}} ] at (\p-.1,\q) {$ $};
        \node[node,fill=blue, label =$ $] at (\p-1+.1,\q-1) {$ $};
        \node[node,fill=blue, label =$ $] at (\p-2*.1,\q-2) {$ $};
        \node[node,fill=blue, label =$ $] at (\p-1+2*.1,\q-3) {$ $};
        \draw [blue] (\minx,\q-1.5) -- (\maxx,\q-1.5);
        \draw [blue](\p-.5,\miny) -- (\p-.5,\maxy);
}
\foreach \j in {1,...,2}
{
        \pgfmathtruncatemacro{\p}{-4*3-1-4*(\j-1)};
        \pgfmathtruncatemacro{\q}{7+6*(\j-1)};
        \pgfmathtruncatemacro{\i}{3-\j};
        \node[node,fill=red, label =93:\Huge{$A_{\j}^3 $}] at (\p,\q-.1) {$ $};
        \node[node,fill=red, label =$ $] at (\p-1,\q-1+.1) {$ $};
        \node[node,fill=red, label =$ $] at (\p-2,\q-2*.1) {$ $};
        \node[node,fill=red, label =$ $] at (\p-3,\q-1+2*.1) {$ $};
        \draw [red] (\minx,\q-.5) -- (\maxx,\q-.5);
        \draw [red] (\p-1.5,\maxy) -- (\p-1.5,\miny);
}
\foreach \j in {1,...,2}
{
        \pgfmathtruncatemacro{\p}{6*3+2+10*(\j)};
        \pgfmathtruncatemacro{\q}{-8*3+3-4*(\j-1)};
        \node[node,fill=blue, label ={[label distance=.3cm] 160:\Huge{$A_{\j}^4 $}} ] at (\p-.1,\q) {$ $};
        \node[node,fill=blue, label =$ $] at (\p-1+.1,\q-1) {$ $};
        \node[node,fill=blue, label =$ $] at (\p-2*.1,\q-2) {$ $};
        \node[node,fill=blue, label =$ $] at (\p-1+2*.1,\q-3) {$ $};
        \draw [blue] (\minx,\q-1.5) -- (\maxx,\q-1.5);
        \draw [blue](\p-.5,\miny) -- (\p-.5,\maxy);
}
\node[node,fill=black, label =$ $] at (6*3+.5,.5) {$ $};
\node[node,fill=black, label =$ $] at (6*3+.5,-.5) {$ $};
\node[node,fill=black, label ={[label distance=.2cm] 45:\Huge{$A^5 $}}] at (6*3+1.5,.5) {$ $};
\node[node,fill=black, label =$ $] at (6*3+1.5,-.5) {$ $};
\draw [black] (\minx,0) -- (\maxx,0);
\draw [black] (6*3+1,\miny) -- (6*3+1,\maxy);

\end{tikzpicture}
}
\caption{\textit{Construction of points corresponding to the clauses $\{x_3\lor
   x_2\lor \bar{x_1}, x_1\lor \bar{x_2}\lor x_3, x_2\lor x_1\lor \bar{x_3}\}$
($n=3, m=3$).}}
\label{fig:auxi}
\end{figure}

\begin{lemma}\label{lem:auxi}
The points of $A$ have the following properties:
\begin{enumerate}[i)]
 \item $4m-2$ lines are necessary to hit the segments in $\Seg(A^1\cup A^2)$.
    Moreover, there is a hitting set for $\Seg(A^1\cup A^2)$ consisting of
    $4m-2$ lines that also separates the two groups of $T_j^1$ from each other
    for $1\leq j\leq m$. 
 \item $2(n-1)$ lines are necessary and sufficient to hit the segments in
    $\bigcup_{t=1}^{n-1} \Seg(A_t^3))$.
 \item $2(m-1)$ lines are necessary and sufficient to hit the segments in
    $\bigcup_{j=1}^{m-1} \Seg(A_j^4))$.
\end{enumerate}
\end{lemma}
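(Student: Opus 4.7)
The overall strategy is uniform across the three parts: the lower bounds come from \Cref{lem:span} applied to a collection of point sets with pairwise disjoint $x$- and $y$-spans, and the upper bounds come from picking one horizontal and one vertical separator per component. The starting observation is that for a translate of $F$ by $(p,q)$, the four points split into two $y$-levels (near $q$ and near $q-1$) with alternating $x$-coordinates so that the horizontal line $y=q-1/2$ together with the vertical line $x=p-3/2$ puts the four points into four distinct open cells; symmetrically, for a translate of $H$ by $(p,q)$, the pair $y=q-3/2$ and $x=p-1/2$ separates the four points. Since no single line separates four points, two are necessary and (as just shown) sufficient for each component.

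For parts (ii) and (iii), the translation offsets show that $\{A_t^3\}_{t=1}^{n-1}$ has pairwise disjoint $x$- and $y$-spans, and likewise for $\{A_j^4\}_{j=1}^{m-1}$. \Cref{lem:span} then gives, for any hitting set $L$ of $\bigcup_t \Seg(A_t^3)$, $|L| \geq \sum_t N_L(\Seg(A_t^3)) \geq 2(n-1)$, and the analogous $2(m-1)$ bound for part (iii). For sufficiency, I would take the horizontal/vertical separators prescribed above: within-component segments are handled by construction, and every cross-component segment has its two endpoints in disjoint $y$-spans and disjoint $x$-spans, so one of the separators picked for an intervening component lies strictly between them.

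Part (i) is the main obstacle. For the lower bound I would verify that the full collection $\{A_j^1\}_{j=1}^m \cup \{A_l^2\}_{l=1}^{m-1}$ itself has pairwise disjoint $x$- and $y$-spans: by direct computation the $y$-spans of the $A_j^1$'s lie in the strip $[-4m+1,-2]$ while those of the $A_l^2$'s lie in the strictly lower strip $[-8m+4,-4m-1]$, so the $y$-spans are pairwise disjoint across $A^1$ and $A^2$, and disjoint within each by the $4$-unit offsets; the $x$-spans within $A^1$ (resp.\ within $A^2$) are disjoint, and across $A^1$ and $A^2$ they abut only at isolated points, so the open spans used in the proof of \Cref{lem:span} remain disjoint. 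Applying \Cref{lem:span} then yields $|L| \geq \sum_j N_L(\Seg(A_j^1)) + \sum_l N_L(\Seg(A_l^2)) \geq 2m + 2(m-1) = 4m-2$ for any hitting set $L$. For sufficiency together with the ``moreover'' clause, I would take the $4m-2$ natural separators from the first paragraph and check by case analysis that every segment of $\Seg(A^1\cup A^2)$ is hit: within-component segments are handled by construction; cross-component segments inside $A^1$ are hit either by the horizontal separators of intervening $A_j^1$'s or by the vertical separator of the interleaving $A_l^2$, which by the offset computation lies strictly between consecutive $A_j^1$'s; cross segments inside $A^2$ are handled symmetrically; and cross segments between $A^1$ and $A^2$ are hit using the combination of the horizontal separators of both families together with the vertical separators, since the two families lie in disjoint $y$-strips and their $x$-extents interleave. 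The ``moreover'' property then follows by a direct computation: the chosen horizontal line for $A_j^1$ is at $y = -4m + 4j - 5/2 = 4(j-m) - 5/2$, while applying the translation $(6n+1+10(j-1),\,4(j-m))$ to $J$ shows the two groups $T_{j,1}^1$ and $T_{j,2}^1$ have $y$-coordinates in $\{4(j-m)-1, 4(j-m)-2\}$ and $\{4(j-m)-3, 4(j-m)-4\}$ respectively, so the line slices strictly between them. The main technical difficulty is the bookkeeping for the cross-group case analysis in the sufficiency for (i); everything else reduces to arithmetic on the explicit offsets.
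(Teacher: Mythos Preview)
Your proposal is correct and follows essentially the same approach as the paper: lower bounds via \Cref{lem:span} applied to the pairwise span-disjoint families, upper bounds by choosing one horizontal and one vertical separator per four-point block, and the ``moreover'' clause via the arithmetic check that the horizontal line for $A_j^1$ lands between the $y$-levels of $T_{j,1}^1$ and $T_{j,2}^1$. Two small remarks: in parts (ii) and (iii) the statement only asks to hit $\bigcup_t \Seg(A_t^3)$ (resp.\ $\bigcup_j \Seg(A_j^4)$), not the segments between different blocks, so your cross-component discussion there is unnecessary; and in part (i) the $x$-spans of the $A_j^1$ and $A_l^2$ are in fact disjoint by an $\epsilon$-gap rather than merely abutting, which only makes your \Cref{lem:span} application cleaner. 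The paper handles the sufficiency in (i) with the same interleaving observation you sketch (the vertical line of $A_l^2$ separates $A_l^1$ from $A_{l+1}^1$, and the vertical line of $A_j^1$ separates $A_{j-1}^2$ from $A_j^2$), and is comparably terse about the remaining cross-family pairs; your identification of that bookkeeping as the main technical step is accurate.
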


\begin{proof}
\textit{i)} Consider any hitting set of lines $L$ for the segments in
$\Seg(A^1\cup A^2)$. At first we show that $N_L(\Seg(A^1\cup A^2))\geq 4m-2$.
Note that $N(\Seg(A_j^1))=N(\Seg(A_l^2))=2$ for $1\leq j\leq m, 1\leq l\leq
m-1$. Also $x$-spans and $y$-spans of the sets $A_j^1$ for $1\leq j\leq m$ and
$A_l^2$ for $1\leq l\leq m-1$ are pairwise disjoint. Hence by \Cref{lem:span},
$N_L(\Seg(A^1\cup A^2))\geq N_L(\bigcup_{j=1}^{m} \Seg(A_j^1) \cup
\bigcup_{l=1}^{m-1} \Seg(A_l^2))=\sum_{j=1}^{m} N_L(\Seg(A_j^1)) +
\sum_{l=1}^{m-1} N_L(\Seg(A_l^2))\geq \sum_{j=1}^{m} N(\Seg(A_j^1)) +
\sum_{l=1}^{m-1} N(\Seg(A_l^2))=2(m+m-1)=4m-2.$

Now we construct a hitting set of lines for $\Seg(A^1\cup A^2)$ having at most
$4m-2$ lines. 
We take one horizontal and one vertical line to separate the points of each of
the sets $A_j^1$ and $A_l^2$ for $1\leq j\leq m, 1\leq l\leq m-1$. We need
$4m-2$ lines for this. Now the vertical line corresponding to $A_l^2$ also
separates $A_{l}^1$ from $A_{l+1}^1$ for $1\leq l\leq m-1$. And, the vertical
line corresponding to $A_j^1$ also separates the groups $A_{j-1}^2$ and
$A_{j}^2$ for $2\leq j\leq m-1$. Thus all the points in $A^1 \cup A^2$ are
separated. By construction, the horizontal line corresponding to $A_j^1$ also
separates the two groups of $T_j^1$ from each other, and the result follows.\\\\
\textit{ii)} First we prove that $2(n-1)$ lines are necessary. Consider any
hitting set of lines $L$ for the segments in $\bigcup_{t=1}^{n-1} \Seg(A_t^3))$.
We note that $N(\Seg(A_t^3))=2$ for $1\leq t\leq n-1$. Also $x$-spans and
$y$-spans of the sets $A_t^3$ for $1\leq t\leq n-1$ are pairwise disjoint. Thus
using \Cref{lem:span}, $$N_L(\bigcup_{t=1}^{n-1} \Seg(A_t^3))=\sum_{t=1}^{n-1}
N_L(\Seg(A_t^3))\geq \sum_{t=1}^{n-1} N(\Seg(A_t^3))=2(n-1).$$

To prove the sufficient condition we construct a hitting set of lines for the
segments in $\bigcup_{t=1}^{n-1} \Seg(A_t^3)$. For each $1\leq t\leq n-1$ we
take one horizontal and one vertical line that separates the points in $A_t^3$.
Hence the $2(n-1)$ lines in total hit all the segments in $\bigcup_{t=1}^{n-1}
\Seg(A_t^3)$.\\\\ \textit{iii)} Consider any hitting set of lines $L$ for the
segments in $\bigcup_{j=1}^{m-1} \Seg(A_j^4))$. We note that $N(\Seg(A_j^4))=2$
for $1\leq j\leq m-1$. Also $x$-spans and $y$-spans of the sets $A_j^4$ for
$1\leq j\leq m-1$ are pairwise disjoint. Thus using \Cref{lem:span},
$$N_L(\bigcup_{j=1}^{m-1} \Seg(A_j^4))=\sum_{j=1}^{m-1} N_L(\Seg(A_j^4))\geq
\sum_{j=1}^{m-1} N(\Seg(A_j^4))=2(m-1).$$

Now to prove the sufficient condition we construct a hitting set for the
segments in $\bigcup_{t=1}^{n-1} \Seg(A_t^3)$ consisting of $2m-2$ lines. For
each $1\leq j\leq m-1$ we take one horizontal and one vertical line that
separates the points in $A_j^4$. Hence the $2(m-1)$ lines in total hit all the
segments in $\bigcup_{j=1}^{m-1} \Seg(A_j^4)$.
\end{proof}

Now we formally discuss the reduction from 5-OCC-MAX-3SAT to \textit{Hitting Set
Problem}. Given a 5-OCC-MAX-3SAT formula $\phi$ we construct the set of points
$V$ and $T$ corresponding to variables and clauses respectively in the way
mentioned before. We also construct $A$, the set of auxilliary points, in the
same way mentioned before. 
Let $P=V\cup T\cup A$. 
Let $k=5n+8m-4$. Now we show that the reduction has the two properties as
claimed before. The following lemma ensures the first property.

\begin{lemma}\label{lem:sattohit}
Suppose $\phi$ is satisfiable, then $N(\Seg(P))\leq k$. 
\end{lemma}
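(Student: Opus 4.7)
The plan is to exhibit, given a satisfying assignment $\sigma$ of $\phi$, an explicit hitting set $L$ for $\Seg(P)$ with $|L|\le k=5n+8m-4$. The lines in $L$ will be assembled from six groups, and the point is that many lines will do ``double duty'' -- separating points in more than one sub-gadget at once.

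First, for each variable $V_i$, I would include the three-line optimal hitting set of $\Seg(V_i)$ from \Cref{obs:var}, picking the Type~1 configuration if $\sigma(x_i)=1$ and the Type~0 configuration if $\sigma(x_i)=0$ (contributing $3n$ lines). Next, add the $2(n-1)$ lines from \Cref{lem:auxi}(ii) for $\bigcup_{t=1}^{n-1}\Seg(A_t^3)$, choosing the horizontal line for each $A_l^3$ to lie in the gap between the $y$-spans of $V_l$ and $V_{l+1}$. Then add the $4m-2$ lines for $\Seg(A^1\cup A^2)$ from \Cref{lem:auxi}(i), taking care that the horizontal line for each $A_j^1$ also separates $T_{j,1}^1$ from $T_{j,2}^1$. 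Add the $2(m-1)$ lines from \Cref{lem:auxi}(iii) for $\bigcup_{j=1}^{m-1}\Seg(A_j^4)$, placing each $A_j^4$ vertical between $T_j$ and $T_{j+1}$. Add the two lines that separate $A^5$: the horizontal $y=0$ (which also separates $T^1$ from $T^2$) and the vertical $x=6n+1$ (which also separates $V$ from $T$). Finally, for each clause $C_j$, add two vertical lines, as supplied by \Cref{obs:clause}, to finish separating the points of $T_j$. Summing yields $3n+2(n-1)+(4m-2)+2(m-1)+2+2m = 5n+8m-4=k$.

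To justify the last step, I have to verify the two premises of \Cref{obs:clause} for every $j$. The first premise -- that the five groups of $T_j$ are already separated pairwise by horizontal lines in $L$ -- follows from the constructions above: the three $T_j^2$ groups sit at the $y$-levels of three distinct variables (since each clause has three distinct literals), so the $A^3$ horizontals between consecutive $V_l$'s separate them; the $A_j^1$ horizontal separates $T_{j,1}^1$ from $T_{j,2}^1$; and $y=0$ from $A^5$ separates $T_j^1$ from $T_j^2$. The second premise -- that at least one of the three groups of $T_j^2$ has its two points separated by a horizontal line in $L$ -- is the only place where satisfiability is used: since $\sigma$ satisfies $C_j$, some literal $u_t$ of $C_j$ is true under $\sigma$, and then from the explicit $y$-coordinates of the corresponding group $T_{j,t}^2$ ($6(i-1)+5-\epsilon j$ and $6(i-1)+4+\epsilon j$ if $u_t=x_i$, or $6(i-1)+4-\epsilon j$ and $6(i-1)+3+\epsilon j$ if $u_t=\bar x_i$), one checks that the Type~1 (respectively, Type~0) horizontal line chosen for the gadget $V_i$ passes strictly between these two points.

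The remaining bookkeeping is to confirm that cross-gadget pairs of points (e.g.\ a point of $A_l^1$ versus a point of $V_i$) are indeed separated by some line already in $L$. Essentially all such pairs are handled by \Cref{obs:disjoint}: because the gadgets listed there have pairwise disjoint $x$- or $y$-spans, any internal separator of one of them extends into the ambient plane and separates it from the others; the few remaining boundary cases ($V$ vs.\ $T$, $T^1$ vs.\ $T^2$, etc.) are handled by the two $A^5$ lines. The main obstacle in writing the full proof is the careful case analysis for this verification, with the genuinely nontrivial point being the coordinate check that a true literal under $\sigma$ produces a horizontal line of the correct type lying strictly between the two $y$-coordinates of its $T_{j,t}^2$ points; the rest is geometric bookkeeping.
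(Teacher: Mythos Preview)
Your proposal is correct and follows essentially the same approach as the paper: the same six groups of lines with the same counts summing to $5n+8m-4$, the same use of the satisfying assignment to choose Type~0/Type~1 configurations for each $V_i$, and the same verification that a satisfied literal yields a horizontal line separating the corresponding $T_{j,t}^2$ pair so that \Cref{obs:clause} applies. One minor caution: your appeal to \Cref{obs:disjoint} for the cross-gadget separations is slightly loose, since pairwise-disjoint spans do not by themselves produce separating lines; the paper instead explicitly names, for each pair of gadgets (e.g.\ $A^3$ from $V$, $A^4$ from $T^1$, $A_{j-1}^4$ from $A_j^4$), which specific line already in $L$ does the job---but you correctly flag this as the remaining case analysis.
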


\begin{proof}
Note that it is sufficient to show the existence of a set $L$ of at most $k$
horizontal and vertical lines which separate the points of $P$. The set $L$ is
constructed in the following way. $L$ is composed of three disjoint sets of
lines $L^v, L^c$ and $L^a$. $L^v, L^c$ and $L^a$ are the sets of lines which
separate the points of $V$, $T$ and $A$ respectively. These lines also separate
the groups $V,T$ and $A$ from each other. Now we will consider the construction
of these sets.

We add the horizontal line $y=0$ and the vertical line $x=6n+1$ to $L^a$. These
two lines divide the plane into four cells. Each cell contains one point of
$A^5$ and some additional points. Thus the points in $A^5$ are separated by
these two lines. In addition to the points of $A^5$ the four cells contain the
points of four disjoint sets $A^1\cup A^2$, $V\cup A^3$, $T^1\cup A^4$, and
$T^2$ respectively. Hence the four point sets in the four cells are already
separated from each other. We show how to separate the points in each of these
four sets.

By \Cref{lem:auxi}(i) it follows that there is a hitting set for $\Seg(A^1\cup
A^2)$ consisting of $4m-2$ lines that also separates the two groups of $T_j^1$
from each other for $1\leq j\leq m$. We add such $4m-2$ lines to $L^a$. Now
$|L^a|=4m$.

We note that the vertical line correponding to $A_1^1$ separates $A^3$ from $V$.
By \Cref{lem:auxi}(ii) $2(n-1)$ lines are sufficient to hit the segments in
$\Seg(A^3)$. We add such a set of $2n-2$ lines to $L^a$. Now $|L^a|=4m+2n-2$.
Note that the horizontal line corresponding to $A_i^3$ separates $V_i$ from
$V_{i+1}$ for $1\leq i\leq n-1$. By \Cref{obs:var} three lines are sufficient to
separate the points of $V_i$ and only two configurations of lines are possible
for that purpose. For each variable $x_i$, if $x_i$ is 1, we select the
configuration with the Type 1 horizontal line $y=6(i-1)+4.5$ for $V_i$. If $x_i$
is 0, we select the configuration with the Type 0 horizontal line $y=6(i-1)+3.5$
for $V_i$. We add the 3 lines corresponding to the chosen configuration to $L^v$
for each $V_i$. Thus $|L^v|=3n$. Hence the points in $V$ are now separated. Also
at least one horizontal line has been chosen per $V_i$ and thus $A_t^3$ is
separated from $A_{t+1}^3$ for each $1\leq t\leq n-2$. Thus all the points in
$V\cup A^3$ are now being separated. Note that a vertical line corresponding to
$V_n$ and a horizontal line corresponding to $V_1$ separate $V\cup A^3$ from
$A^4$. Also the same vertical line separates $A^1\cup A^2$ from $A^4$. 

By \Cref{lem:auxi}(iii) $2(m-1)$ lines are sufficient to hit the segments in
$\Seg(A^4)$. We add such a set of $2m-2$ lines to $L^a$. Now
$|L^a|=4m+2n-2+2m-2=6m+2n-4$. Note that the vertical line corresponding to
$A_j^4$ also separates $T_j$ from $T_{j+1}$ for all $1\leq j\leq m-1$. Also the
horizontal line corresponding to $A_1^2$ separates $A^4$ from $T_1$. 

Now consider the points in $T$. The horizontal lines that separate $V_i$ from
$V_{i+1}$ for all $1 \leq i \leq n-1$, also separate the three groups
$T_{j,1}^2, T_{j,2}^2$ and $T_{j,1}^2$ from each other, for $1 \leq j \leq m$.
Also the hoizontal lines in $L^a$ corresponding to $A^1$ separate $T_{j,1}^1$
from $T_{j,2}^1$ for $1\leq j\leq m$. Now we claim that for each $1\leq j\leq m$
the points in at least one of the groups of $T_j^2$ are being separated by a
horizontal line in $L^v$. Consider any clause $C_j=u_1\lor u_2\lor u_3$. Now as
$C_j$ is satisfied there must be a satisfied literal of $C_j$. Without loss of
generality let $u_1$ be a satisfied literal of $C_j$ and it is corresponding to
the variable $x_i$. If $u_1$ is $x_i$, the points in $T_j^2$ corresponding to
$u_1$ are $(6n+1+10(j-1)+1,6(i-1)+5-\epsilon j)$ and
$(6n+1+10(j-1)+3,6(i-1)+4+\epsilon j)$. As $x_i$ is 1, previously we chose the
configuration with the Type 1 line for $V_i$. Thus these two points are already
separated. Otherwise, if $u_1$ is $\bar{x_i}$, the points in $T_j^2$
corresponding to $u_1$ are $(6n+1+10(j-1)+1,6(i-1)+4-\epsilon j)$ and
$(6n+1+10(j-1)+3,6(i-1)+3+\epsilon j)$. Now as $x_i$ is 0, previously we chose
the configuration with the Type 0 line for $V_i$. Thus in this case also the two
points are already separated. 

Consider the set of points $T_j$ for any $1\leq j\leq m$. All the five groups of
$T_j$ are separated from each other, and the points in at least one group of
$T_j^2$ are separated. Thus by \Cref{obs:clause} only two vertical lines are
sufficient to separate the points in $T_j$. We construct such a set $L^c$
containing $2m$ vertical lines corresponding to $T_j$ for all $1 \leq j \leq m$.
As the points in $T_j$ are separated for all $1\leq j\leq m$, and $T_j$ is
separated from $T_{j+1}$ for all $1\leq j\leq m$, all the points in $T$ are
separated.

Also the vertical line corresponding to $T_{1,1}^1$ and the horizontal line
corresponding to $A_m^1$ separate $A^4$ from $T^1$. Now the first literal of
$C_1$ cannot be corresponding to $V_1$. Thus the vertical line corresponding to
$T_{1,1}^1$ and a horizontal line corresponding to $V_1$ separate $A^4$ from
$T^2$. Lastly, the vertical line corresponding to $T_{j,1}^1$ separate
$A_{j-1}^1$ from $A_j^1$ for $2\leq j\leq m-1$.

We set $L=L^a\cup L^v\cup L^c$ and thus the lines of $L$ separate all the points
of $P$ by our construction. Now $|L|=(6m+2n-4)+3n+2m=5n+8m-4=k$ which completes
the proof of the lemma.
\end{proof}

Now we show that the reduction ensures the second property as well, i.e if less
than $1-\delta$ ($0< \delta \leq 1$) fraction of the clauses of $\phi$ are
satisfied, then more than $(1+\apconst\delta)k$ lines are needed to hit the
segments in $\Seg(P)$. The idea is to show that any hitting set of lines for
$\Seg(P)$ that uses exactly 3 lines to seprate the points in $V_i$, uses at
least $k$ lines. Thus if at most $k+p$ lines are used to hit all segments in
$\Seg(P)$, then more than 3 lines can be used for at most $p$ $V_i$'s. Now if 3
lines are used to separate the points in $V_i$, then we can find a binary
assignment of the variable.  Using the exact three lines configurations of at
least $n-p$ remaining $V_i$'s we can create an assignment (the values of the
other variables are chosen arbitrarily) which makes a good fraction of the
clauses to be satisfied. Before moving on now we have some definitions.

Consider any hitting set of lines $L$ for $\Seg(P)$. A variable $x_i$ is said to
be \textit{good}, if the number of lines of $L$ that hit at least one segment of
$\Seg(V_i)$ is equal to 3. A variable is said to be \textit{bad}, if it is not
good. A clause $C_j$ is said to be \textit{bad}, if it has at least one of the
following two properties, (i) any line of $L$ that hits the segment of
$\Seg(T_{j,t}^2)$, is vertical, for each $1\leq t\leq 3$, and (ii) it contains a
literal corresponding to a bad variable. The bad clauses which have the first
property are called bad clauses of first type, and the remaining bad clauses are
called bad clauses of second type.  A clause is said to be \textit{good}, if it
is not bad. Now we have the following lemma.

\begin{lemma}\label{lem:badclause}
Consider any hitting set $L$ for $\Seg(P)$ consisting of at most $k+p$ lines.
The number of bad clauses corresponding to $L$ is at most $5p$.
\end{lemma}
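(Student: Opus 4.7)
My plan is to lower-bound $|L|$ by $k + b + |B_1|$, where $b$ is the number of bad variables and $B_1$ is the set of bad clauses of the first type. Combined with $|L| \le k+p$, this gives $b + |B_1| \le p$. Since each of the $b$ bad variables appears in exactly five clauses by the $5$-OCC property, at most $5b$ clauses are bad of the second type, so the total number of bad clauses is at most $|B_1| + 5b \le (p-b) + 5b = p + 4b \le 5p$, using $b \le p$. The proof thus reduces to establishing the lower bound on $|L|$.

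For this, I apply \Cref{lem:span} to the collection $\mathcal{C}$ of \Cref{obs:disjoint}. Since its members have pairwise disjoint $x$-spans and $y$-spans, each line of $L$ hits segments of at most one $X \in \mathcal{C}$, so $|L| \ge \sum_{X \in \mathcal{C}} N_L(\Seg(X))$, with the slack equal to the number of lines of $L$ hitting no such segment. The per-set minima sum to exactly $k$, as verified while constructing the feasible hitting set in \Cref{lem:sattohit}. Bad variables pay directly: by definition $N_L(\Seg(V_i)) \ge 4$ for a bad $x_i$, so $\sum_i N_L(\Seg(V_i)) \ge 3n + b$.

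The crux is to extract a further unit of excess per bad clause $C_j \in B_1$. For such a $j$, each of the three pairs $T_{j,t}^2$ is separated only by vertical lines of $L$, and their three $x$-intervals $(p+1,p+3), (p+4,p+7), (p+8,p+10)$ are pairwise disjoint (with $p = 6n+1+10(j-1)$), forcing three distinct vertical lines $\ell_1, \ell_2, \ell_3 \in L$ to lie in them respectively. I will verify two geometric facts: (a) outside the sub-intervals $(p+1,p+2)$, $(p+5,p+6)$, $(p+9,p+10)$, the line $\ell_1$ must hit $\Seg(T_{j,1}^1)$ and cannot hit $\Seg(T_{j,2}^1)$, $\ell_3$ must hit $\Seg(T_{j,2}^1)$ and cannot hit $\Seg(T_{j,1}^1)$, and $\ell_2$ hits exactly one of the two; and (b) these three sub-intervals are ``gaps'' in which a vertical line hits no $\Seg(X)$ for any $X \in \mathcal{C}$, because $T_{j,1}^1$ and $T_{j,2}^1$ have $x$-spans $[p+2,p+5]$ and $[p+6,p+9]$ while every other member of $\mathcal{C}$ has $x$-span disjoint from $[p+1,p+10]$. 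Hence either all three lines avoid the gaps, forcing $N_L(\Seg(T_{j,1}^1)) + N_L(\Seg(T_{j,2}^1)) \ge 3$ (an excess of one over the base $2$), or some $\ell_s$ lies in a gap and contributes $1$ to the slack in $|L| \ge \sum_{X} N_L(\Seg(X))$. Since the $x$-span of $T_j$ is disjoint from every other $T_{j'}$ and every $V_i$, these per-clause contributions are disjoint across $B_1$ and also disjoint from the bad-variable excess localized in the $V_i$'s, yielding $|L| \ge k + b + |B_1|$. The main obstacle is this per-clause bookkeeping; once the gap identification and the restrictions on which of $T_{j,1}^1, T_{j,2}^1$ each $\ell_s$ can hit are pinned down, the rest is straightforward accounting.
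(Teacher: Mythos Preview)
Your proposal is correct and follows essentially the same route as the paper's proof: both apply \Cref{lem:span} to the collection in \Cref{obs:disjoint} to get a baseline of $k$, pick up one extra line per bad variable from $N_L(\Seg(V_i))\ge 4$, and then, for each bad clause of the first type, show that the three forced vertical separators of $T_{j,1}^2,T_{j,2}^2,T_{j,3}^2$ either collectively over-hit $\Seg(T_{j,1}^1)\cup\Seg(T_{j,2}^1)$ or include a line lying in one of the ``gap'' sub-intervals and hence uncounted by the span-lemma sum. Your explicit identification of the gap intervals $(p+1,p+2),(p+5,p+6),(p+9,p+10)$ is just a concrete rendering of the paper's set $C'$ and its $b_1$ uncounted vertical lines, and your final arithmetic $|B_1|+5b\le 5p$ is the same bound the paper derives via $5(n-a)+b\le 5((n-a)+b)\le 5p$.
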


\begin{proof}
Let $a$ be the number of good variables. Then 

\begin{equation}\label{eqn:var}
   \sum_{i=1}^{n} N_L(\Seg(V_i)) \geq 3a+4(n-a)=4n-a
\end{equation}

As each variable appears in exactly 5 clauses the bad variables can appear in at
most $5(n-a)$ clauses. Let $b$ be the number of bad clauses of first type. Then
the total number of bad clauses is at most $5(n-a)+b$. Now consider a bad clause
$C_l$ of first type. A vertical line that separate the points in $T_{l,t}^2$
might not separate the points of any of $T_{l,1}^1$ and $T_{l,2}^1$. Let $C'$ be
the subset of the bad clauses which has one such corresponding vertical line,
and denote the cardinality of $C'$ by $b_1$. Then each of the three vertical
lines, corresponding to any of the remaining $b-b_1$ bad clauses $C_s$ separate
the points in either $T_{s,1}^1$ or $T_{s,2}^1$. Now for any clause $C_j$
$x$-spans (resp. $y$-spans) of the sets $T_{j,1}^1$ and $T_{j,2}^1$ are
disjoint. Thus at least two lines are needed to separate the points in
$T_{j,1}^1$ and $T_{j,2}^1$. This implies 

\begin{equation}\label{eqn:clause}
   \sum_{j=1}^{m} N_L(\Seg(T_{j,1}^1))+ \sum_{j=1}^{m} N_L(\Seg(T_{j,2}^1))\geq 2m+(b-b_1)
\end{equation}

By \Cref{obs:disjoint} the sets $A_t^1$ for $1\leq t\leq m$, $A_l^2$ for $1\leq
l\leq m-1$, $A_s^3$ for $1\leq s\leq n-1$, $A_j^4$ for $1\leq j\leq m-1$, $V_i$
for $1\leq i\leq n-1$, $T_{j,1}^1$ for $1\leq j\leq m$, $T_{j,2}^1$ for $1\leq
j\leq m$ and $A^5$ have pairwise disjoint $x$ spans and pairwise disjoint
$y$-spans. Applying \Cref{lem:span}, \Cref{eqn:var} and \Cref{eqn:clause} we
count a lower bound of $N_L(\Seg(P))$ which is as follows.

\begin{align*} 
    &N_L(\Seg(P)) \\ 
    &\geq N_L \Big (\bigcup_{t=1}^{m} \Seg(A_t^1)\cup
    \bigcup_{l=1}^{m-1} \Seg(A_l^2)\cup \bigcup_{s=1}^{n-1} \Seg(A_s^3)\cup
    \bigcup_{j=1}^{m-1} \Seg(A_j^4)\cup \bigcup_{i=1}^{n} \Seg(V_i)\cup \\
     &\qquad \qquad \bigcup_{j=1}^{m} \Seg(T_{j,1}^1)\cup \bigcup_{j=1}^{m} \Seg(T_{j,2}^1)\cup
    \Seg(A^5) \Big ) \\ 
    &= \sum_{t=1}^{m} N_L \Big (\Seg(A_t^1) \Big)+ \sum_{l=1}^{m-1}
    N_L \Big (\Seg(A_l^2) \Big)+ \sum_{s=1}^{n-1} N_L \Big (\Seg(A_s^3) \Big)+
    \sum_{j=1}^{m-1} N_L \Big (\Seg(A_j^4) \Big) + \\ 
    &\qquad \qquad \sum_{i=1}^{n} N_L\Big(\Seg(V_i) \Big)+ \sum_{j=1}^{m}
    N_L \Big (\Seg(T_{j,1}^1) \Big )+ \sum_{j=1}^{m} N_L \Big (\Seg(T_{j,2}^1)
    \Big)+ N_L\Big (\Seg(A^5) \Big) \\
    &\geq \sum_{t=1}^{m} N(\Seg(A_t^1))+ \sum_{l=1}^{m-1} N(\Seg(A_l^2))+
    \sum_{s=1}^{n-1} N(\Seg(A_s^3))+ \sum_{j=1}^{m-1} N(\Seg(A_j^4))+ \\
    &\qquad \qquad (4n-a)  + (2m+b-b_1)+ N(\Seg(A^5)) \\ 
    &= \sum_{t=1}^{m} 2+ \sum_{l=1}^{m-1} 2+
    \sum_{s=1}^{n-1} 2+ \sum_{j=1}^{m-1} 2+(4n-a)+ (2m+b-b_1)+2 \\ 
    &=2m+2m-2+2n-2+2m-2+4n-a+2m+b-b_1+2\\ 
    &= (8m+5n-4)+(n+b-a-b_1)= k+(n+b-a-b_1).
\end{align*}

Now consider any clause $C_j$ in $C'$. There is at least one vertical line $l_j$
in $L$ that separate the points in one of the groups of $T_j^2$, but do not
separate the points in any of $T_{j,1}^1$ and $T_{j,2}^1$. Let $l_j$ separates
the points in $T_{j,t}^2$ for some $1\leq t\leq 3$. The $x$-coordinates of the
points in $T_{j,t}^2$ are $6n+1+10(j-1)+3(t-1)+1$ and $6n+1+10(j-1)+3(t-1)+3$.
Then the equation of $l_j$ is $x=c$ for some $6n+1+10(j-1)+3(t-1)+1 < c <
6n+1+10(j-1)+3(t-1)+3$. We claim that the vertical line $l_j$ is not considered
in the previous counting of lower bound of $N_L(\Seg(P))$. If not, then one of
the groups considered in the previous counting must contain two points
$(p_x^1,p_y^1)$ and $(p_x^2,p_y^2)$ such that $p_x^1 < c$ and $p_x^2 > c$. The
maximum among the $x$-coordinates of the points in $A\setminus A_4$ is $6n+1.5 <
c$, which is actually a point of $A^5$. Thus none of the subsets of $A\setminus
A_4$ contain such points. Also the $x$-spans of $A_t^4$ and $T_j$ are disjoint
for any $1\leq t\leq m-1$. Thus no such $A_t^4$ contains those points.
Similarly, the maximum among the $x$-coordinates of the points in $V$ is $6n <
c$, and thus no $V_i$ contains such points. Now consider any $T_l^1$ for $l <
j$. Then the maximum among the $x$-coordinates of the points in $T_l^1$ is
$6n+1+10(l-1)+9 = 6n+1+10l-1 \leq 6n+1+10(j-1)-1\leq 6n+1+10(j-1)+3(t-1)+1 < c$.
Thus none of $T_{l,1}^1$ and $T_{l,2}^1$ contain such points. Now consider any
$T_l^1$ for $l > j$. Then the minimum among the $x$-coordinates of the points in
$T_l^1$ is $6n+1+10(l-1)+2 \geq 6n+1+10(j+1-1)+2 \geq 6n+1+10(j-1)+10+2 \geq
6n+1+10(j-1)+3(3-1)+6 > 6n+1+10(j-1)+3(t-1)+2 > c$. Thus in this case also none
of $T_{l,1}^1$ and $T_{l,2}^1$ contain the desired points. But this leads to a
contradiction and hence $l_j$ is not considered in the counting before.

Now $|C'|=b_1$ and thus in total there are $b_1$ additional vertical lines in
$L$. So $N_L(\Seg(P)) \geq k+(n+b-a-b_1)+b_1=k+n+b-a$. As the total number of
lines in $L$ is at most $k+p$, 

\begin{flalign*}
&k+n+b-a \leq k+p  \\\text{or, }&n-a+b \leq p
\end{flalign*}

Thus the total number of bad clauses is $5(n-a)+b \leq 5(n-a+b)\leq 5p$.
\end{proof}

\begin{lemma}\label{lem:frac_sat}
Suppose at most $k+\frac{\delta}{5} m$ ($0< \delta \leq 1$) lines are sufficient
to hit the segments in $\Seg(P)$, then there is an assignment which satisfy at
least $1-\delta$ fraction of the clauses.
\end{lemma}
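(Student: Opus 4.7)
The plan is to combine Lemma~\ref{lem:badclause} with a careful reading of Observation~\ref{obs:var} and the $y$-coordinates used in the clause gadget. Set $p = \frac{\delta}{5}m$, so the hypothesis says $|L|\le k+p$. By Lemma~\ref{lem:badclause}, the number of bad clauses is at most $5p = \delta m$, hence at least $(1-\delta)m$ clauses are good. The remaining task is to exhibit an assignment that satisfies every good clause.

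First I would define the assignment. For each good variable $x_i$, the three lines in $L$ that hit $\Seg(V_i)$ form one of the two configurations described in Observation~\ref{obs:var}; set $x_i = 1$ if the Type~1 line is used, and $x_i = 0$ if the Type~0 line is used. For each bad variable, assign an arbitrary value (say $0$).

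Next I would show that every good clause $C_j$ is satisfied. Since $C_j$ is not bad, it has no literal corresponding to a bad variable, and there exists $t\in\{1,2,3\}$ such that at least one horizontal line $\ell\in L$ hits the single segment of $\Seg(T_{j,t}^2)$. Let $u_t$ be the corresponding literal and $x_i$ its variable (which must be good). The key geometric observation is that the $y$-coordinates of the two points of $T_{j,t}^2$ pin down the type of $\ell$ as a line for $V_i$: if $u_t = x_i$ then those $y$-coordinates are $6(i-1)+5-\epsilon j$ and $6(i-1)+4+\epsilon j$, so $\ell$ lies strictly between $6(i-1)+4$ and $6(i-1)+5$, i.e.\ it is a Type~1 line for $V_i$; if $u_t = \bar{x_i}$, the analogous range forces $\ell$ to be a Type~0 line for $V_i$. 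In either case $\ell$ hits a segment of $\Seg(V_i)$, so $\ell$ is one of the three lines used on the good variable $x_i$. By Observation~\ref{obs:var}, this fixes the configuration on $V_i$ as exactly the one our assignment uses, so $x_i = 1$ when $u_t = x_i$ and $x_i = 0$ when $u_t = \bar{x_i}$; in both cases the literal $u_t$ is true, so $C_j$ is satisfied.

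Combining, every good clause is satisfied, so the assignment satisfies at least $(1-\delta)m$ clauses, proving the lemma. The main obstacle is the geometric pinning step: one must check carefully that the horizontal line $\ell$ separating the two points of $T_{j,t}^2$ is forced into the precise Type~0 or Type~1 slot for $V_i$, using the specific $\epsilon j$ offsets, and then invoke goodness of $x_i$ together with Observation~\ref{obs:var} to conclude that $\ell$ belongs to the chosen configuration. Everything else is bookkeeping with Lemma~\ref{lem:badclause}.
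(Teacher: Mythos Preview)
Your proposal is correct and follows essentially the same approach as the paper's proof: define the assignment from the optimal configurations of the good variables, then argue that for each good clause some horizontal line separating a group $T_{j,t}^2$ is forced by its $y$-range to be a Type~0 or Type~1 line for the corresponding (good) $V_i$, matching the assignment. Your write-up is in fact slightly more careful than the paper's in that you explicitly argue $\ell$ hits a segment of $\Seg(V_i)$ and hence is one of the three lines counted for the good variable $x_i$ before invoking Observation~\ref{obs:var}; the paper leaves this step implicit.
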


\begin{proof}
Consider any hitting set of lines $L$ for $\Seg(P)$ which uses at most
$k+\frac{\delta}{5} m$ lines. We construct an assignment of the clauses having
the desired property. Let $V^1$ be the set of good variables corresponding to
$L$. Then by \Cref{obs:var} for each $x_i \in V^1$ the optimal configuration
with either a Type 1 line or a Type 0 line has been chosen. For each $x_i \in
V^1$ if the configuration with Type 1 line is chosen, set $x_i$ to 1. Otherwise,
set $x_i$ to 0. For each $x_i \in V\setminus V^1$ assign an arbitrary binary
value to $x_i$. We show that this assignment makes at least $(1-\delta)m$
clauses to be satisfied.

Consider any good clause $C_j$. All of its literals are corresponding to good
variables and the points of at most two groups of $T_j^2$ have been separated
using only vertical lines. Let $T_{j,t}^2$ be the group whose points are
separated by a horizontal line. Let $u$ and $x_i$ be the literal and variable
corresponding to this group. If $u$ is $x_i$, the $y$-coordinates of the points
in $T_{j,t}^2$ are $6(i-1)+5-\epsilon j$ and $6(i-1)+4+\epsilon j$. Thus the
horizontal line that separates these two points is of Type 1, which implies 1 is
assigned to $x_i$ and $C_j$ is true. If $u$ is $\bar{x_i}$, the $y$-coordinates
of the points in $T_{j,t}^2$ are $6(i-1)+4-\epsilon j$ and $6(i-1)+3+\epsilon
j$. Thus the horizontal line that separates the points in $T_{j,t}^2$ is of Type
0. This implies 0 is assigned to $x_i$ and in this case also $C_j$ is true. Thus
any good clause is satisfied by the assignment. From \Cref{lem:badclause} it
implies that the number of good clauses is at least $m-5\frac{\delta}{5}
m=(1-\delta)m$, which completes the proof of this lemma.
\end{proof}

Using the two equations $k=5n+8m-4$ and $3m=5n$ we get the following inequality.
\begin{equation} m \geq \frac{1}{11}k \label[ineq]{ineq:inap} \end{equation}

Combining the \Cref{ineq:inap} with the contrapositive of \Cref{lem:frac_sat} it
follows that, if less than $1-\delta$ fraction of the clauses of $\phi$ are
satisfied, then the number of lines needed is more than $k+\apconst\delta
k=(1+\apconst \delta)k$. Thus our reduction ensures both of the properties we
claimed.

\subsubsection{Proof of \Cref{th:cuhs}}
\label{App:cuhs}

\thcuhs*
\begin{proof}
Assume that there is such an algorithm $\mathbb{A}$. Then we show that, given an
instance $(P,k)$ of UHS, it is possible to decide in polynomial time whether it
can be hit by at most $k$ lines, or any hitting set corresponding to it needs
more than $(1+\con)k$ lines. Thus from \Cref{th:inapproxHS} it follows that
\PNP.

Given the instance $(P,k)$ of UHS, we call $\mathbb{A}$ on $P,r,c$ for each pair
of positive integers $r$ and $c$ such that $r+c=k$. If $\mathbb{A}$ returns yes
for at least one pair, then we return ``yes'' for the instance of UHS.
Otherwise, we return ``no'' for the instance of UHS. 

If the instance $(P,k)$ admits a hitting set of size at most $k$, fix such a
hitting set $L$. Let $L$ contains $r'$ horizontal and $c'$ vertical lines. Then
for $r=r'$ and $c=k-c'\geq r'$, $\mathbb{A}$ returns ``yes'', and so we return
``yes'' for the instance of UHS. If the instance $(P,k)$ does not admit a
hitting set of size at most $(1+\con)k$, then there is no hitting set for
$\Seg(P)$ with at most $(1+\con)r$ horizontal and $(1+\con)c$ vertical lines
such that $r+c=k$. Thus $\mathbb{A}$ returns ``no'' on all $(r,c)$ pairs on
which it is invoked, and so we return ``no'' for the instance of UHS.
\end{proof}

\subsubsection{Proof of \Cref{th:inap_CLADS}}\label{App:CLADS}

\thinapCLADS*
\begin{proof}
Assume that such an algorithm $\mathbb{B}$ does exist. Then we show that, given
an instance $(P,r,c)$ of CUHS, it is possible to decide in polynomial time
whether there is a set with at most $c$ vertical and $r$ horizontal lines that
hits the segments in $\Seg(P)$, or there is no hitting set for $\Seg(P)$ using
at most $(1+\con)c$ vertical and $(1+\con)r$ horizontal lines. By \Cref{th:cuhs}
this is true only if \PNP and hence the result follows.

Given the instance $(P,r,c)$ of CUHS, if $r=0$ or $c=0$, CUHS can be solved in
polynomial time. Otherwise, we construct a set of squares $S'$ by taking a unit
square centered at $p$ for each point $p\in P$. We invoke $\mathbb{B}$ on $S=S',
h=r+1, w=c+1$ and $\varepsilon=\con'$. If $\mathbb{B}$ returns ``yes'', we
return ``yes''. Otherwise, we return ``no''.

Suppose the instance $(P,r,c)$ does admit a hitting set with at most $c$
vertical and $r$ horizontal lines. Then by \Cref{cl:forward} there is an output
layout $\layp$ with $H(\layp) \leq r + 1 + \con'$ and width $W(\layp) \leq c + 1
+ \con'$. Thus $\mathbb{B}$ returns ``yes'', and so we return ``yes'' for the
instance of CUHS.

Suppose the instance $(P,r,c)$ does not admit a hitting set with at most
$(1+\con)c$ vertical and $(1+\con)r$ horizontal lines. Then by
\Cref{cl:reverse}, for any output layout $\layp$ of the squares in $S$, either
$H(\layp) > (1+\con)r + 1$ or $W(\layp) > (1+\con)c + 1$. Now as $c\geq 1$,
$(1+\con)c + 1 = c+1+4\con'c = c+\con'c+1+2\con'c+\con'c >
c+\con'c+1+2\con'+{\con'}^2=(1+\con')(c+1+\con')$. Similarly, as $r\geq 1$,
$(1+\con)r + 1 > (1+\con')(r+1+\con')$. Hence either $H(\layp) >
(1+\con')(r+1+\con')$ or $W(\layp) > (1+\con')(c+1+\con')$. Thus at least one of
the two conditions $W(\layp) \leq (1+\con')(w + \varepsilon)$ and $H(\layp) \leq
(1+\con')(h+\varepsilon)$ is false, and $\mathbb{B}$ returns ``no'', and so we
return ``no''.
\end{proof}

\end{document}